\newtheorem{theorem}{Theorem}
\newtheorem{lemma}{Lemma}
\newtheorem{remark}{Remark}
\newtheorem{corollary}{Corollary}
\newcommand{\hei}[2][]{%
  \ifthenelse { \equal {#1} {} }%
  {#2.h}%
  {#2.h^{#1}}}
\newcommand{\state}{\ensuremath{\texttt{state}}\xspace}
\newcommand{\stateIA}[2]{\ensuremath{\texttt{st}^{#2}_{#1}}\xspace}
\newcommand{\stateI}[1]{\ensuremath{\texttt{st}_{#1}}\xspace}
\newcommand{\algo}{\ensuremath{\texttt{algo}}\xspace}
\newcommand{\InitialAlg}{\ensuremath{{AlgI}}\xspace}
\newcommand{\durationIA}{\ensuremath{{B}}\xspace}
\newcommand{\lazyI}{\ensuremath{{f}}\xspace}
\newcommand{\Time}{\ensuremath{{T}}\xspace}
\begin{document}

\begin{center}
  {\Large Making local algorithms efficiently self-stabilizing in
    arbitrary asynchronous environments}\bigskip\bigskip

  Stéphane Devismes\smallskip

  \emph{\footnotesize Laboratoire MIS, Université de Picardie,\\
    33 rue Saint Leu - 80039 Amiens cedex 1, France}\bigskip

  \normalsize David Ilcinkas, Colette Johnen, Frédéric Mazoit\smallskip

  \emph{\footnotesize Univ.  Bordeaux, CNRS, Bordeaux INP, LaBRI, UMR 5800,
    F-33400 Talence, France}

  \let\thefootnote\relax\footnotetext{Email Adresses:
    \url{stephane.devismes@u-picardie.fr} (Stéphane Devismes),
    \url{david.ilcinkas@labri.fr} (David Ilcinkas),
    \url{johnen@labri.fr} (Colette Johnen),
    \url{frederic.mazoit@labri.fr} (Frédéric Mazoit)}
\end{center}

\subsection*{Abstract}
This paper deals with the trade-off between time, workload, and
versatility in self-stabilization, a general and lightweight
fault-tolerant concept in distributed computing.

In this context, we propose a transformer that provides an
asynchronous silent self-stabilizing version $Trans(\InitialAlg)$ of
any terminating synchronous algorithm $\InitialAlg$.  The transformed
algorithm $Trans(\InitialAlg)$ works under the distributed unfair
daemon and is efficient both in moves and rounds.

Our transformer allows to easily obtain fully-polynomial silent
self-sta\-bi\-li\-zing solutions that are also asymptotically optimal
in rounds.

We illustrate the efficiency and versatility of our transformer with
several efficient (i.e., fully-polynomial) silent self-stabilizing
instances solving major distributed computing problems, namely vertex
coloring, Breadth-First Search (BFS) spanning tree construction,
$k$-clustering, and leader election.

\section{Introduction}

Fault tolerance is a main concern in distributed computing, but is
often hard to achieve; see, e.g.,~\cite{FLP85}. Furthermore, when it
can be achieved, it often comes at the the price of sacrificing
efficiency (in time, space, or workload) or versatility; see,
e.g.,~\cite{DR82c,CYZG14j}.  In this paper, we tackle the trade-off
between time, workload, and versatility in
self-stabilization~\cite{Di74}, a general and lightweight
fault-tolerant concept in distributed
computing~\cite{AlDeDuPe19}. Precisely, we consider a specialization
of self-stabilization called \emph{silent
  self-stabilization}~\cite{DoGoSc96}.

Starting from an arbitrary configuration, a self-stabilizing algorithm
enables the system to recover within finite time a so-called
legitimate configuration from which it satisfies an intended
specification.  Regardless its initial configuration, a silent
self-stabilizing algorithm~\cite{DoGoSc96} reaches within finite time
a configuration from which the values of the communication registers
used by the algorithm remain fixed.  Notice that silent
self-stabilization is particularly suited for solving \emph{static
  problems}\footnote{As opposed to \emph{dynamic problems} such as
  token circulation, a static problem defines a task of calculating a
  function that depends on the system in which it is
  evaluated~\cite{Ti06}.} such as leader election, coloring, or
spanning tree constructions. Moreover, as noted in~\cite{DoGoSc96},
silence is a desirable property. For example, it usually implies more
simplicity in the algorithmic design since silent algorithms can be
easily composed with other algorithms to solve more complex
tasks~\cite{AlDeDuPe19}.

Since the arbitrary initial configuration of a self-stabilizing system
can be seen as the result of a finite number of transient
faults,\footnote{A transient fault occurs at an unpredictable time,
  but does not result in a permanent hardware damage. Moreover, as
  opposed to intermittent faults, the frequency of transient faults is
  considered to be low.} self-stabilization is commonly considered as
a general approach for tolerating such faults in a distributed
system~\cite{Do00,AlDeDuPe19}. Indeed, self-stabilization makes no
hypotheses on the nature (e.g., memory corruption or topological
changes) or extent of transient faults that could hit the system, and
a self-stabilizing system recovers from the effects of those faults in
a unified manner.

However, such versatility comes at a price, e.g., after transient
faults cease, there is a finite period of time, called the
\emph{stabilization phase}, during which the safety properties of the
system are violated. Hence, self-stabilizing algorithms are mainly
compared according to their \emph{stabilization time}, the worst-case
duration of the stabilization phase.

In the distributed computing community, the correctness and efficiency
of algorithms is usually established by paper-and-pencil proofs. Such
proofs require the formal definition of a computational model for
which the distributed algorithm is dedicated. The atomic-state
model~\cite{Di74} is the most commonly used model in the
self-stabilizing area.  This model is actually a shared memory model
with composite atomicity: the state of each node is stored into
registers and these registers can be directly read by neighboring
nodes; moreover, in one atomic step, a node can read its state and
that of its neighbors, perform some local computations, and update its
state.  Hence, executions in this model proceed in atomic steps in
which some enabled nodes move, i.e., modify their local state.  The
asynchrony of the system is materialized by the notion of
\emph{daemon} that restricts the set of possible executions.  The
weakest (i.e., the most general) daemon is the \emph{distributed
  unfair daemon}. Hence, solutions stabilizing under such an
assumption are highly desirable, because they work under any daemon
assumption.

The stabilization time of self-stabilizing algorithms is usually
evaluated in terms of rounds, which capture the execution time
according to the speed of the slowest nodes. However, another crucial
issue is the number of local state updates, i.e., the number of
\emph{moves}. By definition, the stabilization time in moves captures
the total amount of computations an algorithm needs in order to
recover a correct behavior. Hence, the move complexity is rather a
measure of work than a measure of time. Now, minimizing the number of
state modifications allows the algorithm to use less communication
operations and communication bandwidth. As explained
in~\cite{DoGoSc96,AlDeDuPe19}, to implement an atomic-state model
solution in message passing, all nodes should permanently check
whether or not they should change their local state due to the
modification of some neighbors' local states. Now, instead of
regularly sending maybe heavy messages containing the full node local
state, one can adopt the lightweight approach proposed
in~\cite{DoGoSc96}: nodes regularly send a proof that the value of
their state has not changed; then, a node requests the full local
state of a neighbor only after a received proof shows a state
modification. Such proof can be very small compared to the real local
state: the node can just randomly generate a nonce\footnote{Nonce
  stands for ``number once''.}, salt the hash of its local state with
this nonce, and finally send a ``small'' message containing the hash
value together with the nonce. In this way, the number of ``heavy''
messages (i.e., those containing the current local state of some node)
depends on the number of moves, and the minimization of the number of
moves permits to drastically reduced the communication bandwidth
usage. This is especially true when the self-stabilizing solution is
silent~\cite{DoGoSc96}, as in this case there is no more moves after
the legitimacy is reached, and from that point, only small proofs are
regularly sent.

Until now, in the atomic-state model, techniques to design an
algorithm achieving a stabilization time polynomial in moves usually
made its rounds complexity inherently linear in $n$, the number of
nodes; see, e.g.,~\cite{CoDeVi09,AlCoDeDuPe17,DeIlJo22}. Now, the
classical nontrivial lower bounds for many problems is $\Omega(D)$
rounds~\cite{GeTi02}, where $D$ is the network diameter.  Moreover, in
many large-scale networks, the diameter is rather logarithmic on $n$,
so finding solutions achieving round complexities linear in $D$ is
more desirable. Yet, only a few asynchronous solutions (still in the
atomic-state model) achieving such an upper bound exist, e.g., the
atomic-state version of the Dolev's BFS algorithm~\cite{do93} given
in~\cite{DeJo16}. Moreover, it has been shown that several of them
actually have a worst-case execution that is exponential in moves;
see~\cite{DeJo16}. So, it seems that efficiency in rounds and moves
are often incompatible goals. In a best-effort spirit, Cournier et
al.~\cite{CoRoVi19} have proposed to study what they call
\emph{fully-polynomial} stabilizing solutions, i.e., stabilizing
algorithms whose round complexity is polynomial on the network
diameter and move complexity is polynomial on the network
size.\footnote{Actually, in~\cite{CoRoVi19}, authors consider atomic
  steps instead of moves. However, these two time units essentially
  measure the same thing: the workload. By the way, the number of
  moves and the number of atomic steps are closely related: if an
  execution $e$ contains $x$ steps, then the number $y$ of moves in
  $e$ satisfies $x \leq y \leq n\cdot x$.} As an illustrative example,
they have proposed a silent self-stabilizing BFS spanning tree
algorithm that stabilizes in $O(n^6)$ moves and $O(D^2)$ rounds in a
rooted connected network. Until now, it was the only asynchronous
self-stabilizing algorithm of the literature achieving this property.

\subsection{Contribution}
In this paper, we address the issue of generalizing the
fully-polynomial approach in the atomic-state model to obtain silent
self-stabilizing solutions that are efficient both in rounds and
moves.

To that goal, we propose to exploit the links, highlighted
in~\cite{LeSuWa09}, between the \emph{Local model}~\cite{Li92} and
self-stabilization in order to design an efficient \emph{transformer},
i.e., a meta-algorithm that transforms an input algorithm that does
not achieve a desired property (here, self-stabilization) into an
algorithm achieving that property.

Our transformer provides an asynchronous silent self-stabilizing
version $Trans(\InitialAlg)$ of any terminating synchronous algorithm
$\InitialAlg$ which works under the distributed unfair daemon and is
efficient both in moves and rounds. Precisely, our transformer has
several inputs: the algorithm $\InitialAlg$ to transform, a flag $f$
indicating the used transformation mode, and optionally a bound
$\durationIA$ on the execution time of $\InitialAlg$.

We have two modes for the transformation depending on whether the
transformation is \emph{lazy} or \emph{greedy}. In both modes, the
number of moves of $Trans(\InitialAlg)$ to reach a terminal
configuration is polynomial, in $n$ and the synchronous execution time
$\Time$ of $\InitialAlg$ in the lazy mode, in $n$ and $\durationIA$
otherwise.  An overview of $Trans(\InitialAlg)$ properties (memory
requirement, convergence in rounds, convergence in moves) according to
the both parameters is presented in Table~\ref{table:overview}.

In the lazy mode, the output algorithm $Trans(\InitialAlg)$ stabilizes
in $O(D+\Time)$ rounds where $\Time$ is the actual execution time of
$\InitialAlg$. Moreover, if the upper bound $\durationIA$ is given,
then the memory requirement of $Trans(\InitialAlg)$ is bounded
(precisely, we obtain a memory requirement in $O(\durationIA\times M)$
bits per node, where $M$ is the memory requirement of $\InitialAlg$) .

In the greedy mode, $Trans(\InitialAlg)$ stabilizes in
$O(\durationIA)$ rounds and its memory requirement is also bounded
(still $O(\durationIA\times M)$ bits per node).

Our transformer allows to drastically simplify the design of
self-stabilizing solutions since it reduces the initial problem to the
implementation of an algorithm just working in synchronous settings
with a pre-defined initial configuration. Moreover, this simplicity
does not come at the price of sacrificing efficiency since it allows
to easily implement fully-polynomial solutions.

Finally, our method is versatile, because compatible with most of
distributed computing models. Indeed, except when the computation of a
state of the input algorithm requires it, our transformer does not use
node identifiers nor local port numbers. More precisely, each move is
made based on the state of the node and the set of the neighbors'
states (if several neighbors have the same state $s$, the number of
occurrences of $s$ is not used to manage the simulation). Therefore,
our transformer can be used in strong models with node identifiers
like the \emph{Local} model~\cite{Li92}, down to models almost as weak
as the \emph{stone age} model~\cite{EW13c}.

Our solution is very efficient in terms of time and workload, but at
the price of multiplying the memory cost of the original algorithm by
its execution time. This time (and thus the multiplicative factor) is
however usually small in powerful models such as the Local model. As
an illustrative example, this multiplicative memory overhead can be as
low as $O(\log^* n)$ for very fast algorithms such as the Cole and
Vishkin's coloring algorithm~\cite{CoVi86}. Furthermore, using nonces
and hashes similarly as explained before, one can reduce the
communication cost to almost the one of the original
algorithm. Indeed, our transformer always modifies its state in a way
which can be described with limited information, linear in the time
and memory complexity of the simulated algorithm.

We illustrate the efficiency and versatility of our proposal with
several efficient (i.e., fully-polynomial) silent self-stabilizing
solutions for major distributed computing problems, namely vertex
coloring, Breadth-First Search (BFS) spanning tree construction,
$k$-clustering, and leader election. In particular, we positively
answer to some open questions raised in the conclusion
of~\cite{CoRoVi19}: (1) there exists a fully-polynomial (silent)
self-stabilizing solution for the BFS spanning tree construction whose
stabilization time in rounds is asymptotically linear in $D$ (and so
asymptotically optimal in rounds), and (2) there exists a
fully-polynomial (silent) self-stabilizing solution for the leader
election (also with a stabilization time in rounds that is linear in
$D$ and so asymptotically optimal in terms of rounds). Finally, we can
also design for the first time (to the best of our knowledge)
asynchronous fully-polynomial self-stabilizing algorithms with a
stabilization time in rounds that can be sublinear in $D$, as shown
with our vertex coloring instance that can stabilize in $O(\log^* n)$
rounds in unidirectional rings using the right parameters.

\renewcommand{\arraystretch}{1.5}
\begin{table}[htbp]
  \centering

  \begin{tabular}{c|c|c|}
    \multicolumn{1}{c}{}&\multicolumn{1}{c}{Move complexity} & \multicolumn{1}{c}{Round complexity}\\
    \cline{2-3}
    Lazy mode& $O(\min(n^3+ nT, n^2\durationIA))$ & $O(D+T)$\\
    \cline{2-3}
    Greedy mode& $O(\min(n^3+n\durationIA, n^2\durationIA))$ & $O(B)$\\
    \cline{2-3}
    \noalign{\bigskip Common features\medskip}
    \cline{2-3}
    Error recovery& $O(\min(n^3, n^2\durationIA))$ & $O(\min(D, B))$\\
    \cline{2-3}
    \multicolumn{1}{c}{Space complexity} &\multicolumn{2}{c}{$\durationIA\cdot M$}\\
  \end{tabular}

  \small
  \begin{itemize}
  \item $T$ and $M$ are respectively the time and space complexities
    of $\InitialAlg$.  \vspace*{-3pt}
  \item $B$ is a parameter $\in \mathbb{N}\cup\{+\infty\}$.
    \vspace*{-3pt}
  \item The algorithm is always silent when $\durationIA<+\infty$.
    Here, we assume that $T\leq \durationIA$.
  \end{itemize}
  \vspace*{-12pt}
  \caption{Overview of the properties of $Trans(\InitialAlg)$.}
  \label{table:overview}
\end{table}

\subsection{Related Work}
Proposing \emph{transformers} (also called \emph{compilers}) is a very
popular generic approach in self-stabilization.  Transformers are
useful to establish expressiveness of a given property: by giving a
general construction, they allow to exhibit a class of problems that
can achieve a given property. An impossibility proof should be then
proposed to show that the property is not achievable out of the class,
giving thus a full characterization.  For example, Katz and
Perry~\cite{KaPe93} have addressed the expressiveness of
self-stabilization in message-passing systems where links are reliable
and have unbounded capacity, and nodes are both identified and
equipped of infinite local memories.  Several transformers,
e.g.,~\cite{BoVi02,KaPe93}, builds time-efficient self-stabilizing
solutions yet working synchronous systems only.  Bold and
Vigna~\cite{BoVi02} proposes a universal transformer for synchronous
networks. As in~\cite{KaPe93}, the transformer allows to
self-stabilize any behavior for which there exists a self-stabilizing
solution. The produced output algorithm stabilizes in at most $n + D$
rounds. However, the transformer is costly in terms of local memories
(basically, each node collects and stores information about the whole
network). In the same vein, Afek and Dolev~\cite{AfDo02} propose to
collect pyramids of views of the system to detect incoherences and
correct the behavior of a synchronous system.  In~\cite{BlFrPa14},
Blin et al. propose two transformers in the atomic-state model to
construct silent self-stabilizing algorithms. The first one aims at
optimizing space complexity: if a task has a proof-labeling scheme
that uses $\ell$ bits, then the output algorithm computes the task in
a silent and self-stabilizing manner using $O(\ell + \log n)$ bits per
node.  However, for some instances, it requires an exponential number
of rounds.  The second one guarantees a stabilization time in $O(n)$
rounds using $O(n^2+k\cdot n)$ bits per node for every task that uses
a $k$-bit output at each node.  Transformers have been also used to
compare expressiveness of computational models. Equivalence (in terms
of computational power) between the atomic-state model and the
register one and between the register model and message passing are
discussed in~\cite{Do00}. In~\cite{Tu12}, Turau proposes a general
transformation procedure that allows to emulate any algorithm for the
distance-two atomic-state model in the (classical) distance-one
atomic-state model assuming that nodes have unique identifiers.

It is important to note that the versatility is often obtained at the
price of inefficiency: the aforementioned transformers use heavy (in
terms of memory and/or time) mechanisms such as global snapshots and
resets in order to be very generic. For example, the transformer
proposed by Turau~\cite{Tu12} increases the move complexity of the
input algorithm by a multiplicative factor of $O(m)$ where $m$ is the
number of links in the network. The transformer of Katz and
Perry~\cite{KaPe93} requires infinite local memories and endlessly
computes (costly) snapshots of the network even after the
stabilization. Lighter transformers have been proposed but at the
price of reducing the class of problems they can handle. For example,
locally checkable problems are considered in the message-passing
model~\cite{AfKuYu97}. The proposed transformer constructs solutions
that stabilize in $O(n^2)$ rounds. The more restrictive class of
locally checkable and locally correctable problems is studied
in~\cite{AwPaVa91}, still in message-passing.  Cohen et
al.~\cite{CPRS23} propose to transform synchronous distributed
algorithms that solve locally greedy and mendable problems into
asynchronous self-stabilizing algorithms in anonymous
networks. However, the transformed algorithm requires a strong
fairness assumption called the Gouda fairness in their paper. This
property is also known as the strong global fairness in the
literature. Under such an assumption, move complexity cannot be
bounded in general.  Finally, assuming the knowledge of the network
diameter, Awerbuch and Varghese~\cite{AwVa91} propose, in the
message-passing model, to transform synchronous terminating algorithms
into self-stabilizing asynchronous algorithms. They propose two
methods: the \emph{rollback} and the \emph{resynchronizer}. The
resynchronizer additionally requires the input algorithm to be locally
checkable. Using the rollback (resp., resynchronizer) method, the
output algorithm stabilizes in $O(T)$ rounds (resp., $O(T+U)$ rounds)
using $O(T \times S)$ space (resp., $O(S)$ space) per node where $U$
is an upper bound on the network diameter and $T$ (resp., $S$) is the
execution time (resp., the space complexity) of the input algorithm.
Notice however that the straightforward atomic-state model version of
the rollback compiler (the work closest to our contribution) achieves
exponential move complexities, as shown in Section~\ref{sect:roll}.

\subsection{Roadmap}
The rest of the paper is organized as follows. In the next section, we
define the model of the input algorithm, the property the input
algorithm should fulfill, and the output model. In
Section~\ref{sect:trans}, we present our transformer. In
Sections~\ref{sect:proof:deb}-\ref{sect:round}, we establish the
correctness and the complexity of our method. In
Section~\ref{instances}, we illustrate the versatility and the power
of our approach by proposing several efficient instances solving
various benchmark problems. In Section~\ref{sect:roll}, we establish
an exponential lower bound in moves for the straightforward
atomic-state model version of the rollback compiler of Awerbuch and
Varghese~\cite{AwVa91} (the closest related work). We make concluding
remarks in Section~\ref{ccl}.

\section{Preliminaries}

\subsection{Networks}

We consider \emph{distributed systems} made of $n \geq 1$
interconnected nodes. Each node can directly communicate through
channels with a subset of other nodes, called its neighbors.  We
assume that the network is connected\footnote{If the network is not
  connected, then the algorithm runs independently in each connected
  component and its analysis holds nevertheless.} and that
communication is bidirectional.

More formally, we model the topology by a symmetric strongly-connected
simple directed graph $G = (V, E)$, where $V$ is the set of nodes and
$E$ is the set of arcs, representing the communication channels.  Each
arc $e$ goes \emph{from a node $p$ to a node $q$}, and we respectively
call $p$ and $q$ the \emph{source} and \emph{destination} of $e$.  The
graph is \emph{symmetric}, meaning that for every arc from $p$ to $q$,
there exists an arc from $q$ to $p$.  We denote by $N(p)$ the set of
nodes such that there exists an arc from $p$ to $q$.  The elements of
$N(p)$ are the \emph{neighbors} of $p$.  We denote by $C(p)$ the set
of incoming arcs of node $p$. For any $c\in C(p)$, we denote by $q_c$
the neighbor of $p$ at the opposite side of the arc.

A \emph{path} is a finite sequence $P=p_0p_1\cdots p_l$ of nodes such
that consecutive nodes in $P$ are neighbors. We say that $P$ is
\emph{from} $p_0$ \emph{to} $p_l$.  The \emph{length} of the path $P$
is the number $l$.  Since we assume that $G$ is \emph{connected}, then
for every pair of nodes $p$ and $q$, there exists a path from $p$ to
$q$.  We can thus define the \emph{distance} between two nodes $p$ and
$q$ to be the minimum length of a path from $p$ to $q$.  The
\emph{diameter} $D$ of $G$ is the maximum distance between nodes of
$G$.  Given a non-negative integer $k$, the \emph{ball of radius $k$
  around a node $p$} is the set $N^k[p]$ of nodes at distance at most
$k$ from $p$.  The \emph{closed neighborhood} of a node $p$ is the set
$N[p]= N^1[p]$.  Note that for every node $p$, $N(p)=N[p]-\{p\}$ and
$N^D[p]=V$.

\subsection{Input Computational Model: Eventually Stable Distributed
  Synchronous Algorithms}

Since we want to transform algorithms operating on various settings,
we first define a general model that suits all these settings.

In this paper, we define a \emph{distributed synchronous algorithm} by
the following four elements:
\begin{itemize}
\item a datatype \state to label nodes;
\item a datatype \texttt{label} to label communication channels (i.e.,
  arcs)); this datatype can be reduced to a singleton $\{\bot\}$;

\item a computational function \algo that returns a state of type
  \state given a state (in \state) and a set of pairs (label, state)
  of type \texttt{label} $\times$ \state;

\item a predicate \texttt{isValid} which takes as input a labeled
  graph and returns $\texttt{true}$ if and only if the labeled graph
  constitutes a valid initial configuration (this predicate a priori
  depends on the model and the problem).
\end{itemize}

Let us now describe the execution model for distributed synchronous
algorithms.  First, each channel of the network has a fixed label in
\texttt{label} and each node has a pre-defined initial state in
\state, thus giving a labeled graph, and this graph satisfies the
\texttt{isValid} predicate.

Then, executions proceed in synchronous rounds where every node (1)
obtains information from all its neighbors by the mean of its incoming
channels and (2) computes its new state accordingly using the function
\algo.

At each round, $p$ obtains a pair
$(\texttt{lbl}_c, \stateIA{q_c}{\,})$ from each channel $c\in C(p)$,
where $\texttt{lbl}_c$ is the label of $c$ and $\stateIA{q_c}{\,}$ is
the current state of $q_c$.  Hence, to compute its new state, $p$
knows its own state and the set
$\{(\texttt{lbl}_c, \stateIA{q_c}{\,}) \mid c\in C(p)\}$, which
constitute the inputs of \algo.  Notice that, in the case when
\texttt{label} is $\{\bot\}$, \algo computes a new state according to
the current state of the node and the set
$\{(\bot, \stateIA{q_c}{\,}) \mid c\in C(p)\}$.  Hence, depending on
the values of channel labels, $p$ may or may not be able to locally
identify its neighbors, or to have data about the channels such as
their bandwidth, their cost, \ldots

The state of a node $p$ at the beginning of the round $i$ is denoted
$\stateIA{p}{i}$.  Thus, at each round $i$, each node $p$ computes its
new state $\stateIA{p}{i+1}$ for the next round as
$\stateIA{p}{i+1} = \algo( \stateIA{p}{i}, \{(\texttt{lbl}_c,
\stateIA{q_c}{i}) \mid c\in C(p)\})$.

We say that a distributed synchronous algorithm is \emph{eventually
  stable} if, for any labeled graph satisfying \texttt{isValid}, there
exists a round number $s$ such that, for any node $p$ and any
$i \geq s$, $\stateIA{p}{i+1} = \stateIA{p}{i}$.  Note that if all
states remain stable in a round, then they remain stable forever when
\algo is deterministic.

\paragraph{Accommodating various cases}

As already stated, our model is deliberately general in order to
accommodate most standard distributed computing models.

For example, if one considers identified networks, the $\state$ of
each node can contain an identifier, and \texttt{isValid} can check
that these identifiers are different for each node.  It may also check
that these identifiers are not too large with respect to the size $n$
of the network, in cases where we assume that the identifiers use a
number of bits polylogarithmic in $n$.  Note that in the case of
identified networks, assigning labels to channels does not add
anything useful to the model, and thus $\texttt{label}$ can be the
singleton $\{\perp\}$.

If we do not need the full power of identified networks but only need
that a node can distinguish its neighbors, \texttt{isValid} can check
that, locally, channels have distinct labels.

In semi-uniform networks with a distinguished node (usually called the
root), a boolean may be used in $\state$ to designate this special
node.  The fact that exactly one node is selected is checked by the
\texttt{isValid} predicate.

If the synchronous algorithm that we want to simulate is explicitly
terminating, we can turn it into an eventually stable distributed
synchronous algorithm by keeping the same state forever instead of
terminating.

Note that our formalism assumes that the whole state eventually
stabilizes.  We made this choice to keep things simple.  However, if
this requirement is too strong, our results can be easily extended to
the case when there is a function \texttt {rst} from $\state$ to some
other datatype $\texttt{state'}$, and only the result of this function
is eventually stable (i.e.,
$\texttt {rst}(\stateIA{p}{i+1}) =\texttt {rst}(\stateIA{p}{i})$, for
all sufficiently large $i$).

To illustrate the versatility of our model and the power of our
approach, we will propose several examples of eventually stable
distributed synchronous algorithms solving benchmark problems in
Section~\ref{instances}, page~\pageref{instances}.

\subsection{Output Computational Model: the Atomic-state Model}

Instances of our transformer run on the \emph{atomic-state
  model}~\cite{AlDeDuPe19} in which nodes communicate using a finite
number of locally shared registers, called \emph{variables}. Some of
these shared registers may be read-only: they cannot be modified by
the algorithm nor by faults.  This is typically the case for unique
identifiers, if some are available.  In one indivisible move, each
node reads its own variables and those of its neighbors, performs
local computation, and may change only its own variables.  The
\textit{state} of a node is defined by the values of its local
variables.  A \emph{configuration} of the system is a vector
consisting of the states of each node.

To accommodate the same diversity of models as in the input
computational model, we also describe an output algorithm by four
elements:
\begin{itemize}
\item a datatype \state to label nodes;
\item a datatype \texttt{label} to label communication channels (i.e.,
  arcs)); this datatype can be reduced to a singleton $\{\bot\}$;

\item a computational function \algo that returns a state $s'$ of type
  \state given a state $s$ (in \state) and a set of pairs (label,
  state) of type \texttt{label} $\times$ \state; in this case, the
  read-only registers must have the same value in $s$ and in $s'$;

\item a predicate \texttt{isValid} which takes as input a labeled
  graph; in this case, the nodes are only labeled with the read-only
  shared registers; this predicate a priori depends on the model and
  the problem.
\end{itemize}

The \textit{program} \algo of each node is described as a finite set
of \textit{rules} of the form $label:guard \to\ action$.
\emph{Labels} are only used to identify rules in the reasoning.  A
\textit{guard} is a Boolean predicate involving the state of the node
and that of its neighbors.  The \emph{action} part of a rule updates
the state of the node.  A rule can be executed only if its guard
evaluates to \emph{true}; in this case, the rule is said to be
\emph{enabled}.  By extension, a node is said to be enabled if at
least one of its rules is enabled.  We denote by
$\textit{Enabled}(\gamma)$ the subset of nodes that are enabled in
configuration $\gamma$.

In the model, executions proceed as follows. When the configuration is
$\gamma$ and $\textit{Enabled}(\gamma) \neq \emptyset$, a non-empty
set $\mathcal X \subseteq \textit{Enabled}(\gamma)$ is selected by a
so-called \emph{daemon}; then every node of $\mathcal X$
\emph{atomically} executes one of its enabled rules, leading to a new
configuration~$\gamma^\prime$.  The atomic transition from~$\gamma$
to~$\gamma^\prime$ is called a \emph{step}. We also say that each node
of~$\mathcal X$ executes an \emph{action} or simply a \emph{move}
during the step from~$\gamma$ to $\gamma^\prime$.  The possible steps
induce a binary relation over $\mathcal C$, denoted by $\mapsto$.  An
\emph{execution\/} is a maximal sequence of configurations
$e=\gamma_0\gamma_1\cdots \gamma_i\cdots$ such that
$\gamma_{i-1}\mapsto\gamma_i$ for all $i>0$.  The term ``maximal''
means that the execution is either infinite, or ends at a
\emph{terminal} configuration in which no rule is enabled at any node.

As explained before, each step from a configuration to another is
driven by a daemon.  We define a daemon as a predicate over
executions.  We say that an execution $e$ is \emph{an execution under
  the daemon $S$} if $S(e)$ holds.  In this paper we assume that the
daemon is \emph{distributed} and \emph{unfair}.  ``Distributed'' means
that while the configuration is not terminal, the daemon should select
at least one enabled node, maybe more.  ``Unfair'' means that there is
no fairness constraint, i.e., the daemon might never select an enabled
node unless it is the only enabled node.  In other words, the
distributed unfair daemon corresponds to the predicate $true$, i.e.,
this is the most general daemon.

In the atomic-state model, an algorithm is \emph{silent} if all its
possible executions are finite.  Hence, we can define silent
self-stabilization as follows.  Let $\mathcal{L}$ be a non-empty
subset of configurations, called the set of legitimate configurations.
A distributed system is \emph{silent and self-stabili\-zing} under the
daemon $S$ for $\mathcal{L}$ if and only if the following two
conditions hold:
\begin{itemize}
\item all executions under $S$ are finite, and
\item all terminal configurations belong to $\mathcal{L}$.
\end{itemize}

We use two units of measurement to evaluate the time complexity:
\emph{ moves} and \emph{rounds}.  The definition of a round uses the
concept of \emph{neutralization}: a node $p$ is \textit{neutralized}
during a step $\gamma_i \mapsto \gamma_{i+1}$, if $p$ is enabled in
$\gamma_i$ but not in configuration $\gamma_{i+1}$, and does not
execute any action in the step $\gamma_i \mapsto \gamma_{i+1}$.  Then,
the rounds are inductively defined as follows.  The first round of an
execution~$e = \gamma_0\gamma_1\cdots$ is the minimal prefix~$e'$ such
that every node that is enabled in~$\gamma_0$ either executes a rule
or is neutralized during a step of~$e'$. If $e'$ is finite, then let
$e''$ be the suffix of $e$ that starts from the last configuration of
$e'$; the second round of~$e$ is the first round of $e''$, and so on
and so forth.

The \emph{stabilization time} of a silent self-stabilizing algorithm
is the maximum time (in moves or rounds) over every execution possible
under the considered daemon (starting from any initial configuration)
to reach a terminal (legitimate) configuration.

\section{Compile synchronous algorithms into self-stabilizing
  asynchronous ones}\label{sect:trans}

\subsection{Algorithm overview}

Before we give the actual algorithm, we give some general ideas on how
the algorithm operates.  We do this in hope that it helps the reader
get a better understanding of the algorithm before getting into the
proofs.  Note that some definitions that we give in this subsection
are not the ``real'' definitions.  Their purpose is only to clarify
how the algorithm works.

Our algorithm takes an eventually stable distributed synchronous
algorithm $\InitialAlg$, a flag $f \in \{lazy,greedy\}$, and possibly
a bound $B$ on its execution, as inputs.  Its output is a
self-stabilizing asynchronous algorithm which simulates $\InitialAlg$.

The first idea to turn $\InitialAlg$ into a self-stabilizing algorithm
is to store the whole execution of the algorithm.  Every node $p$ thus
has a list $L$ such that, ultimately, $p.L[i]=\stateIA{p}{i}$ for each
cell $i$.  We also denote $p.L[0]=\stateIA{p}{0}$, which cannot be
corrupted.  Since the cells $q.L[i]$ with $q\in N[p]$ constitute part
of the input that the algorithm uses to compute $p.L[i+1]$, we say
that $p.L[i+1]$ \emph{depends} on the cells $q.L[i]$.  We extend this
definition by taking its transitive closure.

A possible algorithm could be the following.  When $p$ is activated,
it finds its faulty cells, and corrects all of them.  Also if $p.L[i]$
does not exist but all its dependencies do, then it creates $p.L[i]$.
This gives an algorithm which is quite good, round-wise.  Indeed, at
the beginning, the cells $p.L[0]$ are valid by construction, and if
all $p.L[i]$ are valid, then after one round, so are all $p.L[i+1]$.
Thus if the synchronous algorithm finishes in $\Time$ steps, then its
simulation converges in $\Time$ rounds.

The main problem is that this algorithm may use an exponential number
of steps.  Indeed, such an algorithm tries to update the values in the
list as soon as possible, which may cause a lot of updates.  If
between two nodes $p$ and $q$, there exists two disjoint paths of
different lengths, an update on $p$ may trigger two updates of $q$.
However, these two updates of $q$ may trigger four updates on a
further node $r$, by the same construction and argument.  Repeating
this argument along a long chain of such gadgets leads to an
exponential number of updates (and thus of moves) caused by a single
original event.  Section~\ref{sect:roll} proves this in detail.

To avoid this kind of problem, our algorithm uses a more conservative
approach.  Whenever a node $p$ detects a ``major error'', it launches
a node which ensures that the buggy cell and all the cells which
depend on it are removed.  Only then can $p$ resume its computation.
We refer to this process as an \emph{error broadcast}.

To achieve this, we add a variable $p.s$ which can be either $C$ or
$E$ and such that $p.s=E$ means that $p$ has launched an error
broadcast.  This broadcast can either be an initial one or a
sub-broadcast launched to propagate an initial one.  Whenever $p$
knows that its broadcast is finished, it sets $p.s=C$.  In the
following, we denote the length of $p.L$ by $\hei{p}$, and note that
$p$ has a cell which has a missing dependency in $q.L$ if and only if
$\hei{p}\geq \hei{q}+2$.

Our algorithm has the following four rules:
\begin{itemize}
\item[~]\hspace*{-1em}\emph{Rule $R_R$:} Whenever $p$ encounters a
  major error, it applies the rule $R_R$, which empties the list $p.L$
  and sets $p.s=E$.  We explain what a major error is a bit later.

\item[~]\hspace*{-1em}\emph{Rule $R_P$:} If $p$ has a neighbor
  $q\in N(p)$ such that both $q.s=E$ and $\hei{p}\geq \hei{q}+2$, then
  $p$ should apply the rule $R_P$ to remove the problematic cells and
  propagate an error broadcast by setting $p.s$ to $E$.  A node can
  apply the rule $R_p$ as often as needed but, otherwise, it must wait
  for its error broadcast to finish.
\end{itemize}

We can now explain more precisely when $p$ has a major error, and we
do so before presenting the two other rules:
\begin{itemize}
\item $p$ has a cell $p.L[i]$ which has all its dependencies but the
  value $p.L[i]$ is incorrect.
\item $p.s=C$ and some neighbor $q\in N(p)$ is such that
  $\hei{q}\geq \hei{p}+2$.

  Note that the sole condition on the heights is not enough to create
  a serious problem.  Indeed, such a situation with $p.s = E$ is bound
  to happen during an error broadcast.

\item The last major error relates to the error state.  Indeed, a node
  $p$ should have two ways to set $p.s=E$.  Either $p$ has applied the
  rule $R_R$ and $\hei{p}=0$, or $p$ has applied the rule $R_P$ and it
  has a neighbor $q\in N(p)$ such that $q.s=E$ and $\hei{q}<\hei{p}$.
  If none of these conditions are met for a node $p$ such that
  $p.s = E$, then this is a serious error.
\end{itemize}

\begin{itemize}
\item[~]\hspace*{-1em}\emph{Rule $R_C$:} If $p$ knows that its error
  broadcast is finished, then it applies the rule $R_C$, which simply
  consists in switching $p.s$ from $E$ to $C$.  To give the precise
  conditions which allow the use of rule $R_C$, it is easier to see
  when $p$ should not apply this rule.  Indeed,
  \begin{itemize}
  \item if some neighbor $q\in N(p)$ is such that
    $|\hei{q}-\hei{p}|\geq 2$, then $p$ is involved in an error
    broadcast.  Indeed, if $\hei{p}\leq\hei{q}-2$, then $p$ must wait
    for $q$ to propagate its error broadcast.  Otherwise, if
    $\hei{p}\geq\hei{q}+2$ and $q.s=E$, then $p$ must propagate the
    error broadcast of $q$.  Finally, if $\hei{p}\geq\hei{q}+2$ and
    $q.s=C$, then $q$ has a major error and $p$ must wait for $q$ to
    apply the rule $R_R$ after which it will propagate the error
    broadcast of $q$.
  \item if some neighbor $q\in N(p)$ is such that $\hei{q}=\hei{p}+1$
    and $q.s=E$, then the broadcast of $p$ still concerns $q$ and thus
    is not finished, and $p$ must wait.
  \end{itemize}
  If neither conditions apply, then $p$ can apply rule $R_C$.

\item[~]\hspace*{-1em}\emph{Rule $R_U$:} To finish our overview of the
  algorithm, we should talk about the rule $R_U$, which performs the
  actual computation, and about the two functioning modes of our
  algorithm: greedy and lazy.

  Obviously, to apply the rule $R_U$ and create the new cell
  $p.L[\hei{p}+1]$, we should have $p.s=C$ and all the corresponding
  dependencies must exist, thus $\hei{q}\geq \hei{p}$ for any neighbor
  $q\in N(p)$ but $\hei{q}\leq \hei{p}+1$.

  \begin{itemize}
  \item In ``greedy mode'', whenever this condition is met, $p$ can
    apply the rule $R_U$.
  \item Now if $\stateIA{p}{\hei{p}+1}=\stateIA{p}{\hei{p}}$, it may
    be that the simulated algorithm has finished.  In lazy mode, by
    default, $p$ thus does not apply the rule $R_U$.  Nevertheless, if
    a neighbor $q\in N(p)$ is such that $\hei{q}>\hei{p}$, then it may
    be the sign that the computation may have locally converged but
    not globally, and $p$ does apply the rule $R_U$ in this case.
  \end{itemize}
\end{itemize}

\subsection{Data structures}

Let \InitialAlg be an eventually stable distributed synchronous
algorithm.  Let~$\Time$ be the number of synchronous rounds
\InitialAlg requires to become stable.

We now propose a transformer that takes this algorithm as input and
transforms it into an efficient fully asynchronous self-stabilizing
algorithm.  Our algorithm can run in two modes: lazy and greedy.  It
thus takes two additional parameters as inputs:
\begin{itemize}
\item a parameter \lazyI which can take 2 values: ``lazy'' or
  ``greedy''.
\item an upper bound $\durationIA$ on $\Time$, which can be set to
  $+\infty$ to simulate that such a bound is not available to the
  transformer.
\end{itemize}

The shared variables of the node $p$ are:
\begin{itemize}
\item $p.init$: an initial state of $p$ in the simulated algorithm; it
  cannot be modified and constitutes the read-only part of the state;
\item $p.s$: the status of $p$; it can take two values, $C$ or $E$;
\item $p.L$: a list of at most $B$ elements containing states of
  \InitialAlg.
\end{itemize}

The channels (arcs) of the network have the same labels as
in~\InitialAlg.  The predicate \texttt{isValid} is also the same
in~\InitialAlg and in the transformed algorithm.

A node $p$ such that $p.s = C$ is said to be \emph{correct}; otherwise
it is an \emph{erroneous} node (in other words, a node in error).  We
use the following useful notations and functions:
\begin{itemize}
\item We denote by $\hei[h]{p}$ the length of $p.L$ in $\gamma^h$.  If
  the configuration is clear from the context, we simply denote this
  length by $\hei{p}$.
\item We denote by $p.L[i]$ the element of $p.L$ at index $i$
  ($1\leq i\leq \hei{p}$).
\item Although it does not belong to $p.L$, we also denote by $p.L[0]$
  the initial value $p.init$ of $p$ for \InitialAlg (we also refer to
  $p.init$ as \stateIA{p}{0}).

\item $\hei{p}:=i$ is the truncation of $p.L$ at its first $i$
  elements.
\item $push(p, val)$ is the addition of the value $val$ at the end of
  the list $p.L$.
\end{itemize}

As already stated in the overview of the algorithm, ultimately, we
want $p.L[i]=\stateIA{p}{i}$.  We thus must be able to call the
simulated algorithm on the cells of $p.L$ and those of its neighbors.
To simplify notations, we set
\begin{eqnarray*}
  \widehat\algo(p, i)&:=&\algo\bigl(p.L[i], \{(\texttt{lbl}_c, q_c.L[i])\mid c\in C(p)\}\bigr)\\
  \exists q\in N(p), P(\texttt{st}(q)) &:=&
                                            \exists (\texttt{lbl}, \texttt{st})\in \{(\texttt{lbl}_c, \stateIA{q_c}{})\mid c\in C(p)\}, P(\texttt{st})\\
  \forall q\in N(p), P(\texttt{st}(q)) &:=&
                                            \forall (\texttt{lbl}, \texttt{st})\in \{(\texttt{lbl}_c, \stateIA{q_c}{})\mid c\in C(p)\}, P(\texttt{st})
\end{eqnarray*}
Recall that $C(p)$ and $q_c$ respectively denote the set of incoming
channels of $p$ and the source node of $c$.  The last two notations
are somewhat misleading because, although they may suggest it, they do
not rely on the fact that nodes can individually access their
neighbors.

If $\gamma^0\gamma^1\cdots$ is an execution, we respectively denote by
$p.s^i$, $p.L^i$, $\hei[i]{p}$ and $\widehat\algo(p^i, j)$ the value
of $p.s$, $p.L$, $\hei{p}$ and $\widehat\algo(p, j)$ in $\gamma^i$.

\subsection{The predicates}
\allowdisplaybreaks
\begin{eqnarray*}
  algoError(p)&:=&\exists i, 1 \leq i \leq \hei{p}, \; (\forall q\in N(p),
                   \hei{q}\geq i-1)\wedge\\
              &&\phantom{\exists i\leq \hei{p},\;(}
                 p.L[i] \neq \widehat\algo(p, i-1) \\
  \\
  dependencyError(p)&:=&\bigl(p.s=E\wedge \neg(\exists q\in N(p),\;
                         q.s=E\wedge \hei{q}<\hei{p})\bigr)\\
              &\vee&\bigl(p.s=C\wedge \exists q\in N(p),\; (\hei{q}\geq \hei{p}+2)\bigr)\\
  \\
  root(p)&:=&algoError(p)\vee dependencyError(p)\\
  \\
  errorPropag(p, i)&:=& \exists q\in N(p),\; q.s=E \wedge \hei{q}<i<\hei{p}\\
  \\
  canClearE(p)&:=& p.s=E\\
              &\wedge&\forall q \in N(p),\; \bigl(|\hei{q}-\hei{p}|\leq 1 \wedge\\
              &&\phantom{\forall q \in N(p),\; \bigl(}(\hei{q}\leq
                 \hei{p} \vee q.s=C)\bigr)\\
  \\
  updatable(p)&:=& p.s=C\quad \wedge\quad \hei{p} < \durationIA\quad\wedge\\
              && \bigl(\forall q \in N(p),\; \hei{q}\in\{\hei{p} , \hei{p}+1\}\bigr)\wedge\\
              && \bigl(\lazyI=\text{greedy} \vee (p.L[\hei{p}] \neq \widehat\algo(p, \hei{p})\vee\\
              &&\phantom{\bigl(\lazyI=\text{greedy} \vee (}\exists q\in N(p),\; \hei{q}>\hei{p})
                 \bigr)
\end{eqnarray*}

\subsection{The rules}

\begin{itemize}
\item
  $R_R: (\hei{p}>0\vee p.s=C) \wedge root(p) \longrightarrow \hei{p}
  := 0 \; ; \; p.s:=E$
\item
  $R_P(i): errorPropag(p,i)\longrightarrow \hei{p} := i \; ; \; p.s :=
  E$
\item $R_C: canClearE(p) \longrightarrow p.s := C$
\item
  $R_U: updatable(p) \longrightarrow push(p,\widehat\algo(p,
  \hei{p}))$
\end{itemize}
We set the following priorities:
\begin{itemize}
\item $R_R$ has the highest priority.
\item $R_P(i)$ has a higher priority than $R_P(i+l)$ for $l>0$
\item $R_C$ and $R_U$ have the lowest priority.
\end{itemize}

\section{Preliminary results}\label{sect:proof:deb}

\begin{lemma}\label{lem:noRootCreation}
  Let $\gamma^a\mapsto\gamma^b$ be a step.  If $p$ is a root in
  $\gamma^b$, then it also is in $\gamma^a$.
\end{lemma}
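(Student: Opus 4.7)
The plan is to prove the contrapositive: assuming $\neg root(p)$ in $\gamma^a$, I show $\neg root(p)$ in $\gamma^b$. First, $p$ itself cannot execute $R_R$ during the step, since $R_R$'s guard includes $root(p)$; so either $p$ does not move, or $p$ executes one of $R_P(i)$, $R_C$, $R_U$. Throughout, I rely on three easily verifiable one-step facts: (i)~heights change by at most $+1$, and any strict growth is caused by $R_U$; (ii)~cells $q.L[j]$ that exist in both $\gamma^a$ and $\gamma^b$ are preserved unchanged (including the read-only $q.L[0]=q.init$); (iii)~the guards of $R_U$ and $R_C$ impose tight $\pm1$ constraints between the executing node's height and that of each of its neighbors.

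To refute $algoError(p)$ in $\gamma^b$, I suppose for contradiction that an index $j$ witnesses it: $1 \leq j \leq \hei[b]{p}$, every neighbor $q$ has $\hei[b]{q} \geq j-1$, and $p.L^b[j] \neq \widehat\algo(p^b, j-1)$. The crucial step is to show that already in $\gamma^a$ every neighbor $q$ satisfies $\hei[a]{q} \geq j-1$. If some $q$ had $\hei[a]{q} = j-2$ and $\hei[b]{q} = j-1$, then $q$ would have executed $R_U$, whose guard requires $\hei[a]{p} \in \{j-2,j-1\}$; but $\hei[a]{p} \geq \hei[b]{p}-1 \geq j-1$, so $\hei[a]{p} = j-1$ and $\hei[b]{p} = j$, forcing $p$ to execute $R_U$ too. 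However, $R_U$ at $p$ then demands $\hei[a]{q} \in \{j-1,j\}$, contradicting $\hei[a]{q}=j-2$. Once all neighbor heights already satisfy $\hei[a]{q} \geq j-1$, cell preservation gives $q.L^b[j-1]=q.L^a[j-1]$, and $p.L^b[j]$ equals either the preserved $p.L^a[j]$ (which in turn equals $\widehat\algo(p^a,j-1)$ by $\neg algoError(p)$ at index $j$ in $\gamma^a$), or, in the case $p$ did $R_U$ with $j=\hei[b]{p}$, the freshly pushed value $\widehat\algo(p^a,j-1)$. Either way $p.L^b[j] = \widehat\algo(p^b,j-1)$, the desired contradiction.

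To refute $dependencyError(p)$ in $\gamma^b$, I split its two clauses. If $p.s^b=E$ with no neighbor $q$ such that $q.s^b=E$ and $\hei[b]{q}<\hei[b]{p}$, I exhibit a candidate witness $q$ from $\gamma^a$: either inherited from $\neg dependencyError(p)$ in $\gamma^a$ (when $p$ does not move, forcing $p.s^a=E$) or directly supplied by the guard of $R_P(i)$ (when $p$ executes $R_P$). Such a $q$ persists across the step: it cannot execute $R_U$ (which would require $q.s^a=C$), and it cannot execute $R_C$, whose guard would impose $|\hei[a]{p}-\hei[a]{q}|\leq 1$ together with $\hei[a]{p}\leq\hei[a]{q}$ or $p.s^a=C$, incompatible with the strict inequality $\hei[a]{q}<\hei[a]{p}$ (or $\hei[a]{q}<i<\hei[a]{p}$) combined with $p.s^a=E$. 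If instead $p.s^b=C$ and some neighbor $q$ has $\hei[b]{q}\geq\hei[b]{p}+2$, then the guard of whichever rule set $p.s^b=C$ (or $\neg dependencyError(p)$ in $\gamma^a$ if $p$ did not move) yields $\hei[a]{q}\leq\hei[a]{p}+1$, forcing $q$ to execute $R_U$; but the $R_U$ guard at $q$ requires $\hei[a]{p}\in\{\hei[a]{q},\hei[a]{q}+1\}$, and combined with $\hei[b]{q}=\hei[a]{q}+1\geq\hei[b]{p}+2$ this produces an impossible height relation.

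The main obstacle I anticipate is legible bookkeeping: the proof is really a cross-case analysis over the rule executed by $p$ and by the problematic neighbor $q$. The delicate points are always the simultaneous applications of $R_U$ at both $p$ and $q$ (in the $algoError$ refutation) and the interaction between $R_C$ at $q$ and the error status of $p$ (in the $dependencyError$ refutation).
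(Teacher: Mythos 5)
Your proof is correct and takes essentially the same approach as the paper's: a case analysis over the disjuncts of $root(p)$, relying on the same three one-step facts (cells existing in both $\gamma^a$ and $\gamma^b$ are preserved, only $R_U$ increases a height and by exactly one, and the $\pm 1$ height constraints in the guards of $R_U$ and $R_C$), with the problematic neighbor's rule analyzed in each case. The only difference is organizational: you argue by contraposition, which lets you exclude $R_R$ at $p$ up front, whereas the paper argues forward from ``$p$ is a root in $\gamma^b$'' and, in the corresponding subcases, concludes directly that $p$ was already a root in $\gamma^a$.
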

\begin{proof}
  We split the study into the following three cases.
  \begin{itemize}
  \item Suppose that $algoError(p)$ is true in $\gamma^b$.  Let $i$ be
    such that $\hei[b]{p} \geq i$, for each $q\in N(p)$,
    $\hei[b]{q}\geq i-1$, and $p^b.L[i]\neq \widehat\algo(p^b, i-1))$.

    The key element of this case is that, if $\hei[a]{q}\geq i$ and
    $\hei[b]{q}\geq i$, then $q^a.L[i]=q^b.L[i]$.
    \begin{itemize}
    \item If $\hei[a]{p}=i-1$, then $p$ applies the rule $R_U$ in
      $\gamma^a\mapsto\gamma^b$.  This implies that for each
      $q\in N[p]$, $\hei[a]{q}\geq i-1$ and
      $p^b.L[i]=\widehat\algo(p^a, i-1))$.  But since each $q\in N[p]$
      is such that $\hei[a]{q}\geq i-1$ and $\hei[b]{q}\geq i-1$, we
      have $q^a.L[i-1]=q^b.L[i-1]$ which contradicts that
      $p^b.L[i]\neq\widehat\algo(p^b, i-1))$.
    \item If $\hei[a]{p}\geq i$, and all $q\in N(p)$ are such that
      $\hei[a]{q}\geq i-1$, then again, $q^a.L|i-1]=q^b.L|i-1]$.  And
      thus $algoError(p)$ is true in $\gamma^a$.

    \item If $\hei[a]{p}\geq i$ and some $q\in N(p)$ is such that
      $\hei[a]{q}<i-1$, then $q$ cannot apply a rule $R_U$ in
      $\gamma^a\mapsto\gamma^b$, and thus $\hei[b]{q}<i-1$ which is a
      contradiction.
    \end{itemize}

  \item Suppose that $p.s^b=E$ and there exist no $q\in N(p)$ such
    that $q.s^b=E$ and $\hei[b]{q}<\hei[b]{p}$.

    If $p.s^a=E$ and no $q\in N(p)$ is such that $q.s^a=E$ and
    $\hei[a]{q}<\hei[a]{p}$, then $p$ is a root in $\gamma^a$.

    We claim that in all remaining cases, $p$ applies an error rule in
    $ \gamma^a\mapsto\gamma^b$.  Indeed
    \begin{itemize}
    \item if $p.s^a=E$ and there exists $q\in N(p)$ such that
      $q.s^a=E$ and $\hei[a]{q}<\hei[a]{p}$, then $q$ cannot apply a
      rule $R_U$ or $R_C$, and thus $q.s^b=E$.  Recall that in the
      current case, there exist no $u\in N(p)$ is such that $u.s^b=E$
      and $\hei[b]u<\hei[b]{p}$.  Thus $\hei[b]{q}\geq \hei[b]{p}$,
      which implies that $p$ must apply an error rule in
      $\gamma^a\mapsto\gamma^b$.
    \item if $p.s^a=C$ then $p$ must also apply an error rule in
      $\gamma^a\mapsto\gamma^b$.
    \end{itemize}

    Now 2 cases are possible.
    \begin{itemize}
    \item If $p$ applies a rule $R_R$, then $p$ is a root in
      $\gamma^a$.
    \item If $p$ applies a rule $R_P(i)$ in $\gamma^a\mapsto\gamma^b$,
      then there exists $q\in N(p)$ such that $\hei[a]{q}=i-1$ and
      $q.s^a=E$.  But since $q.s^a=E$, $q$ cannot apply the rule
      $R_U$, and because of $p$, $q$ cannot apply a rule $R_C$.  Thus
      $q.s^b=E$ and $\hei[b]{q}<\hei[b]{p}$, which contradicts the
      hypothesis.
    \end{itemize}

  \item Suppose that, $p.s^b=C$ and there exists $q\in N(p)$ such that
    $\hei[b]{q}\geq \hei[b]{p}+2$.

    Since $p$ does not apply an error rule in
    $\gamma^a\mapsto\gamma^b$, $\hei[a]{p}\leq \hei[b]{p}$.
    \begin{itemize}
    \item If $\hei[a]{q}=\hei[b]{q}-1$, $q$ applies the rule $R_U$ in
      $\gamma^a\mapsto\gamma^b$.  But this is impossible because
      $\hei[a]{q}> \hei[a]{p}$
    \item If $\hei[a]{q}\geq \hei[b]{q}$, then
      $\hei[a]{q}\geq \hei[a]{p}+2$.
      \begin{itemize}
      \item If $p.s^a=E$, then $p$ applies the rule $R_C$ in
        $\gamma^a\mapsto\gamma^b$, which is impossible because
        $\hei[a]{q}\geq \hei[a]{p}+2$.
      \item If $p.s^a=C$, then $p$ is a root in $\gamma$.
      \end{itemize}
    \end{itemize}
  \end{itemize}
  Thus, in all possible cases, $p$ is a root in $\gamma$.  \qed
\end{proof}

\begin{lemma}\label{lem:root_RC}
  Let $\gamma^a\mapsto\gamma^b$ be a step, and let $r$ be a root in
  $\gamma^a$ which applies the rule $R_C$ during
  $\gamma^a\mapsto\gamma^b$.  Then $\hei[a]{r}=0$ and $r$ is not a
  root in $\gamma^b$.
\end{lemma}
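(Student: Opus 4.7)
The plan is to first prove $\hei[a]{r} = 0$ via a priority argument based on the guard of $R_R$, and then to show that after the step, no condition in $root$ can hold, the only nontrivial case being that some neighbor gets height at least $2$.

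First I would argue that since $r$ applies $R_C$ in the step, the guard $canClearE(r)$ holds in $\gamma^a$, which forces $r.s^a = E$. Because $r$ is a root in $\gamma^a$, the rule $R_R$ has a non-trivial root component, so its guard reduces to requiring $\hei[a]{r} > 0$ (the clause $r.s = C$ is false). If $\hei[a]{r}$ were strictly positive, $R_R$ would be enabled at $r$ in $\gamma^a$; since $R_R$ has strictly higher priority than $R_C$, this would contradict the fact that $r$ executes $R_C$ during $\gamma^a \mapsto \gamma^b$. Hence $\hei[a]{r} = 0$, proving the first part of the lemma.

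Next, I would observe that $R_C$ only sets $r.s := C$, so $\hei[b]{r} = \hei[a]{r} = 0$ and $r.s^b = C$. I would then rule out each branch of $root(r)$ in $\gamma^b$ in turn. The predicate $algoError(r)$ requires an index $i$ with $1 \leq i \leq \hei[b]{r} = 0$, which is impossible. The first disjunct of $dependencyError(r)$ requires $r.s^b = E$, and the second requires both $r.s^b = C$ (fine) and the existence of $q \in N(r)$ with $\hei[b]{q} \geq \hei[b]{r} + 2 = 2$. So the whole lemma reduces to showing that every neighbor $q$ of $r$ satisfies $\hei[b]{q} \leq 1$.

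The main (and only nontrivial) obstacle is this height bound on neighbors, which I would handle by case analysis on what $q$ does during the step. The $canClearE(r)$ condition in $\gamma^a$ combined with $\hei[a]{r} = 0$ gives $\hei[a]{q} \leq 1$ for every $q \in N(r)$. Only the rule $R_U$ can increase a node's height, and only by $1$; the rules $R_R$ and $R_P(i)$ can only decrease it, and $R_C$ leaves it unchanged. Thus if $q$ does not apply $R_U$, then $\hei[b]{q} \leq \hei[a]{q} \leq 1$. If $q$ does apply $R_U$, then $updatable(q)$ holds in $\gamma^a$, which requires every neighbor $u$ of $q$ to satisfy $\hei[a]{u} \in \{\hei[a]{q}, \hei[a]{q}+1\}$; taking $u = r$ with $\hei[a]{r} = 0$ forces $\hei[a]{q} = 0$, and hence $\hei[b]{q} = 1$. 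In every case $\hei[b]{q} \leq 1 < 2$, so the second disjunct of $dependencyError(r)$ fails in $\gamma^b$ and $r$ is not a root there.
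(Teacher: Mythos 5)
Your proof is correct and follows essentially the same route as the paper's: the priority of $R_R$ over $R_C$ forces $\hei[a]{r}=0$, the guard $canClearE(r)$ bounds the neighbors' heights by $1$ in $\gamma^a$, and the observation that a neighbor $q$ with $\hei[a]{q}=1$ cannot execute $R_U$ (because $\hei[a]{r}=0$ violates $updatable(q)$) rules out $dependencyError(r)$ in $\gamma^b$, while $\hei[b]{r}=0$ rules out $algoError(r)$. Your write-up is somewhat more explicit than the paper's (e.g., the case split on whether $q$ applies $R_U$), but the underlying argument is identical.
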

\begin{proof}
  Since $R_R$ has a higher priority than $R_C$, the guard of $R_R$ is
  false at $r$ in $\gamma^a$. So, as $r$ is a root in $\gamma^a$, we
  necessarily have $\hei[a]{r}=0$.

  Then, since $r$ applies $R_C$ during $\gamma^a\mapsto\gamma^b$, we
  have $r.s^b=C$.  Moreover, to allow $r$ to execute $R_C$, we should
  have every $q\in N(r)$ that satisfies $\hei[a]{q}\leq 1$. Now, as
  $\hei[a]{r}=0$, no $q\in N(r)$ with $\hei[a]{q}=1$ can apply $R_U$
  in $\gamma^a\mapsto\gamma^b$.  All this implies that
  $dependencyError(r)$ is false in $\gamma^b$.

  Finally, since $r$ applies the rule $R_C$, we have
  $\hei[b]{r}=\hei[a]{r}=0$, which implies that $algoError(r)$ is also
  false in $\gamma^b$.  Hence, $r$ is not a root in $\gamma^b$.  \qed
\end{proof}

A node $p$ is in \emph{error} in $\gamma$ if $p.s=E$ and it is a
\emph{root} if $root(p)$.  $R_R$ and each rule $R_P(i)$ are referred
to as \emph{error rules} in the following.

A path $P=p_0p_1\cdots p_l$ in $G$ is \emph{decreasing} in a
configuration $\gamma$ if for each $0\leq i<l$,
$\hei{p_i}>\hei{p_{i+1}}$.  A path $P$ is an \emph{$E$-path} if it is
decreasing, all its nodes are in error, and its last node is a root.

\begin{lemma}\label{lem:E-path}
  Let $\gamma$ be a configuration.  Any node $p$ in error is the first
  node of an $E$-path.
\end{lemma}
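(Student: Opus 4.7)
The plan is to proceed by strong induction on $\hei{p}$, with the two natural cases determined by whether $p$ is itself a root.

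If $p$ is a root in $\gamma$, then the trivial one-node path $p$ is an $E$-path: it is vacuously decreasing, its unique node $p$ is in error by assumption, and its last (= only) node is a root. This simultaneously handles the base case $\hei{p}=0$, because for such a $p$ the next step's alternative is impossible.

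If $p$ is not a root, I would unfold the definition of $dependencyError$ to extract a neighbor with smaller height and status $E$. Since $p.s=E$, the second disjunct of $dependencyError(p)$ (which requires $p.s=C$) is false, so for $dependencyError(p)$ to fail it must be that the first disjunct fails; that is, the inner negation must be false, giving
\[
\exists q\in N(p),\; q.s=E \wedge \hei{q}<\hei{p}.
\]
Pick such a neighbor $q$. Since $\hei{q}<\hei{p}$, the induction hypothesis applied to $q$ (which is in error) yields an $E$-path $q\, q_1\cdots q_l$. Prepending $p$ gives $p\, q\, q_1\cdots q_l$, which is decreasing since $\hei{p}>\hei{q}>\hei{q_1}>\cdots>\hei{q_l}$, has all nodes in error (the tail by induction, the head by assumption), and ends at a root (the last node of the induction's $E$-path).

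The only nontrivial step is the extraction of the neighbor with smaller height in the non-root case, and it is purely a definitional unfolding; there is no real obstacle. The induction terminates because heights are nonnegative integers, and the argument is independent of any global structure of $\gamma$.
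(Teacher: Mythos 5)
Your proof is correct and follows essentially the same route as the paper's: induction on $\hei{p}$, where a root node yields the trivial one-node $E$-path, and otherwise the failure of $dependencyError(p)$ (given $p.s=E$) produces a neighbor $q$ in error with $\hei{q}<\hei{p}$ to which the induction hypothesis applies, after which $p$ is prepended. Your treatment is in fact slightly more careful than the paper's, since you make explicit both the definitional unfolding of $dependencyError$ and the fact that the base case $\hei{p}=0$ forces the root case.
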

\begin{proof}
  We prove our lemma by induction on $\hei{p}$, if $\hei{p}=0$, then
  $p$ is a root and $P=p_0$ satisfies the required conditions.

  Suppose that $\hei{p}>0$.  If $p$ is a root, then $P=p_0$ satisfies
  the required conditions.  Otherwise, there exists $q\in N(v)$ such
  that $\hei{q}<\hei{p}$ and $q.s=E$.  By induction, there exists an
  $E$-path $P'$ starting a $q$.  We can add $p$ at the beginning of
  $P'$ to obtain a path $P$ which satisfies all required conditions.
  \qed
\end{proof}

Although this Lemma will have other important implications, it implies
that if there exists a node in error, then there exists a root in
error.

\section{Terminal configurations}

A configuration $\gamma$ is \emph{almost clean} if
\begin{itemize}
\item every root $r$ satisfies $\hei{r}=0$ and $r.s=E$, and
\item every two neighbor $p$ and $q$ satisfies
  $\bigl|\hei{p}-\hei{q}\bigr|\leq 1$.
\end{itemize}

\begin{lemma}\label{lem:almostClean}
  A configuration is almost clean if and only if no node can apply an
  error rule.
\end{lemma}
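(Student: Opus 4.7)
The plan is to prove both directions by carefully unpacking the guards $R_R$ and $R_P(i)$ against the two conditions defining \emph{almost clean}.

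For the ($\Rightarrow$) direction, I would assume the configuration $\gamma$ is almost clean and argue separately that $R_R$ and every $R_P(i)$ are disabled everywhere. For $R_R$ at a node $p$, the guard requires $root(p)$ and $(\hei{p}>0\vee p.s=C)$. By almost cleanness, any root $r$ has $\hei{r}=0$ and $r.s=E$, making the parenthesised condition false, so $R_R$ is disabled. For $R_P(i)$ at $p$, the guard demands a neighbor $q\in N(p)$ with $q.s=E$ and $\hei{q}<i<\hei{p}$, which forces $\hei{p}-\hei{q}\geq 2$, directly contradicting the second almost-clean condition.

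For the ($\Leftarrow$) direction, I would assume no error rule is enabled and derive the two conditions. For condition (1), if $r$ is a root, then $root(r)$ holds, so the only way for $R_R$ to be disabled at $r$ is that $\hei{r}=0$ and $r.s\neq C$, i.e., $r.s=E$. For condition (2), I would suppose for contradiction that two neighbors $p,q$ satisfy $\hei{p}\geq \hei{q}+2$ and then split on $q.s$. If $q.s=E$, then taking $i=\hei{q}+1$ gives $\hei{q}<i<\hei{p}$, so $errorPropag(p,i)$ holds and $R_P(i)$ is enabled at $p$, a contradiction. If $q.s=C$, then $p\in N(q)$ witnesses $\hei{p}\geq\hei{q}+2$, so the second disjunct of $dependencyError(q)$ makes $q$ a root; since $q.s=C$, the condition $(\hei{q}>0\vee q.s=C)$ is automatically satisfied, hence $R_R$ is enabled at $q$, again a contradiction.

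The argument is essentially a direct verification against the predicates, and no step should be genuinely hard. The only subtlety I want to be careful about is the boundary case $\hei{r}=0$: there the predicate $algoError(r)$ is vacuously false (the quantifier $1\le i\le \hei{r}$ is empty), but $dependencyError(r)$ can still hold via the first disjunct when $r.s=E$ (since no $q\in N(r)$ can satisfy $\hei{q}<0$). This is what lets a height-zero erroneous node be a root without enabling $R_R$, and it is exactly what the first almost-clean condition captures. Checking this corner is the main place where one could slip up; the rest reduces to matching guards to definitions.
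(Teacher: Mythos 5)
Your proof is correct and follows essentially the same route as the paper: the forward direction checks the guards of $R_R$ and $R_P(i)$ directly against the two almost-clean conditions, and the converse uses the same case split on the status of the lower neighbor ($C$ gives a root enabled for $R_R$ via $dependencyError$, $E$ gives a neighbor enabled for $R_P$ with witness $i=\hei{q}+1$). Your extra remark on the height-zero boundary case is accurate but not needed for the equivalence itself.
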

\begin{proof}
  Suppose that $\gamma$ is almost clean. Since every root $r$ is such
  that $\hei{r}=0$ and $r.s=E$, no node can apply the rule $R_R$, and
  since every neighbor $p$ and $q$ are such that
  $|\hei{p}-\hei{q}|\leq 1$, no node can apply a rule $R_P$.

  Conversely, suppose that $\gamma$ is not almost clean. If a root $r$
  is such that either $\hei{r}>0$ or $r.s=C$, then $r$ can apply the
  rule $R_R$. And if $\hei{q}\geq \hei{p}+2$, then either $p.s=C$ and
  $r$ is a root which can apply the rule $R_R$ or $p.s=E$ and $q$ can
  apply a rule $R_P$.  \qed
\end{proof}

\begin{lemma}\label{lem:almostCleanClosed}
  Let $\gamma^a\mapsto\gamma^b$ be a step. If $\gamma^a$ is almost
  clean, then so is $\gamma^b$.
\end{lemma}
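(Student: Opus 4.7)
The plan is to use Lemma~\ref{lem:almostClean} to reduce the problem: since $\gamma^a$ is almost clean, no node can apply an error rule (\(R_R\) or \(R_P\)) in the step $\gamma^a\mapsto\gamma^b$. So only $R_C$ and $R_U$ may fire, which tells us that for every node $p$, the height $\hei[b]{p}$ is either $\hei[a]{p}$ (if $p$ does nothing or applies $R_C$) or $\hei[a]{p}+1$ (if $p$ applies $R_U$). This monotonicity is the main tool.

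First I would verify the neighbor height condition. Fix neighbors $p,q$ with $\hei[a]{p}\geq\hei[a]{q}$. If $\hei[a]{p}=\hei[a]{q}$, both heights can only grow by at most $1$, so in $\gamma^b$ their difference is in $\{-1,0,1\}$. If $\hei[a]{p}=\hei[a]{q}+1$, I claim $p$ cannot apply $R_U$: indeed, the guard $updatable(p)$ requires $\hei[a]{q}\in\{\hei[a]{p},\hei[a]{p}+1\}$, contradicting $\hei[a]{q}=\hei[a]{p}-1$. Hence $\hei[b]{p}=\hei[a]{p}$, while $\hei[b]{q}\in\{\hei[a]{q},\hei[a]{q}+1\}$, so $\bigl|\hei[b]{p}-\hei[b]{q}\bigr|\leq 1$.

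Next I would check the root condition. Let $r$ be a root in $\gamma^b$. By Lemma~\ref{lem:noRootCreation}, $r$ is already a root in $\gamma^a$, so by the almost-cleanness of $\gamma^a$ we have $\hei[a]{r}=0$ and $r.s^a=E$. The rules $R_R$ and $R_P$ are unavailable (no error rules fire from $\gamma^a$), and $R_U$ is unavailable because its guard demands $r.s^a=C$. The only rule $r$ could execute is $R_C$; but Lemma~\ref{lem:root_RC} says that if a root applies $R_C$ it ceases to be a root in $\gamma^b$, contradicting our assumption on $r$. Therefore $r$ performs no move during the step, which gives $\hei[b]{r}=\hei[a]{r}=0$ and $r.s^b=r.s^a=E$, as required.

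The two conditions together show that $\gamma^b$ is almost clean. The only slightly delicate point I anticipate is making sure the height-gap argument for neighbors is stated in the correct asymmetric way (the update is allowed only for the lower neighbor, never the higher one); everything else is a direct application of the preceding lemmas and a case analysis on the four rules.
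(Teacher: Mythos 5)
Your proof is correct and follows essentially the same route as the paper's: both invoke Lemma~\ref{lem:almostClean} to rule out error rules during the step, exploit the asymmetry of the $updatable$ guard (only the lower neighbor may apply $R_U$) for the height condition, and combine Lemmas~\ref{lem:noRootCreation} and~\ref{lem:root_RC} to show a surviving root cannot move. The only difference is presentational: you verify the two conditions of almost-cleanness directly, whereas the paper argues by contradiction from a violation in $\gamma^b$.
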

\begin{proof}
  Assume, by the contradiction, that $\gamma^a$ is almost clean and
  $\gamma^b$ is not.

  At least one of these following two cases occur:
  \begin{itemize}
  \item Some node $p$ can apply the rule $R_P$ in $\gamma^b$. There
    are thus two neighbors $p$ and $q$ such that
    $\hei[b]{p}\geq \hei[b]{q}+2$.

    Since $\gamma^a$ is almost clean,
    $\bigl|\hei[a]{p}-\hei[a]{q}\bigr|\leq 1$ and no error rule is
    executed in the step $\gamma^a \mapsto \gamma^b$.  So,
    necessarily, $p$ executes $R_U$ in $\gamma^a \mapsto \gamma^b$,
    but not $q$. Thus, $\hei[a]{p} = \hei[b]{p} - 1$ and
    $\hei[a]{q} = \hei[b]{q}$. Moreover, since $p$ enabled for $R_U$
    in $\gamma^a$, we have $\hei[a]{p} \leq \hei[a]{q}$, which implies
    $\hei[b]{p} \leq \hei[b]{q}+1$, a contradiction.

  \item Some root $r$ can apply the rule $R_R$ in $\gamma^b$ (i.e.,
    $\hei[b]{r}\neq 0$ or $r.s^b=C$).

    First, by Lemma~\ref{lem:noRootCreation}, $r$ is a root in
    $\gamma^a$, and since $\gamma^a$ is almost clean, $\hei[a]{r}=0$
    and $r.s^a=E$. Thus, by Lemma~\ref{lem:root_RC}, either $r$
    applies no rule in $\gamma^a\mapsto\gamma^b$ which is a
    contradiction because $\hei[b]{r}=0$ and $r.s^b=E$, or $r$ applies
    the rule $R_C$ and $r$ is not a root in $\gamma^b$, which is also
    a contradiction.\qed
  \end{itemize}
\end{proof}

\begin{lemma}\label{lem:stableConfiguration}
  Let $\gamma$ be an almost clean configuration.  For any node $p$ and
  $i\leq \hei{p}$, $p.L[i]=\stateIA{p}{i}$.
\end{lemma}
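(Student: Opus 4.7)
The plan is to proceed by induction on $i$. The base case $i=0$ is immediate from the conventions: $p.L[0]$ is defined to be $p.init = \stateIA{p}{0}$, which is the initial state of $p$ in the synchronous execution of \InitialAlg. So this amounts to unfolding definitions.

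For the inductive step, fix $i \geq 1$ and a node $p$ with $\hei{p} \geq i$, and assume that $q.L[j] = \stateIA{q}{j}$ for every node $q$ and every $j < i$ with $j \leq \hei{q}$. I want to show that $p.L[i] = \widehat\algo(p, i-1)$, and then apply the induction hypothesis to every term appearing in the right-hand side. The first part is the one that uses almost cleanness. Since $\hei{p} \geq i \geq 1$, $p$ cannot be a root (roots in an almost clean configuration have height $0$), so in particular $algoError(p)$ is false. Now almost cleanness also gives $|\hei{p}-\hei{q}|\leq 1$ for every $q \in N(p)$, hence $\hei{q}\geq \hei{p}-1\geq i-1$. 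Therefore the only way for $algoError(p)$ to fail at this specific index $i$ is $p.L[i] = \widehat\algo(p, i-1)$.

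Once this identity is established, I unfold $\widehat\algo(p, i-1) = \algo\bigl(p.L[i-1], \{(\texttt{lbl}_c, q_c.L[i-1]) \mid c\in C(p)\}\bigr)$. The induction hypothesis applies to $p.L[i-1]$ since $\hei{p}\geq i> i-1$, and to each $q_c.L[i-1]$ since almost cleanness gives $\hei{q_c}\geq i-1$. Substituting $\stateIA{p}{i-1}$ and $\stateIA{q_c}{i-1}$ and using the definition of the synchronous round, the right-hand side equals $\stateIA{p}{i}$, finishing the induction.

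The only nontrivial step is the first part of the inductive step, namely ruling out $algoError(p)$. This is where I have to combine Lemma~\ref{lem:almostClean} (almost clean configurations admit no error rule) with the definition of an almost clean configuration itself (any root must have height $0$). Everything else is bookkeeping: checking that the neighbor-height condition inside $algoError$ is automatically satisfied thanks to $|\hei{p}-\hei{q}|\leq 1$, and verifying that the inductive hypothesis can indeed be applied to every cell that appears in $\widehat\algo(p, i-1)$.
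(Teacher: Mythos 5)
Your proposal is correct and follows essentially the same route as the paper's proof: induction on $i$ with base case $p.L[0]=p.init$, then using almost cleanness to rule out $algoError(p)$ and thereby force $p.L[i]=\widehat\algo(p,i-1)$, and finally applying the induction hypothesis to every cell at level $i-1$ (available because $|\hei{p}-\hei{q}|\leq 1$). The only cosmetic difference is that you derive the falsity of $algoError(p)$ directly from the height-zero condition on roots in the definition of almost cleanness, whereas the paper invokes Lemma~\ref{lem:almostClean}; both are valid one-line justifications of the same step.
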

\begin{proof}
  Since $\gamma$ is almost clean, for any neighboring nodes $p$ and
  $q$, $|\hei{p}-\hei{q}|\leq 1$.  By induction on $i\geq 0$, if $i=0$
  then we have set $p.L[i]:=p.init$, thus $p.L[i]=\stateIA{p}{i}$.
  Suppose that the induction hypothesis holds for some $i\geq 0$.  Let
  $p$ be such that $\hei{p}\geq i+1$.  Now by
  Lemma~\ref{lem:almostClean}, $p$ cannot apply an error rule,
  $algoError(p)$ is false and thus $p.L[i+1] =\widehat\algo(p, i)$.
  The induction hypothesis thus implies that
  $p.L[i+1]=\stateIA{p}{i+1}$. \qed
\end{proof}

A configuration in \emph{clean} if it contains no root.  The following
property gives an alternative definition of being clean, and as a
direct consequence, it implies that clean configurations are also
almost clean.

\begin{lemma}\label{lem:clean}
  A configuration is clean if and only if nodes can only apply the
  rule $R_U$.
\end{lemma}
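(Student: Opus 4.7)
My plan is to prove the two directions separately: the forward direction eliminates error nodes via Lemma~\ref{lem:E-path}, and the converse is proved by contrapositive using a maximum-height argument inside an error component.

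For the $(\Rightarrow)$ direction, assume the configuration is clean, i.e., contains no root. By Lemma~\ref{lem:E-path}, every node with $s=E$ is the first node of an $E$-path ending at a root, so the absence of roots forces the absence of any node with $s=E$. The guards of $R_R$ (which requires a root), of every $R_P(i)$ (which requires a neighbor with $s=E$), and of $R_C$ (which requires $p.s=E$) are thus all unsatisfiable, so only $R_U$ can possibly be enabled.

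For the $(\Leftarrow)$ direction I prove the contrapositive: if a root $r$ exists, then some rule other than $R_U$ is enabled somewhere. If $\hei{r}>0$ or $r.s=C$, the guard of $R_R$ is directly satisfied at $r$. Otherwise $\hei{r}=0$ and $r.s=E$, so $r$ lies in the subgraph of error nodes; let $Z$ be its connected component and pick $p\in Z$ of maximum height $k$. If $k>0$ and $p$ happens to be a root, then $R_R$ is enabled at $p$. In the remaining case I show that $canClearE(p)$ holds, so that $R_C$ is enabled at $p$. For each $u\in N(p)$: if $\hei{u}\geq k+2$, then $errorPropag(u,k+1)$ is satisfied with witness $p$, enabling $R_P(k+1)$ at $u$; if $\hei{u}\leq k-2$, then either $u.s=C$ makes $u$ a root via the second disjunct of $dependencyError$ (enabling $R_R$ at $u$), or $u.s=E$ makes $errorPropag(p,\hei{u}+1)$ hold with witness $u$ (enabling $R_P(\hei{u}+1)$ at $p$); both alternatives contradict the hypothesis, so $\hei{u}\in\{k-1,k,k+1\}$. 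Finally, if $\hei{u}=k+1$ and $u.s=E$, then $u$ would lie in $Z$ with height strictly greater than $k$, contradicting maximality. Hence every neighbor $u$ satisfies $|\hei{u}-k|\leq 1$ as well as $\hei{u}\leq k \vee u.s=C$, which means $canClearE(p)$ holds and $R_C$ is enabled at $p$.

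The main obstacle is the case analysis of $R_C$-blockers at the maximum-height node $p\in Z$: the large-gap cases ($|\hei{u}-k|>1$) must be tied either to the enabling of $R_P$ at the appropriate endpoint of the edge $pu$ (depending on $u.s$), or to the enabling of $R_R$ at a freshly identified root $u$ with $u.s=C$, while the remaining blocker ($\hei{u}>k$ with $u.s=E$) is excluded by the maximality of $k$ in $Z$. Once these are ruled out, the rest is a direct inspection of the guards.
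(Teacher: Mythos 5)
Your proof is correct, and its skeleton matches the paper's: the forward direction is exactly the paper's (no root, hence by Lemma~\ref{lem:E-path} no node in error, hence the guards of $R_R$, $R_P$ and $R_C$ are all unsatisfiable), and the converse ends with the same punchline, namely that a maximum-height node in error satisfies $canClearE$. The organizational difference is that the paper's converse factors through Lemma~\ref{lem:almostClean}: from the hypothesis that no error rule is enabled it concludes that the configuration is almost clean, which instantly yields both the absence of correct roots and the bound $\bigl|\hei{q}-\hei{p}\bigr|\leq 1$ on neighboring heights, so that the only possible blocker of $R_C$ at a node in error is a higher neighbor in error, killed by taking a global maximum. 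You instead prove the contrapositive and re-derive these facts inline: your sub-cases ($\hei{u}\geq k+2$ gives $R_P$ at $u$; $\hei{u}\leq k-2$ gives $R_R$ at $u$ or $R_P$ at $p$, depending on $u.s$) are precisely the case analysis inside the proof of Lemma~\ref{lem:almostClean}, and your maximum over the connected component $Z$ of error nodes plays the role of the paper's global maximum over error nodes (both work). So your argument is self-contained where the paper's is modular: you pay with a longer case analysis, and in exchange you never need the almost-clean notion at all; the paper's route is shorter but only because Lemma~\ref{lem:almostClean} has already done the guard bookkeeping. One cosmetic remark: your preliminary split ``if $k>0$ and $p$ is a root, then $R_R$ is enabled'' is harmless but unnecessary, since $canClearE(p)$ does not exclude roots and any rule it triggers (or is preempted by) is in every case different from $R_U$.
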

\begin{proof}
  Suppose that $\gamma$ is clean. Since it contains no root, then no
  node can apply the rule $R_R$. Since there are no root, then, by
  Lemma~\ref{lem:E-path}, there are no node in error, and thus no node
  can apply a rule $R_P$ or the rule $R_C$.

  Conversely, suppose that nodes can only apply the rule $R_U$. Then
  by Lemma~\ref{lem:almostClean}, $\gamma$ is almost clean therefore
  $\gamma$ contains no correct root. To prove that $\gamma$ do not
  contain roots in error, it is enough to show that $\gamma$ contain
  no node in error. Suppose thus for a contradiction that $p$ is in
  error.  Since $\gamma$ is almost clean, every neighbor $q$ of $p$ is
  such that $|\hei{p}-\hei{q}|\leq 1$.  The only condition which
  prevents a node $p$ in error from applying the rule $R_C$ is thus if
  $p$ has a neighbor in error $q$ such that $\hei{q}=\hei{p}+1$. But
  then any node in error of maximum height can apply the rule $R_C$, a
  contradiction.\qed
\end{proof}

Lemma~\ref{lem:noRootCreation} implies that being clean is a stable
property.

A \emph{terminal configuration} is a configuration in which no node
can be activated. Clearly, by Lemma~\ref{lem:clean}, terminal
configurations are clean.
\begin{lemma}\label{lem:finalConfiguration}
  Suppose that $\gamma$ is a terminal configuration.  There exists $H$
  such that for any node $p$, $\hei{p}=H$ and $p.s=C$.  Moreover, for
  any $1\leq i\leq H$, $p.L[i]=\stateIA{p}{i}$.

  In greedy mode, every execution with $\durationIA=\infty$ is
  infinite, thus no terminal configuration exist. Otherwise,
  $H=\durationIA$.

  In lazy mode, if $\durationIA<\Time$, then
  $H=\durationIA$. Otherwise, $H\geq \Time$.
\end{lemma}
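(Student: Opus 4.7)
The plan is to reduce the first two (structural) claims to lemmas already proved, and then handle the mode-specific claims by examining why $R_U$ must be disabled everywhere.

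First I would observe that a terminal configuration is in particular one where no node can apply any rule, so by Lemma~\ref{lem:clean} it is clean; the remark following Lemma~\ref{lem:E-path} then gives that no node is in error, so $p.s=C$ for every $p$. Since clean implies almost clean, Lemma~\ref{lem:stableConfiguration} yields $p.L[i]=\stateIA{p}{i}$ for every $1\leq i\leq \hei{p}$. Once height uniformity is proven, this immediately gives the second assertion of the lemma.

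To prove height uniformity, let $H=\min_{p\in V}\hei{p}$ and fix a node $p$ achieving this minimum. Being almost clean, every $q\in N(p)$ satisfies $\hei{q}\in\{H,H+1\}$. Because $p.s=C$ and $p$ is not enabled for $R_U$, the predicate $updatable(p)$ must be false. Inspecting its conjuncts, the only conjunct that can fail is $\hei{p}<\durationIA$ or (only in lazy mode) the final disjunction. In greedy mode the only possibility is therefore $H\geq \durationIA$; combined with $\hei{p}\leq \durationIA$ (which holds by construction since $p.L$ has at most $\durationIA$ elements) this forces $H=\durationIA$, and then the same analysis applied to every node gives $\hei{q}=\durationIA$ for every $q$. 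In particular, if $\durationIA=+\infty$, the condition $H\geq \durationIA$ is unsatisfiable, so no terminal configuration exists and the maximal execution is infinite.

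In lazy mode, an additional possibility for the failure of $updatable(p)$ is that $p.L[H]=\widehat\algo(p,H)$ and no neighbor of $p$ lies at height $H+1$; in that sub-case every neighbor of $p$ lies at height exactly $H$. Iterating the argument along neighbors (each neighbor is now itself a minimum-height node in the same situation) and using connectivity of $G$, one concludes that either $H=\durationIA$ throughout, or every node has $\hei{p}=H$ with $p.L[H]=\widehat\algo(p,H)$ everywhere. In the latter case, combining with $p.L[i]=\stateIA{p}{i}$ gives $\stateIA{p}{H+1}=\stateIA{p}{H}$ for every $p$, which by determinism of \algo means that \InitialAlg\ is already stable at round $H$, i.e., $H\geq \Time$. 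The two sub-cases split according to the relation between $\durationIA$ and $\Time$: if $\durationIA<\Time$ the stability sub-case would give $\Time\leq H\leq \durationIA<\Time$, a contradiction, so $H=\durationIA$; if $\durationIA\geq \Time$ both sub-cases give $H\geq \Time$.

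The main obstacle is the connectivity propagation in the lazy-mode argument: one must make sure that starting from a single minimum-height node whose neighbors happen to share its height, the same failure of $updatable$ recurs at every such neighbor, so the conclusion really extends from the local neighborhood to the entire graph rather than getting stuck at an intermediate layer.
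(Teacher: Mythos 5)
Your proof is correct, and its first part (terminal $\Rightarrow$ clean $\Rightarrow$ all statuses $C$, plus Lemma~\ref{lem:stableConfiguration}) coincides with the paper's; where you genuinely diverge is in how the common height $H$ is obtained and how the lazy case is settled. The paper disposes of height uniformity in one mode-independent stroke: if heights were not all equal, there would be two neighbors with $\hei{q}=\hei{p}+1$, and taking such a pair with $\hei{p}$ minimum makes $p$ satisfy $updatable(p)$ in \emph{both} modes, because the catch-up disjunct $\exists q\in N(p),\;\hei{q}>\hei{p}$ fires --- contradicting terminality; then, with $H$ fixed, the lazy claim follows from the definition of $\Time$ (if $H<\Time$, some node has $\stateIA{p}{H}\neq\stateIA{p}{H+1}$, so the algorithm-condition disjunct makes it updatable unless $H=\durationIA$). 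You instead start from a node of globally minimum height and ask which conjunct of $updatable$ fails: in greedy mode this forces $H\geq\durationIA$, and uniformity comes for free from the cap $\hei{q}\leq\durationIA$; in lazy mode you run a connectivity propagation (each neighbor is again a minimum-height node in the same situation, precisely because $H$ is the \emph{global} minimum, so the induction does not get stuck --- the obstacle you correctly flag) to conclude that either $H=\durationIA$ or every node sits at height $H$ with $p.L[H]=\widehat\algo(p,H)$, whence $H\geq\Time$ by determinism of \algo; this is the contrapositive of the paper's step. The paper's minimal-pair trick buys brevity and a single argument covering both modes; your propagation is heavier, but it delivers uniformity and the stability bound $H\geq\Time$ in one sweep, without needing to exhibit a node whose simulated state changes at round $H$.
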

\begin{proof}
  Since $\gamma$ is also clean, the fact that for any
  $1\leq i\leq \hei{p}$, $p.L[i]=\stateIA{p}{i}$ then follows from
  Lemma~\ref{lem:stableConfiguration}.

  In greedy mode, if $\durationIA=\infty$, then every execution is
  infinite. Indeed, if $\gamma$ is not clean, then some node can apply
  an error rule or the rule $R_C$, and if $\gamma$ is clean, then any
  node of minimum height can apply the rule $R_U$. Otherwise, we claim
  that all nodes have the same height.  If not, there exists two
  neighboring nodes $p$ and $q$ such that $\hei{q}=\hei{p}+1$.  Among
  those pairs, choose one with $\hei{p}$ minimum.  Since, $p.s=C$,
  $\hei{q}>\hei{p}$ and for any neighbor $r\in N(p)$,
  $\hei{r}\geq \hei{p}$, then regardless of $\lazyI$, $p$ can apply
  the rule $R_U$.  Let $H$ be this common height.

  In greedy mode, if $\durationIA<\infty$, then $H=\durationIA$. And
  in lazy mode, if $H<\Time$, then there exists $p$ be such that
  $\stateIA{p}{H}\neq\stateIA{p}{H+1}$, which is only possible if
  $H=\durationIA$.  \qed
\end{proof}

\section{$D$-paths}

Recall that a path $P=p_0p_1\cdots p_l$ in $G$ is \emph{decreasing} in
a configuration $\gamma$ if for each $0\leq i<l$,
$\hei{p_i}>\hei{p_{i+1}}$ and that $P$ is an \emph{$E$-path} if it is
decreasing, all its nodes are in error, and its last node is a root.

We extend these definition in the following way.  A path $P$ is
\emph{gently decreasing} if, for each $0\leq i<l$,
$\hei{p_i}=\hei{p_{i+1}}+1$, and it is a \emph{$D$-path} if it is
decreasing and there exists $0\leq j\leq l$ such that
\begin{itemize}
\item $P_C=p_0\cdots p_{j-1}$ is a (possibly empty) gently decreasing
  path of nodes in $C$,
\item $P_E=p_j\cdots p_l$ is an $E$-path.
\end{itemize}
We call $P_C$ and $P_E$ the \emph{correct} and \emph{error} parts of
$P$.

\begin{lemma}\label{lem:D-path_height}
  Let $\gamma^a\mapsto\gamma^b$ be a step, and let $P$ be a $D$-path
  in $\gamma^a$.  For any $p\in P$, $\hei[b]{p}\leq \hei[a]{p}$.
  Moreover, if $p\in P$ is such that $p.s^b=C$, then we have equality.
\end{lemma}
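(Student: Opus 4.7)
The approach is a case analysis on which rule (if any) $p$ executes during the step $\gamma^a \mapsto \gamma^b$. Among the four rules, three and the ``do nothing'' option never increase the height: $R_R$ resets $\hei{p}$ to $0$; $R_P(i)$ sets it to some $i$ strictly smaller than $\hei[a]{p}$ because the guard $errorPropag(p,i)$ demands $\hei[a]{q}<i<\hei[a]{p}$ for some neighbor $q$; $R_C$ only modifies the status; and inaction preserves everything. Only $R_U$ strictly increases $\hei{p}$ (by one). Hence the first inequality $\hei[b]{p}\leq\hei[a]{p}$ reduces to ruling out the application of $R_U$ at every $p\in P$.

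The main obstacle is this $R_U$ case. Its guard $updatable(p)$ requires both $p.s^a = C$ and that every neighbor $q$ satisfies $\hei[a]{q}\in\{\hei[a]{p},\hei[a]{p}+1\}$, in particular $\hei[a]{q}\geq \hei[a]{p}$. I split on whether $p$ lies in the error part $P_E$ or the correct part $P_C$ of $P$. If $p\in P_E$, then $p.s^a=E$ and the guard fails immediately. If $p\in P_C$, I use that the definition of a $D$-path forces $P_E$ to be non-empty (it contains at least the terminal root $p_l$), so $p$ is not the last node of $P$; its successor $p'$ in $P$ is therefore a neighbor with $\hei[a]{p'}<\hei[a]{p}$, since $P$ is decreasing, contradicting the neighbor condition of $updatable(p)$. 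Thus $R_U$ cannot fire at $p$.

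Combining the two paragraphs yields $\hei[b]{p}\leq\hei[a]{p}$. For the ``moreover'' clause, assume $p.s^b = C$ and inspect how status $C$ can be produced: either $p$ takes no action with $p.s^a=C$ (height preserved), $p$ applies $R_C$ (height preserved, since $R_C$ only changes the status), or $p$ applies $R_U$ (height incremented). Since the preceding argument rules out $R_U$ for any $p\in P$, only the two height-preserving cases remain, giving $\hei[b]{p}=\hei[a]{p}$ as required.
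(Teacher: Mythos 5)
Your proof is correct and takes essentially the same route as the paper's: both reduce the first claim to ruling out $R_U$ along $P$ (status $E$ blocks it on the error part, a strictly lower successor blocks it on the correct part, since $P$ is decreasing and $P_E$ is non-empty), and both obtain the equality by noting that the only height-decreasing rules are the error rules, which are incompatible with $p.s^b=C$. The only cosmetic difference is the case split (the paper distinguishes the last node from the others, you distinguish $P_E$ from $P_C$); the two underlying arguments are identical.
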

\begin{proof}
  Let $p\in P$.  Recall that the height of a node $p$ only increases
  if $p$ applies the rule $R_U$.
  \begin{itemize}
  \item If $p$ is the last node of $P$, then in $\gamma^a$, $p$ is a
    root such that $p.s^a=E$.  Thus $p$ cannot apply the rule $R_U$ in
    $\gamma^a\mapsto\gamma^b$.
  \item If $p$ is not the last node of $P$, let $q$ be the next node
    after $p$ on $P$.  Since $P$ is decreasing in $\gamma^a$,
    $\hei[a]{q}<\hei[a]{p}$, and $p$ cannot apply the rule $R_U$ in
    $\gamma^a\mapsto\gamma^b$.
  \end{itemize}
  The first part of the lemma follows.  Now if $p.s^b=C$, then $p$
  does not apply an error rule in $\gamma^a\mapsto\gamma^b$, and thus
  $\hei[b]{p}\geq \hei[a]{p}$, which complete the proof.  \qed
\end{proof}

\begin{lemma}\label{lem:D-pathGlop}
  Let $\gamma^a\mapsto\gamma^b$ be a step, let $P=p_0\cdots p_l$ be a
  decreasing path in~$\gamma^a$ such that
  \begin{itemize}
  \item apart from $p_l$ which satisfies $p_l.s^a=E$ and $p_l.s^b=C$,
    all the nodes of $P$ are in $C$ in both $\gamma^a$ and $\gamma^b$;
  \item in $\gamma^a$, $p_0\cdots p_{l-1}$ is gently decreasing.
  \end{itemize}
  Then $P$ is gently decreasing in $\gamma^b$.
\end{lemma}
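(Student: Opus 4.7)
The plan is to upgrade ``decreasing'' to ``gently decreasing'' already in $\gamma^a$, and then observe that no height along $P$ changes during the step. Once both facts are in place, the conclusion is immediate: the height sequence along $P$ in $\gamma^b$ equals the one in $\gamma^a$, which is gently decreasing.

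For the first part, I would exploit the single piece of information about $p_l$ that has not yet been used: its status flips from $E$ in $\gamma^a$ to $C$ in $\gamma^b$. Inspecting the four rules, the only one that changes $p.s$ from $E$ to $C$ is $R_C$, so $p_l$ must execute $R_C$ during $\gamma^a\mapsto\gamma^b$. Hence $canClearE(p_l)$ holds in $\gamma^a$, which in particular forces $|\hei[a]{q}-\hei[a]{p_l}|\leq 1$ for every neighbor $q$ of $p_l$. Applying this to $q=p_{l-1}$ and combining with the strict inequality $\hei[a]{p_{l-1}}>\hei[a]{p_l}$ coming from the decreasing hypothesis gives $\hei[a]{p_{l-1}}=\hei[a]{p_l}+1$. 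Together with the hypothesis that $p_0\cdots p_{l-1}$ is gently decreasing in $\gamma^a$, this shows that the whole path $P$ is gently decreasing in $\gamma^a$.

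For the second part, I would argue node by node that heights are preserved. The node $p_l$ executes $R_C$, whose action only updates $p.s$ and leaves the list (hence $\hei{p_l}$) untouched. For any $p_i$ with $i<l$, the assumption $p_i.s^a=p_i.s^b=C$ rules out $R_R$ and $R_P$ (which set the status to $E$) and of course $R_C$ (which requires $s=E$). The rule $R_U$ is also impossible in $\gamma^a$: the predicate $updatable(p_i)$ demands $\hei[a]{q}\in\{\hei[a]{p_i},\hei[a]{p_i}+1\}$ for every neighbor $q$, but the neighbor $p_{i+1}$ has $\hei[a]{p_{i+1}}<\hei[a]{p_i}$ by decrease. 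So $p_i$ is inactive in the step and $\hei[b]{p_i}=\hei[a]{p_i}$. The two parts together finish the proof.

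There is no real obstacle; the only subtlety is noticing that the $R_C$ move of $p_l$, which is what the hypotheses really encode, is powerful enough to upgrade the final edge of $P$ from merely ``decreasing'' to ``gently decreasing'' already in $\gamma^a$. Once this is seen, the rest is a direct case analysis on which rule each node of $P$ can execute in the step.
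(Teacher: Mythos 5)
Your proof is correct and follows essentially the same route as the paper's: observe that $p_l$ must execute $R_C$ (the only rule turning $E$ into $C$), use its guard to force $\hei[a]{p_{l-1}}=\hei[a]{p_l}+1$ and hence gentleness already in $\gamma^a$, and check that no height along $P$ changes during the step. The only cosmetic difference is that you establish height preservation for $p_0,\dots,p_{l-1}$ by a direct rule-by-rule analysis, whereas the paper cites Lemma~\ref{lem:D-path_height} for this; the underlying argument is identical.
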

\begin{proof}
  The assumptions imply that $p_l$ applies the rule $R_C$ in the step
  $\gamma^a\mapsto\gamma^b$.  Thus $\hei[b]{p_l} = \hei[a]{p_l}$.
  Moreover, by Lemma~\ref{lem:D-path_height}, for any $0\leq i < l$,
  $\hei[b]{p_i} = \hei[a]{p_i}$.

  Since $p_l$ applies the rule $R_C$, we have
  $\hei[a]{p_{l-1}}\leq \hei[a]{p_l}+1$.  Together with the fact that
  $P$ is decreasing in $\gamma^a$, we obtain
  $\hei[a]{p_{l-1}}= \hei[a]{p_l}+1$.  The beginning of the path is
  gently decreasing by hypothesis, so $P$ is gently decreasing in
  $\gamma^a$.  Finally, since we have shown that the height of each of
  its nodes is the same in $\gamma^a$ and $\gamma^b$, the lemma
  follows.  \qed
\end{proof}

\begin{lemma}\label{lem:D-pathTrans}
  Let $\gamma^a\mapsto\gamma^b$ be a step, let $p$ be the first node
  of a $D$-path $P$ in~$\gamma^a$.  If at least one node of $P$ is in
  error in $\gamma^b$, then $p$ is the first node of a $D$-path in
  $\gamma^b$.
\end{lemma}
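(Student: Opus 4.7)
The plan is to let $k$ be the smallest index in $\{0,1,\dots,l\}$ such that $p_k$ is in error in $\gamma^b$; the hypothesis of the lemma guarantees that such a $k$ exists. If $k=0$, then $p_0$ itself is in error in $\gamma^b$, and a direct application of Lemma~\ref{lem:E-path} yields an $E$-path starting at $p_0$, which is a $D$-path (with empty correct part). So the only real work is the case $k\geq 1$.

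For $k\geq 1$, I will exhibit a $D$-path in $\gamma^b$ by concatenating a prefix $Q_C=p_0\cdots p_{k-1}$ (intended to be the gently decreasing correct part) with an $E$-path $Q_E$ starting at $p_k$. The existence of $Q_E$ is immediate from Lemma~\ref{lem:E-path} applied to $p_k$, which is in error in $\gamma^b$ by the definition of $k$. By minimality of $k$, every $p_i$ with $i<k$ is in state $C$ in $\gamma^b$, so $Q_C$ is already a path of correct nodes; it remains to show that $Q_C$ is gently decreasing in $\gamma^b$ and that $\hei[b]{p_{k-1}}>\hei[b]{p_k}$.

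The first step is to show that heights are preserved along $Q_C$, i.e., $\hei[b]{p_i}=\hei[a]{p_i}$ for $0\leq i<k$. If $i<j$, then $p_i$ is in $C$ in both $\gamma^a$ and $\gamma^b$, so Lemma~\ref{lem:D-path_height} applies. If $j\leq i<k$, then $p_i$ switched from $E$ to $C$, i.e., executed $R_C$ in $\gamma^a\mapsto\gamma^b$; since $R_C$ does not touch $p_i.L$, the same conclusion holds. The second step is to verify gentle decrease of $Q_C$ at every consecutive pair $p_i,p_{i+1}$, $i+1\leq k-1$. I split into three sub-cases: if $i+1<j$, the pair was gently decreasing in $\gamma^a$ as part of $P_C$; if $i=j-1$ and $j<k$, then $p_j$ executed $R_C$, and the guard of $R_C$ evaluated at $\gamma^a$ imposes $|\hei[a]{p_{j-1}}-\hei[a]{p_j}|\leq 1$, which combined with the strict decrease of $P$ gives $\hei[a]{p_{j-1}}=\hei[a]{p_j}+1$; if $i\geq j$, the same argument applied to $p_i$'s $R_C$-guard (with neighbor $p_{i+1}$) gives $\hei[a]{p_i}=\hei[a]{p_{i+1}}+1$. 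In every sub-case the equality transfers to $\gamma^b$ by the height preservation just established. For the final splice, $\hei[b]{p_{k-1}}=\hei[a]{p_{k-1}}>\hei[a]{p_k}\geq \hei[b]{p_k}$ by preservation, the strict decrease of $P$ in $\gamma^a$, and Lemma~\ref{lem:D-path_height}.

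The step I expect to be the main obstacle is the middle sub-case above, where nodes originally in the error part of $P$ switched to $C$: here the originally decreasing (but not gently decreasing) link between $p_{i}$ and $p_{i+1}$ must be upgraded to a gently decreasing link in $\gamma^b$. The guard of $R_C$, and specifically its height constraint $|\hei{q}-\hei{p}|\leq 1$, is the key and only tool that makes this upgrade possible; without it, concatenating the surviving correct prefix with the new $E$-path would not yield a bona fide $D$-path. Note that Lemma~\ref{lem:D-pathGlop} could be applied iteratively to absorb these freshly cleared nodes one by one, giving an alternative packaging of the same argument.
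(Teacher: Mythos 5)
Your proof is correct and takes essentially the same route as the paper's: both pick the minimal index $k$ of a node of $P$ in error in $\gamma^b$, splice the surviving correct prefix with an $E$-path supplied by Lemma~\ref{lem:E-path}, and combine Lemma~\ref{lem:D-path_height} with the height constraint in the $R_C$ guard (which is exactly the content of Lemma~\ref{lem:D-pathGlop}) to show that the prefix remains gently decreasing. The only cosmetic differences are that you inline the guard argument pairwise instead of invoking Lemma~\ref{lem:D-pathGlop} wholesale, and your sub-case $i\geq j$ is vacuous, since only the first node in error of $P$ can execute $R_C$ during the step.
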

\begin{proof}
  Let $P=p_0\cdots p_l$ be a $D$-path in $\gamma^a$ and let $p=p_0$.
  Assume that $P$ contains at least one node in error in $\gamma^b$,
  and let $0\leq i\leq l$ be minimal such that $p_i.s^b=E$.

  Let $P'$ be the possibly empty path $p_0\cdots p_{i-1}$.  Since
  $p_i.s^b=E$, there exists a $E$-path $Q=p_iq_1\cdots q_h$ in
  $\gamma^b$, by Lemma~\ref{lem:E-path}.  We now claim that
  $P''=p_0\cdots p_iq_1\cdots q_h$ is a $D$-path in $\gamma^b$ whose
  first node is $p$.

  We first prove that $P''$ is decreasing.  Indeed, by
  Lemma~\ref{lem:D-path_height}, $\hei[b]{p_i}\leq \hei[a]{p_i}$ and,
  for $0\leq j < i$, $\hei[b]{p_j} = \hei[a]{p_j}$.  Since $P$ is
  decreasing in $\gamma^a$, the subpath $p_0\cdots p_i$ is also
  decreasing in~$\gamma^b$.  Now $p_iq_1\cdots q_h$ is an $E$-path and
  is thus also decreasing which implies that so is $P''$.

  To finish the proof, we must show that $P'$ is gently decreasing in
  $\gamma^b$.  First, if $P'$ is empty, we are done. Assume now that
  $P'$ is not empty.  In~$P$, only one node can apply the rule $R_C$
  in $\gamma^a\mapsto\gamma^b$: its first node in error in~$\gamma^a$.
  Now, since $p_i$ is the first node in error of $P$ in~$\gamma^b$, we
  have case: either $p_{i-1}.s^a=E$ and $p_{i-1}$ executes $R_C$ in
  $\gamma^a\mapsto\gamma^b$, or $p_{i-1}.s^a=C$ and does not execute
  $R_C$ in $\gamma^a\mapsto\gamma^b$.
  \begin{itemize}
  \item Suppose that $p_{i-1}.s^a=E$.  The path $P'$ is then gently
    decreasing in $\gamma^b$ by Lemma~\ref{lem:D-pathGlop}.
  \item Suppose that $p_{i-1}.s^a=C$.  Since $P$ is a $D$-path
    in~$\gamma^a$, $P'$ is gently decreasing in~$\gamma^a$.  Now, we
    have already shown that every correct node $q$ of $P'$ satisfies
    $\hei[a]{q}=\hei[b]{q}$.  The path $P'$ is thus gently decreasing
    in $\gamma^b$.  \qed
  \end{itemize}
\end{proof}

\begin{lemma}\label{lem:languageInSegment}
  Let $\gamma^a\mapsto\gamma^b$ be a step, let $p$ be the first node
  of a $D$-path, and let $r$ be its root in $\gamma^a$.  If $r$ is a
  root in $\gamma^b$, then $p$ is the first node of a $D$-path
  in~$\gamma^b$.
\end{lemma}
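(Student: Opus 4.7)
The plan is to reduce the statement to Lemma~\ref{lem:D-pathTrans}, which already handles the heart of the argument: if any node of a $D$-path in $\gamma^a$ is in error in $\gamma^b$, then its first node is still the first node of a $D$-path in $\gamma^b$. So the whole task becomes showing that the given root $r$ remains in error in $\gamma^b$.

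First I would observe that, since $P$ is a $D$-path in $\gamma^a$ and $r$ is its (last) root, $r$ lies on the $E$-path suffix $P_E$, so $r.s^a = E$. Next, I would note that the only rule that can change the status of a node from $E$ to $C$ is $R_C$, since $R_R$ and $R_P(i)$ both set the status to $E$ and $R_U$ leaves it unchanged. Therefore, either $r.s^b = E$ or $r$ executed $R_C$ during $\gamma^a \mapsto \gamma^b$.

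The second alternative is ruled out by Lemma~\ref{lem:root_RC}: if a root in $\gamma^a$ applies $R_C$ in the step, then it is no longer a root in $\gamma^b$, contradicting the hypothesis that $r$ remains a root in $\gamma^b$. Hence $r.s^b = E$, so $r$ is a node of $P$ that is in error in $\gamma^b$.

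At this point the hypothesis of Lemma~\ref{lem:D-pathTrans} is satisfied, and applying it yields that $p$ is still the first node of a $D$-path in $\gamma^b$, as required. The only potential subtlety in writing this up cleanly is making sure the status-transition case analysis is airtight; once that is done the proof is essentially immediate, and there is no real technical obstacle — all of the combinatorial work has been absorbed into the earlier lemmas.
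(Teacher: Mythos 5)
Your proof is correct and follows essentially the same route as the paper's: rule out an $R_C$ execution by $r$ via Lemma~\ref{lem:root_RC} (using the hypothesis that $r$ is still a root in $\gamma^b$), conclude that $r$ remains in error, and invoke Lemma~\ref{lem:D-pathTrans}. Your write-up merely makes explicit two facts the paper leaves implicit, namely that $r.s^a=E$ because $r$ ends the $E$-path suffix of $P$, and that no rule other than $R_C$ can switch a status from $E$ to $C$.
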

\begin{proof}
  If $r$ applies $R_C$ during $\gamma^a\mapsto\gamma^b$, then
  Lemma~\ref{lem:root_RC} implies that $r$ is not a root in
  $\gamma^b$, which is a contradiction.  Thus $r$ is in error in
  $\gamma^b$, and the Lemma follows from Lemma~\ref{lem:D-pathTrans}.
  \qed
\end{proof}

\begin{lemma}\label{lem:2theGround}
  Let $\gamma^a\mapsto\gamma^b$ be a step, and let $p$ be the first
  node of a $D$-path in~$\gamma^a$.  If no $D$-path in $\gamma^b$
  contains $p$, then $\hei[b]{p}\leq n$.
\end{lemma}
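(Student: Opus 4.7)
My approach is to combine Lemma~\ref{lem:D-pathTrans} with a careful guard analysis of the rule $R_C$. Writing the $D$-path in $\gamma^a$ as $P=p_0\cdots p_l$ with $p=p_0$, and its error part as $P_E=p_j\cdots p_l$, the hypothesis together with Lemma~\ref{lem:D-pathTrans} implies that no node of $P$ is in error in $\gamma^b$ (otherwise $p$ would still be the first node of a $D$-path). In particular, every node of $P_E$ transitions from $E$ to $C$ during the step, which, by inspection of the rules, can only happen via $R_C$.

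From this I extract all the heights in $\gamma^a$. Applying Lemma~\ref{lem:root_RC} to the root $p_l$, which executes $R_C$, yields $\hei[a]{p_l}=0$. For each $j\le i<l$, the guard $canClearE(p_i)$ holds in $\gamma^a$, so in particular $|\hei[a]{p_i}-\hei[a]{p_{i+1}}|\le 1$; combined with the strict decrease of $P$, this forces $\hei[a]{p_{i+1}}=\hei[a]{p_i}-1$, and an immediate induction gives $\hei[a]{p_i}=l-i$ for all $j\le i\le l$. If $j\ge 1$, the same guard $canClearE(p_j)$ applied at the correct neighbor $p_{j-1}$ gives $\hei[a]{p_{j-1}}\le l-j+1$, and decreasingness forces equality. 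The gently decreasing property of $P_C$ then propagates the relation $\hei[a]{p_i}=l-i$ throughout $P_C$ as well.

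Consequently $\hei[a]{p_0}=l$. Since $P$ is strictly decreasing, its $l+1$ nodes have pairwise distinct heights, hence are pairwise distinct vertices, so $l+1\le n$. Since $p_0$ is correct in $\gamma^b$, Lemma~\ref{lem:D-path_height} yields $\hei[b]{p_0}=\hei[a]{p_0}=l\le n-1\le n$, which is in fact slightly stronger than the claim. The only nontrivial step is the height analysis along $P_E$: once one notices that $R_C$'s guard $canClearE$ must hold at every node of $P_E$ and pins the heights down to $l-i$, the bound on $\hei[b]{p}$ falls out of the definition of $D$-paths and the cited lemmas.
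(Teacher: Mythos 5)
Your proof is correct, and it reaches the paper's conclusion by a recognizably similar but not identical route. The paper first observes that at most one node of a $D$-path can execute $R_C$ in a single step; since (as you both establish via Lemma~\ref{lem:D-pathTrans}) every node of $P$ that is in error in $\gamma^a$ must switch to $C$, this immediately collapses the error part to the root $r$ alone. It then invokes Lemma~\ref{lem:D-pathGlop} to conclude that $P$ is gently decreasing in $\gamma^b$, so that $\hei[b]{p}=length(P)+\hei[b]{r}$, and finishes with Lemma~\ref{lem:root_RC} giving $\hei[b]{r}=0$. You bypass both the singleton-error-part observation and Lemma~\ref{lem:D-pathGlop}: instead you read the height constraints directly off the $canClearE$ guards of \emph{all} the nodes of $P_E$ (each of which must fire $R_C$), anchor the induction with $\hei[a]{p_l}=0$ from Lemma~\ref{lem:root_RC}, propagate through $P_C$ using gentle decrease, and only at the very end transfer to $\gamma^b$ via Lemma~\ref{lem:D-path_height}. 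What the paper's route buys is reuse of machinery (Lemma~\ref{lem:D-pathGlop} is already proved for Lemma~\ref{lem:D-pathTrans}) and no explicit height computation; what yours buys is a more self-contained argument and the marginally sharper bound $\hei[b]{p}\leq n-1$. One remark worth adding to your write-up: when $P_E$ has two or more nodes, the premises you manipulate are in fact inconsistent --- $canClearE$ cannot hold at two consecutive error nodes, since the downhill one would need its uphill neighbor to have status $C$ --- so all such cases are vacuous. Your implications remain sound (you only use the height clause of each guard, which is a consistent fragment), but noting this would make explicit that the only realizable situation is exactly the one the paper isolates: the error part is the root alone.
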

\begin{proof}
  Let $p$ be the first node of a $D$-path $P$, and let $r$ be the root
  of $P$ in~$\gamma^a$.

  We claim that, in $\gamma^b$, $P$ contains no node in error.
  Indeed, otherwise Lemma~\ref{lem:D-pathTrans} implies that $p$ is
  the first node of a $D$-path in $\gamma^b$, which is a
  contradiction.

  Since, in a $D$-path, at most one node can apply the rule $R_C$
  during a step then, in $\gamma^a$, all the nodes of $P$ but $r$ have
  status $C$.  We can thus apply Lemma~\ref{lem:D-pathGlop}, and
  obtain that $P$ is gently decreasing in $\gamma^b$, and thus
  $\hei[b]{p}=lenght(P)+\hei[b]{r}$.  And since no node can appear
  twice in $P$, $\hei[b]{p} \leq \hei[b]{r}+n$.

  Now since $r.s^b=C$, $r$ applies the rule $R_C$ in
  $\gamma^a\mapsto\gamma^b$.  But then Lemma~\ref{lem:root_RC} implies
  that $\hei[a]{r}=0$, and thus $\hei[b]{r}=0$.  The lemma follows.
  \qed
\end{proof}

\section{Moves complexity}

In this section, we analyze the move complexity of our algorithm. To
that goal, we fix an execution $e=\gamma^0\gamma^1\cdots$ and study
the rules a given node applies in.  Since, these rules do not appear
explicitly in an execution, we propose to use a proxy for them.

A pair $(p, i)$ is a \emph{move} if $p$ applies a rule in
$\gamma^i\mapsto\gamma^{i+1}$.  This move is a \emph{$U$-move} if the
rule is $R_U$, a \emph{$C$-move} if the rule is $R_C$, a
\emph{$R$-move} if the rule is $R_R$, and a \emph{$P(i)$-move} if the
rule is $R_P(i)$.  Since a node $p$ applies at most one rule in a
given step, the number of steps in which a given node applies a rule
is the number of its moves, and the number of steps is bounded by the
total number of moves.

\subsection{$R$-moves}

\begin{lemma}\label{lem:nbRmoves}
  During an execution, there are at most $n$ $R$-moves.
\end{lemma}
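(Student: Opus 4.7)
The plan is to show that each node applies the rule $R_R$ at most once, which immediately gives the bound of $n$. I would first examine what happens to a node $p$ immediately after it applies $R_R$: the action sets $\hei{p}:=0$ and $p.s:=E$. In this state, $p$ is still a root, since $algoError(p)$ is vacuously false (it quantifies over $1\leq i\leq \hei{p}=0$) while $dependencyError(p)$ holds trivially because no neighbor $q$ can satisfy $\hei{q}<\hei{p}=0$.

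Next I would prove that after this first $R_R$ application, $p$ stays stuck with $\hei{p}=0$ and $p.s=E$ until (if ever) it applies $R_C$. This requires checking the other rules: $p.s$ can only transition from $E$ to $C$ through $R_C$; $\hei{p}$ can grow only through $R_U$, whose guard requires $p.s=C$; and $R_P(i)$ requires $\hei{q}<i<\hei{p}$ for some neighbor $q$, which is impossible when $\hei{p}=0$. So without applying $R_C$, the guard $(\hei{p}>0\vee p.s=C)$ of $R_R$ stays false forever.

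The last step combines Lemma~\ref{lem:root_RC} and Lemma~\ref{lem:noRootCreation}: if $p$ does apply $R_C$ at some later step, then since $p$ is a root at that step (as $p$ has been a root since the original $R_R$), Lemma~\ref{lem:root_RC} asserts that $p$ is no longer a root in the resulting configuration. Then Lemma~\ref{lem:noRootCreation} (applied inductively) guarantees $p$ is never a root again, so the guard $root(p)$ of $R_R$ is false from that point on. In either branch, $p$ performs no further $R$-move. Summing over all nodes yields at most $n$ $R$-moves in total.

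The conceptually delicate part is the exhaustive case analysis showing that $p$ cannot escape the state $(\hei{p}=0,\,p.s=E)$ by any rule other than $R_C$; the rest is a straightforward invocation of the two earlier lemmas. There is no tricky potential argument or amortization — the proof is essentially a per-node observation enabled by the monotonicity of the set of roots established in Lemma~\ref{lem:noRootCreation}.
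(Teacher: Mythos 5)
Your proof is correct and follows essentially the same route as the paper: after the first $R_R$, the node is frozen at $(\hei{p}=0,\,p.s=E)$ so its only possible next move is $R_C$, and then Lemma~\ref{lem:root_RC} combined with Lemma~\ref{lem:noRootCreation} shows it can never be a root (hence never apply $R_R$) again. Your explicit check that a node with $\hei{p}=0$ and $p.s=E$ remains a root via $dependencyError$ (independently of its neighbors) is a detail the paper leaves implicit, but it is the same argument.
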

\begin{proof}
  Let $p$ be a node. We claim that $p$ can apply the rule $R_R$ at
  most once.  We have three cases.
  \begin{itemize}
  \item If $p$ executes no $R$-move, it executes at most one $R$-move.

  \item If $p$ executes a $R$-move and no move after the first
    $R$-move, then $p$ executes only one $R$-move.

  \item Otherwise, let $(p,i)$ be the first $R$-move, and let $(p,j)$
    be the first move which follows.

    Since $(p,i)$ is a $R$-move, $p.s^{i+1}=E$ and $\hei[i+1]{p}=0$,
    and since $p$ applies no rule between $\gamma^{i+1}$ and
    $\gamma^j$, $p.s^j=E$ and $\hei[j]{p}=0$.  Consequently, $(p,j)$
    is necessarily a $C$-move, and thus, by Lemma~\ref{lem:root_RC},
    $p$ is not a root in
    $\gamma^{j+1}$. Lemma~\ref{lem:noRootCreation} then implies that
    $p$ is a root in no $\gamma^h$ for $h>j$, and thus no $(p, h)$ is
    an $R$-move for $h>j$.  Hence, $p$ executes only one $R$-move in
    this case.
  \end{itemize}
  Hence, in all cases, $p$ makes at most one $R$-move. The Lemma
  follows.  \qed
\end{proof}

\subsection{$U$-move}

Let $S_i$ be the set of roots in $\gamma^i$.
Lemma~\ref{lem:noRootCreation} states that for each $i>0$,
$S_{i}\subseteq S_{i-1}$.  Since $\gamma^0$ contains at most $n$
roots, there are $l\leq n$ steps $\gamma^{i-1}\mapsto\gamma^{i}$ for
which $S_{i}\subset S_{i-1}$. Let $r_1$, $r_2, \ldots, r_l$ be the
sequence of increasing indices such that for all
$i \in [1..l], S_{r_i}\subset S_{r_i-1}$. This sequence gives the
following \emph{decomposition} of $e$ into segments.

\begin{itemize}
\item The \emph{first segment} is the sequence
  $\gamma^0\cdots\gamma^{r_1}$.
\item For $1<i\leq l$ the \emph{$i$-th segment} is the sequence
  $\gamma^{r_{i-1}}\cdots\gamma^{r_i}$.
\item The \emph{last segment} is the sequence $\gamma^{r_l}\cdots$.
\end{itemize}

At this point in the proof, it is not obvious wether $\gamma^{r_i}$
should be clean or not. But what is clear is that because of
Lemma~\ref{lem:noRootCreation}, if some $\gamma^{r_i}$ is clean, then
so are all the other configuration in the segment Moreover, there is
at most one clean segment, and if it exists, it must be the last
segment.

The key Lemma to bound the number of $U$-moves is the following Lemma.

\begin{lemma}\label{lem:boundedIncreaseInSegment}
  In a segment, the number of times that a node applies the rule $R_U$
  is either infinite of is equal to the maximum of
  $\hei[j]{p}-\hei[i]{p}$ with $i<j$ in the segment.
\end{lemma}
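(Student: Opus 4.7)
The plan is to show both inequalities $k\le M$ and $M\le k$, where $k$ is the (finite) number of $R_U$ moves of $p$ in the segment $\sigma$ and $M:=\max_{i<j}(\hei[j]{p}-\hei[i]{p})$ over indices $i<j$ in $\sigma$.

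The direction $M\le k$ is immediate: only $R_U$ moves raise $\hei{p}$ (by exactly one), while $R_C$ leaves it unchanged and $R_R,R_P$ strictly decrease it. Hence for any $i<j$ in $\sigma$, $\hei[j]{p}-\hei[i]{p}$ is bounded by the number of $R_U$ moves of $p$ between times $i$ and $j$, and thus by $k$.

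For $M\ge k$, I would establish the following structural claim: if at some index $t^*$ of $\sigma$ we have $p.s=E$, then $p$ applies no $R_U$ move at any index $t\ge t^*$ of $\sigma$. Granted the claim, let $t^*$ be the first index of $\sigma$ where $p.s=E$ (or the last index of $\sigma$ if $p.s$ stays $C$ throughout). All $k$ of $p$'s $R_U$ moves then occur strictly before $t^*$, throughout which $p.s$ is constantly $C$. On this prefix, $p$ cannot apply $R_C$ (whose guard requires $p.s=E$) nor $R_P$ (which would set $p.s=E$); nor can $p$ apply $R_R$, since otherwise $p$ would be a root at that moment, and then by Lemma~\ref{lem:noRootCreation} together with the constancy of the root set inside a segment, $p$ would remain a root throughout the interior of $\sigma$, contradicting that $p$ applies $R_U$ later (which requires $p$ not to be a root, as the guard of $R_U$ rules out $dependencyError(p)$ and the priority of $R_R$ over $R_U$ rules out $algoError(p)$). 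Consequently, $p$'s only moves on this prefix are $R_U$'s, so $\hei{p}$ is non-decreasing there and rises by exactly $k$; taking $i$ at the first index of $\sigma$ and $j$ just after $p$'s last $R_U$, we obtain $\hei[j]{p}-\hei[i]{p}=k$.

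The main obstacle is proving the structural claim. I would proceed by contradiction: assume $p.s=E$ at some index $t^*$ and $p$ applies $R_U$ at some $t_u\ge t^*$. For $R_U$ to fire at $t_u$, every $q\in N(p)$ satisfies $\hei[t_u]{q}\in\{\hei[t_u]{p},\hei[t_u]{p}+1\}$. By Lemma~\ref{lem:E-path}, at time $t^*$ the in-error node $p$ lies on an $E$-path ending at a root $r$; by the segment-invariance of the root set together with Lemma~\ref{lem:root_RC}, $r$ remains a root and hence stays in status $E$ throughout $\sigma$, so $\hei{r}$ cannot increase beyond its value at $t^*$ (only $R_U$ raises height, and $R_U$ requires $s=C$). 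The crux is then to propagate this ``anchored low root'' obstruction upward along the $E$-path — and to track how the path may rearrange as intermediate nodes apply $R_C$ — to conclude that some neighbor of $p$ cannot reach height $\hei[t_u]{p}$ by time $t_u$, contradicting the applicability of $R_U$. Lemmas~\ref{lem:D-pathTrans} and~\ref{lem:2theGround}, describing how $D$-paths evolve across steps, are the natural technical tools for carrying out this propagation.
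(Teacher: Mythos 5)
Your skeleton is sound and in fact coincides with the paper's own strategy: the inequality $M\le k$ is immediate because only $R_U$ increases $\hei{p}$, and the whole difficulty is concentrated in your structural claim --- once $p$ is in error inside a segment, it performs no further $U$-move in that segment --- which is precisely the pivot of the paper's proof as well (stated there as: after $p$ executes an error rule, it executes no $R_U$ until the end of the segment). Your derivation of $M\ge k$ from that claim is correct, up to the harmless boundary case where $p$ executes an error rule at the very step leading into $t^*$; this does not affect the interval you use, from the first index of the segment to just after the last $U$-move, on which all of $p$'s moves are indeed $U$-moves.

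The genuine gap is that the structural claim is never proved: you explicitly leave the propagation of the ``anchored low root'' obstruction along the $E$-path as an open crux, and that propagation \emph{is} the nontrivial content of the lemma. Moreover, the tools you reach for are not quite the right ones: Lemma~\ref{lem:2theGround} plays no role here (it serves the $P$-move count), and working directly from Lemma~\ref{lem:D-pathTrans} forces you to re-track how the path rearranges when intermediate nodes execute $R_C$, which is exactly the bookkeeping you could not finish. The paper closes the gap in two short steps using lemmas you did not invoke. First, Lemma~\ref{lem:languageInSegment} packages the propagation: if $p$ heads a $D$-path whose root is still a root after a step, then $p$ still heads a $D$-path after that step; since, by Lemma~\ref{lem:noRootCreation} and the very definition of the segment decomposition, the set of roots does not shrink strictly inside a segment, this persists by induction over all configurations of the segment that follow $t^*$. (Note that the proof of Lemma~\ref{lem:languageInSegment} is exactly your sketched argument: the root cannot execute $R_C$ without ceasing to be a root by Lemma~\ref{lem:root_RC}, hence stays in error, and then Lemma~\ref{lem:D-pathTrans} applies.) Second, Lemma~\ref{lem:D-path_height} converts $D$-path membership into the blocking you wanted: a node on a $D$-path cannot increase its height in a step, because either it is the root (in error, so $R_U$ is disabled) or its successor on the path is a neighbor of strictly smaller height, falsifying the guard of $R_U$. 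With these two citations your claim follows in three lines and your proof becomes complete; without them, the central step of your argument remains a plan rather than a proof.
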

\begin{proof}
  Since $\hei{p}$ increases by one each time that $p$ applies the rule
  $R_U$, if $p$ applies this rule a finite number of times, then the
  maximum of $\hei[j]{p}-\hei[i]{p}$ with $i<j$ in the segment is
  bounded.

  Now if $p$ applies an error rule, then it becomes in error.  By
  Lemma~\ref{lem:E-path}, $p$ is then the first node of an $E$-path,
  and thus of a $D$-path, by definition.
  Lemma~\ref{lem:languageInSegment} then implies that $p$ remains in a
  $D$-path until the end of the segment.
  Lemma~\ref{lem:D-path_height} finally implies that $p$ does not
  apply the rule $R_U$ until the end of the segment.

  Since $p.L$ decreases only when $p$ executes an error rule, all this
  implies that the number of times that $p$ applies the rule $R_U$ is
  equal to the maximum of $\hei[j]{p}-\hei[i]{p}$ with $i<j$ in the
  segment.  \qed
\end{proof}

To bound the number of $U$-moves, it is thus enough to bound
$\hei[j]{p}-\hei[i]{p}$ in a segment.

\begin{lemma}\label{lem:2D+1}
  If $i<j$ and $p$ satisfy $\hei[j]{p}> \hei[i]{p}+2D$, then for any
  $q$, there exists $i\leq h<j$ such that $\hei[h]{q}=\hei[i]{p}+D$
  and $(q, h)$ is a $U$-move.
\end{lemma}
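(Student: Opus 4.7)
My plan is to prove the lemma by induction on $d$, the distance from $p$ to $q$ in the graph. To make the induction go through, I would strengthen the statement as follows: for every $d \in \{0, 1, \ldots, D\}$, every node $q$ at distance $d$ from $p$, and every integer $m$ with $\hei[i]{p} + d \leq m \leq \hei[j]{p} - 1 - d$, there exists a time $h_m^q \in [i, j)$ such that $(q, h_m^q)$ is a $U$-move with $\hei[h_m^q]{q} = m$; moreover, the family $(h_m^q)_m$ is strictly increasing in $m$. Applied to $d$ equal to the distance from $p$ to $q$ (which is at most $D$) and $m = \hei[i]{p} + D$, this immediately yields the lemma, since the hypothesis $\hei[j]{p} > \hei[i]{p} + 2D$ places $\hei[i]{p} + D$ in the allowed range.

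For the base case $d = 0$ (so $q = p$), I would define $h_m^p$ by letting $h_m^p + 1$ be the first time $t \in [i, j]$ at which $\hei[t]{p} \geq m + 1$. Such a $t$ exists because $\hei[j]{p} \geq m + 1$, and is strictly greater than $i$ because $\hei[i]{p} \leq m$. Since $R_U$ is the only rule that raises a node's height, and does so by exactly $1$, the step $h_m^p \to h_m^p + 1$ must be a $U$-move of $p$ at height $m$. Strict monotonicity of $(h_m^p)_m$ is immediate from the ``first time'' definition.

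For the inductive step from $d$ to $d + 1$, I would pick a neighbor $q'$ of $q$ at distance $d$ on a shortest $p$-$q$ path. For $m$ in the range for $d + 1$, both $m - 1$ and $m + 1$ lie in the range for $d$, so by induction there exist $U$-moves $(q', h_{m-1}^{q'})$ and $(q', h_{m+1}^{q'})$ with $h_{m-1}^{q'} < h_{m+1}^{q'}$. The guard of $R_U$ applied at $q'$ forces $\hei[h_{m-1}^{q'}]{q} \leq m$ and $\hei[h_{m+1}^{q'}]{q} \geq m + 1$, so $q$'s height passes from $\leq m$ to $\geq m + 1$ during $[h_{m-1}^{q'}, h_{m+1}^{q'}]$, and since only $R_U$ raises the height (by exactly $1$), $q$ must execute a $U$-move at height $m$ somewhere in this interval. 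I would set $h_m^q + 1$ equal to the first time in $(h_{m-1}^{q'}, h_{m+1}^{q'}]$ at which $\hei{q} \geq m + 1$, which pinpoints such a $U$-move.

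The main obstacle is preserving the strict monotonicity of $(h_m^q)_m$ through the induction, because for consecutive values of $m$ the defining intervals $[h_{m-1}^{q'}, h_{m+1}^{q'}]$ and $[h_m^{q'}, h_{m+2}^{q'}]$ overlap. A short case analysis handles this: if $m' \geq m + 2$, the inductive strict ordering of the $h^{q'}$'s gives $h_{m+1}^{q'} \leq h_{m' - 1}^{q'}$ and the conclusion follows directly from the containments; if $m' = m + 1$, the ``first time'' definition together with $\hei[h_m^q + 1]{q} = m + 1 < m + 2 = \hei[h_{m'}^q + 1]{q}$ rules out equality, and $h_m^q > h_{m'}^q$ is impossible because $h_{m'}^q + 1$ would then be an earlier candidate in the defining interval of $h_m^q + 1$ (having height $m + 2 \geq m + 1$), contradicting the minimality that defines $h_m^q + 1$.
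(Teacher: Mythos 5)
Your proof is correct, and it rests on the same two pillars as the paper's own proof: an induction on the distance $d(p,q)$ in which the usable height window shrinks by one on each side per hop, and the observation that the guard of $R_U$ pins every neighbour of a node executing $R_U$ to within $\{h,h+1\}$ of that node's height $h$, combined with the fact that only $R_U$ increases heights, and by exactly one. The difference lies in the invariant carried through the induction. The paper's is weaker and leaner: for each $q$ at distance $d$ there merely exist times $i\leq i'<j'\leq j$ with $\hei[i']{q}\leq \hei[i]{p}+d$ and $\hei[j']{q}\geq \hei[j]{p}-d$, the two witnesses being times of $U$-moves of the parent on a shortest path; the extraction of a $U$-move of $q$ at exactly the level $\hei[i]{p}+D$ --- your ``height passes from $\leq m$ to $\geq m+1$, so some $U$-move occurs at exactly $m$'' step --- is then performed once, at the very end, for the target level only. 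You instead carry a per-level invariant: one $U$-move of $q$ for every level $m$ of the window, with witness times strictly increasing in $m$. This is genuinely stronger (it yields a $U$-move at every intermediate level, not just at $\hei[i]{p}+D$), and the monotonicity is not a luxury: you need $h_{m-1}^{q'}<h_{m+1}^{q'}$ for the sandwich in your inductive step to be temporally ordered, which is why you must pay for it with the case analysis $m'\geq m+2$ versus $m'=m+1$ (and note that ruling out $h_{m}^{q}>h_{m+1}^{q}$ quietly uses $h_{m}^{q'}>h_{m-1}^{q'}$ to place the competing candidate inside the defining interval --- this works, but deserves to be said). The paper's two-point invariant avoids that bookkeeping entirely, at the price of proving slightly less; since the downstream use (Lemma~\ref{lem:boundedIncreaseWithRoot}) needs only the single-level statement, the extra strength of your version is not exploited later. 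In short: same mechanism, two different packagings --- the paper defers the intermediate-value extraction to the end, you inline it at every level and compensate with a monotonicity argument.
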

\begin{proof}
  We prove by induction on $d(q, p)$ that there exist
  $i\leq i'<j'\leq j$ such that $\hei[i']{q}\leq \hei[i]{p}+d(q, p)$
  and $\hei[j]{p}-d(q, p)\leq \hei[j']{q}$.

  \begin{itemize}
  \item If $d(q, p)=0$, then $q=p$ and $i'=i$ and $j'=j$ do the trick.

  \item If $d(q, p)>0$ then let $q'\in N(q)$ be such that
    $d(q',p)=d(q, p)-1$.  By induction, there exists
    $i\leq i_1<j_1\leq j$ such that
    $\hei[i_1]{q'}\leq \hei[i]{p}+d(q, p)-1$ and
    $\hei[j]{p}-d(q, p)+1\leq \hei[j_1]{q'}$.

    Now
    $\hei[j_1]{q'}-\hei[i_1]{q'}\geq \hei[j]{p}-\hei[i]{p}-2(d(q,
    p)-1)>2D-2(d(q, p)-1)>2$.  There thus exists $i_1\leq i'<j_1$ such
    that $\hei[i']{q'}=\hei[i_1]{q'}$ and $(q', i')$ is a $U$-move.
    In $\gamma^{i'}$, every neighbor of $q'$ and thus $q$ satisfy that
    $\hei[i']{q}\leq \hei[i']{q'}+1=\hei[i_1]{q'}+1\leq
    \hei[i]{p}+d(q, p)$.

    Now since $\hei[i'+1]{q'}+2\leq\hei[j_1]{q'}$, there exists
    $i'< j'<j_1$ such that $(q', j')$ is a $U$-move and
    $\hei[j'+1]{q'}=\hei[j_1]{q'}$.  In $\gamma^{j'}$, every neighbor
    of $q'$ and thus $q$ satisfy that
    $\hei[j']{q}\geq \hei[j']{q'}=\hei[j_1]{q'}-1\geq \hei[j]{p}-d(q,
    p)$ which finishes the proof of our induction.
  \end{itemize}

  Let $q$ be any node.  Let $i\leq i'<j'\leq j$ such that
  $\hei[i']{q}\leq \hei[i]{p}+d(p, q)\leq \hei[i]{p}+D$ and
  $\hei[j']{q}\geq \hei[j]{p}-d(q, p)\geq \hei[j]{p}-D> \hei[i]{p}+D$.
  There exists $i'\leq h<j'$ such that $\hei[h]{q}=\hei[i]{p}+D$ and
  $(q, h)$ is a $U$-move.  \qed
\end{proof}

\begin{lemma}\label{lem:boundedIncreaseWithRoot}
  If $\gamma^h$ in not clean, then for any node $p$ and any
  $i<j\leq h$, $\hei[j]{p}-\hei[i]{p}\leq 2D$.
\end{lemma}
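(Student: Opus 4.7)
The plan is to argue by contradiction using Lemma~\ref{lem:2D+1} together with the persistence of roots given by Lemma~\ref{lem:noRootCreation} and the priority of $R_R$ over $R_U$.

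Suppose toward a contradiction that there exist $i<j\leq h$ and a node $p$ with $\hei[j]{p}-\hei[i]{p}>2D$. Since $\gamma^h$ is not clean, it contains some root $r$. By applying Lemma~\ref{lem:noRootCreation} iteratively from $\gamma^h$ down to $\gamma^i$ (all intermediate configurations exist since $i<j\leq h$), $r$ must also be a root in every configuration $\gamma^k$ for $0\leq k\leq h$; that is, $root(r)$ holds throughout the relevant portion of the execution.

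Next I would invoke Lemma~\ref{lem:2D+1} with the chosen $p$ and $q:=r$: since $\hei[j]{p}>\hei[i]{p}+2D$, there exists some $i\leq h'<j$ such that $(r,h')$ is a $U$-move and $\hei[h']{r}=\hei[i]{p}+D$. The guard of $R_U$ requires $r.s^{h'}=C$. But then, in $\gamma^{h'}$, the disjunction $(\hei{r}>0 \vee r.s=C)$ is satisfied (the second disjunct holds), and $root(r)$ is true by the previous paragraph, so the guard of $R_R$ is enabled at $r$ in $\gamma^{h'}$. Since $R_R$ has strictly higher priority than $R_U$, the node $r$ cannot execute $R_U$ at step $h'$, contradicting the fact that $(r,h')$ is a $U$-move.

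The main (and only) subtlety is making sure Lemma~\ref{lem:2D+1} can actually be applied with $q=r$: the statement is quantified over every node, so no issue arises. The priority ordering of the rules then immediately closes the contradiction. Thus $\hei[j]{p}-\hei[i]{p}\leq 2D$ for all $i<j\leq h$, as required.
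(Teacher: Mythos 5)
Your proof is correct and follows essentially the same route as the paper's: the paper likewise picks a root $r$ of $\gamma^h$ and concludes from Lemma~\ref{lem:2D+1} that a height increase beyond $2D$ would force $r$ to make a $U$-move, which is impossible for a root. You have merely made explicit the two facts the paper leaves implicit, namely the backward persistence of roots (Lemma~\ref{lem:noRootCreation}) and the priority argument showing a root with status $C$ must execute $R_R$ rather than $R_U$.
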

\begin{proof}
  Let $r$ be a root in $\gamma^h$ and let $i<j\leq h$.  Since no root
  can apply the rule $R_U$, the lemma follows from
  Lemma~\ref{lem:2D+1}.  \qed
\end{proof}

\begin{lemma}\label{lem:nbUmovesWithRoots}
  A node $p$ executes at most $\min(n\durationIA, 2nD)$ $U$-moves
  before the first clean configuration.
\end{lemma}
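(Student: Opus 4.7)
The plan is to partition the prefix of $e$ up to the first clean configuration into the segments defined just before the lemma, bound the number of $U$-moves of $p$ inside each segment, and then bound the number of segments.

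First, I would observe that by Lemma~\ref{lem:noRootCreation} the sequence $(S_i)$ of root sets is non-increasing for inclusion, so the root set is constant inside each segment: a segment is therefore either entirely clean or entirely non-clean. Since a clean configuration stays clean, the clean segment (if any) must be the last one, and every segment strictly preceding the first clean configuration consists only of non-clean configurations, so in particular its last configuration is non-clean. The number of such segments is at most $n$, because each segment boundary $r_i$ corresponds to a strict decrease $S_{r_i}\subsetneq S_{r_i-1}$ and $|S_0|\leq n$.

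Next, inside a non-clean segment, Lemma~\ref{lem:boundedIncreaseInSegment} tells me that the number of $U$-moves executed by $p$ equals $\max(\hei[j]{p}-\hei[i]{p})$, the maximum being taken over indices $i<j$ in the segment. Applying Lemma~\ref{lem:boundedIncreaseWithRoot} with $h$ equal to the last index of the segment (which is licit precisely because $\gamma^h$ is non-clean) yields $\hei[j]{p}-\hei[i]{p}\leq 2D$ for every such $i<j$. On the other hand, the guard of $R_U$ forces $\hei{p}\leq\durationIA$ at every time, so trivially $\hei[j]{p}-\hei[i]{p}\leq \durationIA$. Each non-clean segment thus contributes at most $\min(2D,\durationIA)$ $U$-moves of $p$.

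Multiplying the per-segment bound by the bound on the number of non-clean segments, the total number of $U$-moves executed by $p$ before the first clean configuration is at most $n\cdot\min(2D,\durationIA)=\min(n\durationIA,\,2nD)$. The only subtle point I anticipate is ensuring that Lemma~\ref{lem:boundedIncreaseWithRoot} is legitimately applicable: its bound is conditioned on the reference configuration being non-clean, which is exactly why I must first argue that the clean segment can only come last and that every preceding segment is uniformly non-clean.
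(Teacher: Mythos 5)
Your overall strategy is exactly the paper's: decompose the execution into the segments defined before the lemma, use Lemma~\ref{lem:boundedIncreaseInSegment} to equate the number of $U$-moves of $p$ in a segment with its maximal height increase there, bound that increase by $\durationIA$ trivially and by $2D$ ``because of the roots'', and multiply by the bound $n$ on the number of non-clean segments. However, the one point that you yourself single out as the crux is settled with a false claim. The root set is \emph{not} constant inside a segment: consecutive segments share their boundary configuration, and by the very definition of the indices $r_i$ we have $S_{r_i}\subset S_{r_i-1}$ strictly, i.e., the root set strictly decreases precisely at the \emph{last} configuration of each segment. Hence a segment need not be ``entirely clean or entirely non-clean'': when a clean configuration is reached, the first one is some $\gamma^{r_i}$, which is the last configuration of the final non-clean segment. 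For that segment, your application of Lemma~\ref{lem:boundedIncreaseWithRoot} with $h$ equal to the last index of the segment is precisely the illegitimate one, since $\gamma^h$ is clean there.

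The gap is reparable, but not by the obvious patch. Taking $h=r_i-1$ only bounds $\hei[j]{p}-\hei[b]{p}$ for $j\leq r_i-1$ ($b$ the segment start), and paying $+1$ for the last step (nothing forbids $p$ from executing $R_U$ in the very step in which the last roots execute $R_C$) gives $2D+1$ for that segment, hence $2nD+1$ overall, overshooting the stated bound. The clean repair is to go back to Lemma~\ref{lem:2D+1}: any root $r$ of $\gamma^{r_i-1}$ is, by Lemma~\ref{lem:noRootCreation}, a root in every configuration of the segment, and a root never executes $R_U$ (the fact underlying Lemma~\ref{lem:boundedIncreaseWithRoot}); so if $\hei[j]{p}-\hei[b]{p}>2D$ for some $j$ in the segment, including $j=r_i$, Lemma~\ref{lem:2D+1} would force a $U$-move of $r$ at some index $<j$, a contradiction. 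To be fair, the paper's own proof cites Lemma~\ref{lem:boundedIncreaseWithRoot} just as laconically and glosses over the same boundary configuration; but since you explicitly identified this applicability question as the only subtle point and then grounded it on an incorrect structural claim about segments, it counts as a genuine gap in your write-up.
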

\begin{proof}
  By Lemma~\ref{lem:noRootCreation}, there are at most $n$ non clean
  segments.  By Lemma~\ref{lem:boundedIncreaseInSegment}, in each of
  these segments, the number of times that $p$ applies the rule $R_U$
  is equal to the maximum of $\hei[j]{p}-\hei[i]{p}$ with $i<j$ in the
  segment.

  We can bound $\hei[j]{p}-\hei[i]{p}$ by $\durationIA$, which gives a
  total bound of $n\durationIA$. But because of the roots,
  Lemma~\ref{lem:boundedIncreaseWithRoot} allows us to bound
  $\hei[j]{p}-\hei[i]{p}$ by $2D$, giving a total bound of $2nD$. \qed
\end{proof}

By Lemma~\ref{lem:noRootCreation}, there is at most 1 clean segment.
Of course, we can also bound the number of $U$-moves of a given node
in this segment (if it exists) by $\durationIA$, but when in lazy
mode, we can do better.

\begin{lemma}\label{lem:boundedIncreaseWithoutRootLazy}
  Let $\gamma^h \cdots$ be the clean segment (if it exists).  If the
  algorithm runs in lazy mode, then for any $i$, $j$ such that
  $h\leq i\leq j$ and any node $p$,
  $\hei[j]{p}-\hei[i]{p}\leq \max(\Time, D)$.
\end{lemma}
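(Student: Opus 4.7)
The plan is to show that in the clean segment the maximum height $M_k := \max_p \hei[k]{p}$ is essentially capped by $\max(M_h, \Time)$, and then use the fact that the configuration remains almost clean to bound the min height by $M_k - D$ from below.

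First I would set up the standing invariants. In the clean segment, by Lemma~\ref{lem:clean} only $R_U$ can fire, so no height ever decreases and (by Lemma~\ref{lem:noRootCreation} plus Lemma~\ref{lem:almostCleanClosed}) every $\gamma^k$ with $k \geq h$ remains clean, hence almost clean. Consequently $M_k$ is non-decreasing in $k$, and for any $k \geq h$ the almost-clean bound $|\hei[k]{p}-\hei[k]{q}| \leq 1$ on neighbors, iterated along a shortest path between a maximum-height node and a minimum-height node, yields $M_k - m_k \leq D$, where $m_k := \min_p \hei[k]{p}$. Moreover, Lemma~\ref{lem:stableConfiguration} gives $p.L[i]=\stateIA{p}{i}$, and unpacking $\widehat\algo$ then shows that whenever all neighbors $q$ of $p$ satisfy $\hei{q} \geq \hei{p}$ we have $\widehat\algo(p, \hei{p}) = \stateIA{p}{\hei{p}+1}$.

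The core step, and the main obstacle, is to prove $M_k \leq \max(M_h, \Time)$ for all $k \geq h$, using the lazy guard. I would argue by induction on $k$: if $M_{k+1} > M_k$, then some node $p$ with $\hei[k]{p} = M_k$ executes $R_U$ during $\gamma^k \mapsto \gamma^{k+1}$. The updatable condition forces every neighbor $q$ to satisfy $\hei[k]{q} \in \{M_k, M_k+1\}$; since $M_k$ is the maximum, no neighbor has strictly larger height. In lazy mode the disjunct $\exists q,\ \hei{q} > \hei{p}$ therefore fails, so the remaining disjunct $p.L[\hei{p}] \neq \widehat\algo(p, \hei{p})$ must hold, which by the preceding paragraph rewrites as $\stateIA{p}{M_k} \neq \stateIA{p}{M_k+1}$. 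By definition of $\Time$ (the synchronous stabilization time of $\InitialAlg$), this is only possible when $M_k < \Time$. Hence $M_{k+1} \leq \max(M_k, \Time) \leq \max(M_h, \Time)$, completing the induction.

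Finally I would conclude by a two-case analysis. If $M_h \geq \Time$, then $M_k \leq M_h$ for all $k \geq h$ by the claim, and monotonicity of $M$ forces $M_k = M_h$ throughout the segment; then $\hei[j]{p} - \hei[i]{p} \leq M_j - m_i \leq M_h - (M_i - D) = D$. If instead $M_h < \Time$, then $M_k \leq \Time$ for all $k \geq h$, so $\hei[j]{p} - \hei[i]{p} \leq M_j - 0 \leq \Time$. Together these give the bound $\max(\Time, D)$ claimed by the lemma.
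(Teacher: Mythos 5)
Your proposal is correct and follows essentially the same route as the paper's proof: both establish by induction, using the lazy guard together with Lemma~\ref{lem:stableConfiguration} and the definition of $\Time$, that heights in the clean segment never exceed $\max\bigl(\max_p \hei[h]{p}, \Time\bigr)$, and then split into the two cases $\max_p \hei[h]{p} \geq \Time$ (where the height spread is at most $D$ because neighbor heights differ by at most one in clean configurations) and $\max_p \hei[h]{p} < \Time$ (where all heights stay below $\Time$). The only cosmetic difference is that you invoke the almost-clean spread bound directly, whereas the paper rederives it via the observation that a height gap of two between neighbors would force a root, contradicting cleanness.
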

\begin{proof}
  First note that Lemma~\ref{lem:noRootCreation} implies that all
  configuration are clean.

  Let $H=\max_p(\hei[h]{p})$, and let $\bar H=\max(H, \Time)$.  We
  claim that for any $j\geq i$ and any node $p$,
  $\hei[j]{p}\leq \bar H$.  Indeed, by induction on $j\geq i$, this is
  true for $i=j$.  Let us now suppose that the property is true for
  some $j\geq i$, and let $p$ be any node.
  \begin{itemize}
  \item If $\hei[j]{p}<\bar H$, then $\hei[j+1]{p}\leq \bar H$.
  \item If $\hei[j]{p}=\bar H$ and at least one $q\in N(p)$ is such
    that $\hei[j]{q}<\bar H$ then $p$ cannot apply the rule $R_U$ in
    $\gamma^j\mapsto\gamma^{j+1}$.  Thus $\hei[j+1]{p}\leq H$.
  \item If $\hei[j]{p}=\bar H$ and all $q\in N(p)$ are such that
    $\hei[j]{q}=\bar H$, then Lemma~\ref{lem:stableConfiguration} and
    the fact that $H\geq \Time$ imply that
    $p.L^j[\bar H]=\widehat\algo(p^j, \bar H)$.  Hence, $p$ cannot
    apply the rule $R_U$, and $\hei[j+1]{p}\leq \bar H$.
  \end{itemize}
  Our claim is thus valid. Obviously, if $\bar H=\Time$, then the
  Lemma is true.  Let us thus suppose that $\bar H=H>\Time$.  In this
  case, it is enough to prove that for any $p$, $\hei[i]{p}\geq H-D$.

  Indeed, otherwise, if $\hei[i]{p}< H-D$, a shortest path between any
  $r$ such that $\hei[i]{r}=H$, and $p$ is not gently
  decreasing. There thus exist neighboring nodes $q$ and $q'$ such
  that $\hei[i]{q'}\geq\hei[i]{q}+2$. If $q.s^i=C$, then $q$ is a
  root, which is a contradiction. And if $q.s^i=E$, then by
  lemma~\ref{lem:E-path}, $q$ is the first node of an $E$-path, which
  implies that $\gamma^i$ contains a root, also a contradiction.  \qed
\end{proof}

\begin{lemma}\label{lem:nbUmoves}
  During an execution, the number of $U$-moves that our algorithm
  executes is infinite in greedy mode when $\durationIA=\infty$ and at
  most
  \begin{itemize}
  \item $O(n^2\min(\durationIA, D)+n\durationIA))$ in greedy mode with
    $\durationIA<\infty$,
  \item $O(n^2\min(\durationIA, D)+n\Time))$ in lazy mode.
  \end{itemize}
\end{lemma}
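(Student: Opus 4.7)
The plan is to decompose the execution into the segments introduced before Lemma~\ref{lem:boundedIncreaseInSegment} and sum the $U$-moves segment by segment, node by node. The infinite case in greedy mode with $\durationIA=\infty$ is immediate from Lemma~\ref{lem:finalConfiguration}: no terminal configuration exists and nodes of minimum height are perpetually $R_U$-enabled, so some node makes infinitely many $U$-moves. From now on I would assume that either we are in greedy mode with $\durationIA<\infty$ or in lazy mode.

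Next, I would split the count of $U$-moves into (i) those performed before the first clean configuration (if it exists) and (ii) those performed in the last, clean segment (if it exists). For (i), Lemma~\ref{lem:nbUmovesWithRoots} already gives, for every node $p$, at most $\min(n\durationIA,\,2nD)$ $U$-moves before the first clean configuration; summing over the $n$ nodes yields $O(n^2\min(\durationIA,D))$. This term is common to both modes and accounts for the $n^2\min(\durationIA,D)$ part of the claimed bound.

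For (ii), I would combine Lemma~\ref{lem:boundedIncreaseInSegment} (which says the number of $R_U$-applications of a node in a segment equals the maximal height increase of that node inside the segment, unless infinite) with the appropriate height-increase bound for the clean segment. In greedy mode with $\durationIA<\infty$, heights are capped by $\durationIA$, so each node contributes at most $\durationIA$ additional $U$-moves in the clean segment, giving $n\durationIA$. In lazy mode, Lemma~\ref{lem:boundedIncreaseWithoutRootLazy} bounds the height increase of each node in the clean segment by $\max(\Time,D)$, so the total is at most $n\max(\Time,D)=O(n\Time+nD)$; the $nD$ term is absorbed into the $n^2\min(\durationIA,D)$ term from (i) since $nD\le n^2\min(\durationIA,D)$ whenever $\durationIA\ge 1$, which yields the claimed $O(n^2\min(\durationIA,D)+n\Time)$.

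The only slightly delicate point is justifying that Lemma~\ref{lem:boundedIncreaseInSegment} really applies to the clean segment with a \emph{finite} count: I have to rule out the "infinite" branch of that lemma. In greedy mode with $\durationIA<\infty$ this is immediate because heights never exceed $\durationIA$, bounding all $U$-move counts. In lazy mode, the same argument works via Lemma~\ref{lem:boundedIncreaseWithoutRootLazy}, which explicitly bounds $\hei[j]{p}-\hei[i]{p}$ by $\max(\Time,D)$ throughout the clean segment. Assembling the two contributions finishes the proof.
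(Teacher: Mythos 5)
Your proposal is correct and follows essentially the same route as the paper's proof: the infinite greedy case via Lemma~\ref{lem:finalConfiguration}, the pre-clean part via Lemma~\ref{lem:nbUmovesWithRoots}, and the clean segment via Lemma~\ref{lem:boundedIncreaseInSegment} combined with the height cap $\durationIA$ (greedy) or Lemma~\ref{lem:boundedIncreaseWithoutRootLazy} (lazy). Your explicit absorption of the $nD$ term into $n^2\min(\durationIA,D)$ and your justification for ruling out the infinite branch of Lemma~\ref{lem:boundedIncreaseInSegment} are just slightly more detailed renderings of steps the paper leaves implicit.
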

\begin{proof}
  The fact that this number is infinite when $\durationIA=\infty$ in
  greedy mode follows from Lemma~\ref{lem:finalConfiguration}.
  Otherwise, we must add to the bound of
  Lemma~\ref{lem:nbUmovesWithRoots}, the number of $U$-moves in the
  clean segment (if it exists).

  In greedy mode, for a given node, we bound $\hei[j]{p}-\hei[i]{p}$
  by $\durationIA$ which in now assumed to be finite. In lazy mode, we
  using Lemma~\ref{lem:boundedIncreaseWithoutRootLazy} while
  remembering that $\hei[j]{p}-\hei[i]{p}$ is always at most
  $\durationIA$.  \qed
\end{proof}

\subsection{$P$-move}

To count the number of $P$-moves of a given node $p$, we need several
definitions.

We say that a $P$-move $(p, t)$ \emph{causes} another $P$-move
$(p', t')$ if
\begin{itemize}
\item $p'\in N(p)$, $t'>t$,
\item for some $l$, $(p', t')$ is a $P(l)$-move and $(p, t)$ is a
  $P(l-1)$-move, and
\item for any $t<k<t'$, $(p, k)$ is not a move.
\end{itemize}

If a node $p$ is in error in some configuration $\gamma^i$, this often
happens because of some previous $P$-move $(p, t)$.  Moreover, what
allowed $(p, t)$ is some $q\in N(p)$ which is in error in
$\gamma^{t-1}$.  Finally, the reason why $q$ is in error in
$\gamma^{t-1}$ is because of some previous move and so on.  This
motivates the following definition: a \emph{causality chain} is a
sequence $C=(p_0, t_0)(p_1, t_1)\cdots(p_l, t_l)$ such that
\begin{itemize}
\item for each $0\leq i<l$, $(p_i, t_i)$ causes $(p_{i+1}, t_{i+1})$;
\item no $(p, t)$ causes $(p_0, t_0)$.
\end{itemize}

By construction, any $P$-move is the last element of a causality chain
but the causality chain may not be unique.

We classify the $P$-move of $p$ in 2 types.
\begin{itemize}
\item $(p, i)$ is of Type 1 if there exists a $P$-move $(p, j)$ with
  $j>i$ such that $\hei[i+1]{p}=\hei[j+1]{p}$.
\item $(p, i)$ is of Type 2 otherwise.
\end{itemize}

Our goal is to separately bound the number of $P$-moves of each type
that a node can execute.

\begin{lemma}\label{lem:nbType1}
  There are at most as many $P$-moves of type 1 as there are $U$-moves
  before the first clean configuration.
\end{lemma}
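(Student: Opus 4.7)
The plan is to construct an injection from the set of Type~1 $P$-moves into the set of $U$-moves occurring before the first clean configuration. The injection will be local to each node: for a fixed node $p$, I will inject the Type~1 $P$-moves of $p$ into the $U$-moves of $p$ happening before the first clean configuration, and then sum over all nodes.

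Fix a node $p$ and enumerate its $P$-moves in chronological order as $(p,t_1),(p,t_2),\dots$, with resulting heights $h_m:=\hei[t_m+1]{p}$. For a Type~1 $P$-move $(p,t_m)$, I will define $u(m)$ to be the first $U$-move performed by $p$ after time $t_m$ whose ``after''-height strictly exceeds $h_m$ (equivalently, the first $U$-move after $t_m$ that brings $p$'s height from $h_m$ to $h_m+1$). I first argue existence: since $(p,t_m)$ is of Type~1, by definition there is some $m'>m$ with $h_{m'}=h_m$; the rule $R_P(h_{m'})$ fired at time $t_{m'}$ requires $\hei[t_{m'}]{p}>h_{m'}=h_m$, so the height of $p$, which equals $h_m$ at time $t_m+1$, must strictly exceed $h_m$ at some time in $(t_m,t_{m'}]$. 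Only $R_U$ can increase $\hei{p}$ ($R_R$ resets to $0$, $R_P$ decreases it, and $R_C$ leaves it unchanged), so the first moment the height of $p$ exceeds $h_m$ after $t_m$ is provided by such a $U$-move, namely $u(m)$. Moreover, $u(m)<t_{m'}$, and by Lemma~\ref{lem:clean} a clean configuration enables only $R_U$, so any $P$-move (and in particular $(p,t_{m'})$) occurs strictly before the first clean configuration; hence $u(m)$ also occurs before the first clean configuration.

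It remains to show that $u$ is injective. Suppose $(p,t_m)$ and $(p,t_{m''})$ are two distinct Type~1 $P$-moves with $m<m''$ and $u(m)=u(m'')$. The ``before''-height of a $U$-move is determined by the move itself, so this forces $h_m=h_{m''}$. Now $u(m'')>t_{m''}>t_m$, and at time $t_m+1$ the height of $p$ equals $h_m$, whereas the firing of $R_P(h_{m''})$ at time $t_{m''}$ demands $\hei[t_{m''}]{p}>h_{m''}=h_m$. Therefore, between $t_m$ and $t_{m''}$ the height of $p$ must have crossed from $h_m$ to a value strictly greater than $h_m$, which requires a $U$-move by $p$ in the interval $(t_m,t_{m''})$. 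By minimality in its definition, $u(m)$ lies in that interval, so $u(m)<t_{m''}<u(m'')$, contradicting $u(m)=u(m'')$. This establishes injectivity.

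Combining the above, for each node $p$ the map $u$ injects the Type~1 $P$-moves of $p$ into the $U$-moves of $p$ occurring strictly before the first clean configuration; summing over $p\in V$ yields the desired inequality. The only delicate step is the injectivity argument in the case $h_m=h_{m''}$, where one must exploit the fact that $R_P$ at $t_{m''}$ requires a strict inequality on the height and therefore forces an intermediate $U$-move distinct from $u(m'')$; everything else is a direct consequence of the height dynamics of the four rules.
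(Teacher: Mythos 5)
Your proof is correct and follows essentially the same route as the paper's: you map each Type~1 $P$-move of a node $p$ to the first subsequent $U$-move of $p$ that lifts its height above the $P$-move's resulting height (which is exactly the paper's ``first $U$-move at the same height''), prove existence from the guard of the later equal-height $P$-move forcing $\hei{p}$ to exceed that height, and prove injectivity from the forced intermediate crossing between the two $P$-moves. The only cosmetic difference is that you make the per-node decomposition and the equal-height case in the injectivity argument explicit, and your appeal to Lemma~\ref{lem:clean} for locating all $P$-moves before the first clean configuration implicitly uses the stability of cleanness (Lemma~\ref{lem:noRootCreation}), exactly as the paper does.
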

\begin{proof}
  Suppose that $(p, i)$ and $(p, j)$ are both $P(l)$-moves with $i<j$,
  let thus $l:=\hei[i+1]{p}$ ($=\hei[j+1]{p})$.  For $(p, j)$ to be
  possible, $\hei{p}$ has to go from $l$ in $\gamma^{i+1}$ to being
  strictly greater than $l$ in $\gamma^j$.  This implies that there
  exists $i<k<j$ such that $(p, k)$ is a $U$-move with $\hei[k]{p}=l$.

  Thus, if we associate to each $(p, i)$ of Type 1 the $U$-move
  $(p, j)$ such that $\hei[i+1]{p}=\hei[j]{p}$ with $j>i$ minimum,
  then no 2 distinct $P$-moves correspond to the same $U$-move.
  Moreover, since this $U$-move comes before a $P$-move, by
  Lemma~\ref{lem:almostClean}, this means that the corresponding
  configuration is not almost clean, and thus not clean.  All this
  implies that $p$ executes at most as many $P$-moves of Type 1 as
  $U$-moves before the first clean configuration.  \qed
\end{proof}

Remark that, by definition, two $P$-moves $(p, i)$ and $(p, j)$ of
Type 2 are such that $\hei[i+1]{p}\neq \hei[j+1]{p}$.  To bound the
number of $P$-moves $(p, i)$ of Type 2, we thus count the number of
values that $\hei[i+1]{p}$ can take.

The following Lemma is a direct consequence of this remark.
\begin{lemma}\label{lem:nbType2Bound}
  There are at most $n\durationIA$ $P$-moves of type 2.
\end{lemma}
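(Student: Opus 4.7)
The plan is to exploit the defining property of Type 2 $P$-moves pointed out in the remark: for a fixed node $p$, any two Type 2 $P$-moves $(p,i)$ and $(p,j)$ with $i<j$ satisfy $\hei[i+1]{p}\neq \hei[j+1]{p}$. Indeed, if the two post-heights were equal, then $(p,i)$ would witness its own classification as Type 1 via the later move $(p,j)$, contradicting that it is Type 2.

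Consequently, for each node $p$, the map sending a Type 2 $P$-move $(p,i)$ to the integer $\hei[i+1]{p}$ is injective. It therefore suffices to bound the number of possible values of $\hei[i+1]{p}$.

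Any $P$-move is an execution of rule $R_P(l)$ for some $l\geq 1$, and such a rule sets $\hei{p}:=l$; so $\hei[i+1]{p}\geq 1$ after a $P$-move. On the other hand, by the data structure definition, the list $p.L$ contains at most $\durationIA$ elements at every configuration, so $\hei[i+1]{p}\leq \durationIA$. Hence $\hei[i+1]{p}\in\{1,\dots,\durationIA\}$, giving at most $\durationIA$ Type 2 $P$-moves per node. Summing over the $n$ nodes yields the announced bound of $n\durationIA$.

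There is no real obstacle here; the only mild subtlety is noting that the bound $\hei{p}\leq \durationIA$ holds in every reachable configuration, which is immediate from the declaration that $p.L$ is a list of at most $\durationIA$ elements and from the fact that the only rule that increases $\hei{p}$ is $R_U$, whose guard explicitly requires $\hei{p}<\durationIA$.
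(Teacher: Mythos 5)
Your proof is correct and follows exactly the argument the paper intends: the paper states this lemma as a direct consequence of the remark that two Type~2 $P$-moves of the same node have distinct post-heights, and you formalize precisely that, adding the (correct) observation that post-heights of $P$-moves lie in $\{1,\dots,\durationIA\}$ since $errorPropag(p,i)$ forces $i\geq 1$ and the list $p.L$ never exceeds $\durationIA$ elements.
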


Since we also want a bound which does not rely on the value of
$\durationIA$ (which can be $\infty$), we need to be more precise.  To
do so, we subdivide Type 2 $P$-moves in
\begin{itemize}
\item Type 2a.  if at least one causality chain
  $C=(p_0, t_0)\cdots(p_l, t_l)$ ending in $(p, i)$ does not contain a
  repeated node.  More formally, for any $0\leq i<j\leq l$,
  $p_i\neq p_j$.
\item Type 2b.  otherwise.
\end{itemize}

\begin{lemma}\label{lem:nbType2a}
  A node $p$ executes at most $n(n+1)$ Type 2a $P$-moves.
\end{lemma}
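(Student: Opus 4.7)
The plan is to associate each Type 2a $P$-move of $p$ with a distinct signature drawn from a set of size at most $n(n+1)$. I would start from two simple observations that are forced by the definitions.

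First, any two distinct Type 2 $P$-moves $(p,i_1),(p,i_2)$ of $p$ (with $i_1<i_2$) satisfy $\hei[i_1+1]{p}\neq\hei[i_2+1]{p}$; otherwise, taking $j=i_2$ in the definition of Type 1 would place $(p,i_1)$ into Type 1, a contradiction. So it is enough to bound the number of distinct values $\hei[i+1]{p}$ arising at Type 2a $P$-moves $(p,i)$.

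Second, for each Type 2a move $(p,i)$, I would fix a causality chain $(p_0,t_0)\cdots(p_l,t_l)$ with distinct nodes and $p_l=p$, $t_l=i$. Unfolding the definition of ``causes'' along the chain, if $(p_0,t_0)$ is a $P(h_0)$-move then each $(p_j,t_j)$ is a $P(h_0+j)$-move, so $\hei[i+1]{p}=h_0+l$. Distinctness of nodes forces $l+1\le n$, i.e., $l\in\{0,\dots,n-1\}$, and $p_0,\dots,p_{l-1}\in V\setminus\{p\}$. Hence each Type 2a move of $p$ determines a triple $(p_0,h_0,l)$, and by the first observation, distinct Type 2a moves yield triples with distinct sums $h_0+l$.

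At this point the combinatorial count reduces to bounding the number of achievable $(p_0,h_0,l)$ signatures. For each fixed pair $(p_0,l)\in V\times\{0,\dots,n-1\}$, different Type 2a moves sharing this pair correspond to uncaused $P(h_0)$-moves at $p_0$ with pairwise distinct parameters $h_0$. Since there are $n\cdot n=n^2$ such pairs, the bound $n(n+1)$ will follow as soon as I can distribute the budget of uncaused seeds appropriately across the $(p_0,l)$ pairs — intuitively, at most $n+1$ effective contributions per starting node $p_0$, summed over $n$ choices of $p_0$.

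The main obstacle is precisely this last step: controlling the number of uncaused $P(h_0)$-moves at each node that can serve as the seed of a distinct-nodes chain reaching $p$. I would attack it by tracing the ``trigger'' neighbor of each uncaused $P(h_0)$-move back to its last error-setting event: since that event cannot be a $P(h_0-1)$-move with no intervening move (else the move would be caused), the trigger must come from either the initial configuration or an $R$-move of that neighbor. Using Lemma~\ref{lem:nbRmoves} to bound the $R$-moves by $n$, together with at most $n$ initial errors, gives a global budget of seed events; combining this with the distinctness of heights $h_0+l$ across the $n$ possible chain lengths should yield exactly the $n(n+1)$ bound.
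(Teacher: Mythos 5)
Your skeleton is essentially the paper's proof: the observation that distinct Type~2 moves of $p$ must have distinct heights $\hei[i+1]{p}$, the unfolding of a repetition-free causality chain to write $\hei[i+1]{p}=h_0+l$ with $l<n$, and the dichotomy for the trigger $r$ of the uncaused seed move (either $r$ never moved before $t_0$, or its last move was an $R$-move --- a last $P$-move would cause the seed, and a last $U$- or $C$-move would leave $r$ with status $C$, contradicting $r.s^{t_0}=E$). All of this is correct and matches the paper.

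The gap is exactly the step you flag as ``the main obstacle'': your proposed counting does not produce $n(n+1)$. Counting seed \emph{events} --- at most $n$ $R$-moves by Lemma~\ref{lem:nbRmoves} plus at most $n$ initially erroneous nodes --- and pairing each with one of the $n$ chain lengths is a legitimate injection (two Type~2a moves sharing both seed event and length would have equal sums $h_0+l$, contradicting your first observation), but it yields $2n\cdot n=2n^2$, which is strictly worse than $n(n+1)$ for $n>1$; it would suffice for the $O(n^2\min(\durationIA,n))$ bound of Lemma~\ref{lem:nbPmoves} but not for the stated lemma. Your alternative decomposition, ``at most $n+1$ contributions per starting node $p_0$, summed over $n$ choices of $p_0$,'' miscounts as well: the factor $n$ must come from the chain length $l$, not from the identity of $p_0$, which is in fact irrelevant. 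The missing idea is to count achievable \emph{values} of $h_0$, globally, rather than events or starting nodes. Every seed whose trigger $r$ last executed an $R$-move satisfies $\hei[t_0]{r}=0$, hence $h_0=\hei[t_0]{r}+1=1$: all such seeds share the \emph{single} value $1$, no matter how many $R$-moves occur, so Lemma~\ref{lem:nbRmoves} is not needed at all. Every seed whose trigger never moved satisfies $h_0=\hei[0]{r}+1$, giving at most $n$ further values. Thus $h_0$ ranges over a set of at most $n+1$ values, $l$ over at most $n$ values, so the sum $h_0+l$ takes at most $n(n+1)$ distinct values; since distinct Type~2a moves of $p$ have distinct sums, the bound follows --- and this is precisely how the paper concludes.
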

\begin{proof}
  Let $(p, i)$ be a $P$-move of Type 2a, and let
  $C=(p_0, t_0)\cdots(p_l, t_l)$ be a corresponding causality chain.
  We have
  \begin{itemize}
  \item $(p, i)=(p_l, t_l)$
  \item for any $0\leq i<j\leq l$, $p_i\neq p_j$.
  \end{itemize}
  Clearly, $l<n$ and $\hei[t_l+1]{p_l}=l+\hei[t_0+1]{p_0}$.  Let
  $r\in N(p_0)$ be such that $r.s^{t_0}=E$ and
  $\hei[t_0+1]{p_0}=\hei[t_0]{r}+1$.  Since no $P$-move causes
  $(p_0, t_0)$, two cases arise:
  \begin{itemize}
  \item the last move of $r$ before $t_0$ is an $R$-move in which case
    $\hei[t_0]{r}=0$,
  \item $r$ applies no rule before $t_0$ in which case
    $\hei[t_0]{r}=\hei[0]{r}$.
  \end{itemize}
  Thus $\hei[t_0+1]{p_0}$ can have at most $n+1$ distinct values.  The
  lemma now follows from the fact that $l$ can also take at most $n$
  distinct values.  \qed
\end{proof}

\begin{lemma}\label{lem:nbType2b}
  A node $p$ executes at most $2(n+D)$ Type 2b $P$-moves.
\end{lemma}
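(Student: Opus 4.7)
The approach is to bound the number of distinct resulting heights $h = \hei[i+1]{p}$ appearing over Type 2b $P$-moves $(p,i)$ of $p$. This reduction suffices because, by the Type 2 property, distinct Type 2 $P$-moves of $p$ must yield pairwise distinct resulting heights, so the number of Type 2b $P$-moves of $p$ equals the number of such distinct heights.

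First, I would fix any Type 2b $P$-move $(p, i)$ at height $h$ together with any causality chain $C = (p_0, t_0)\cdots(p_l, t_l) = (p, i)$ witnessing it. As in the proof of Lemma~\ref{lem:nbType2a}, we have $h = h_0 + l$ with $h_0 = \hei[t_0+1]{p_0}$ lying in a set of size at most $n+1$, namely $\{1\} \cup \{\hei[0]{r}+1 : r \in V\}$. By the Type 2b hypothesis, the underlying walk $p_0 p_1 \cdots p_l$ in $G$ is not simple, hence contains at least one pair of indices $a < b$ with $p_a = p_b$.

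Second, I would invoke Lemma~\ref{lem:clean} to argue that $P$-moves can only be executed in non-clean configurations, so that $(p, i)$ and in fact the whole chain $C$ lie in a single non-clean segment of the execution. Within such a segment, Lemma~\ref{lem:boundedIncreaseWithRoot} bounds every node's height variation by $2D$. Applied to a repeated node $q = p_a = p_b$, whose height equals $h_0 + a$ at time $t_a + 1$ but must strictly exceed $h_0 + b$ at time $t_b$ to enable the second $R_P$ firing, this forces $b - a < 2D$. In other words, every cycle in the walk underlying a Type 2b causality chain has length strictly less than $2D$.

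The main technical obstacle is then to convert this cycle-length bound into an overall bound of order $n + O(D)$ on the chain length $l$, and hence on $h = h_0 + l$. I would argue that the walk decomposes into a simple backbone of length at most $n - 1$ (since $G$ has only $n$ vertices) augmented by cycle detours of length less than $2D$ each, and that the number of such detours is controlled by applying Lemma~\ref{lem:boundedIncreaseWithRoot} at the endpoint $p_l = p$ itself — whose height variation across the segment is also bounded by $2D$, preventing too many returns with growing heights. Combining the resulting bound $l \leq n + 2D - O(1)$ with the $n + 1$ admissible values of $h_0$, the distinct Type 2b resulting heights $h = h_0 + l$ of $p$ lie in a set of cardinality at most $2(n+D)$, which closes the argument.
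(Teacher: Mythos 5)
Your reduction to counting distinct resulting heights is sound, and so are your first two steps: the $n+1$ admissible values of $h_0$ (exactly as in Lemma~\ref{lem:nbType2a}), and the observation that two occurrences $a<b$ of the same node in a causality chain force, via Lemma~\ref{lem:boundedIncreaseWithRoot}, a height increase larger than $b-a$ at that node, whence $b-a<2D$. The proof breaks, however, precisely at the point you yourself label the ``main technical obstacle'', and in two independent ways. First, the cycle-length bound does not imply $l\leq n+O(D)$: a walk in which any two occurrences of the same node are at most $2D$ apart can still have length $\Theta(nD)$ (concatenate roughly $n/2$ blocks, each shuttling $D$ times between a fresh pair of adjacent nodes), and nothing in the constraints you derived rules this out; your proposed control of the ``number of detours'' by applying Lemma~\ref{lem:boundedIncreaseWithRoot} at $p_l$ is not an argument, since that lemma only bounds \emph{increases} of $\hei{p}$ and is perfectly consistent with $\hei{p}$ taking arbitrarily many distinct (e.g., decreasing) values when initial heights are large. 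Second, even granting $l\leq n+2D$, the conclusion does not follow: $h_0$ is only known to lie in a \emph{set} of at most $n+1$ values, namely $\{1\}\cup\{\hei[0]{r}+1 \mid r\in V\}$, and these values are themselves unbounded because initial heights are arbitrary (up to $\durationIA$). So $h=h_0+l$ is not bounded as a value, and the count you obtain is multiplicative, at most $(n+1)(n+2D+1)$ as in Lemma~\ref{lem:nbType2a}, not the additive bound $2(n+D)$.

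What is missing is an \emph{absolute} anchor for some height value, and this is exactly what the paper's proof supplies through the $D$-path machinery. The paper chooses the repetition $p_{i_0}=p_{j_0}=q$ with $j_0$ \emph{maximum}, so that the tail after $j_0$ is simple and contributes $l-j_0<n$; it then bounds $\hei[t_{j_0}+1]{q}$ as a value: after its move at $t_{i_0}$, $q$ is in error, hence the first node of a $D$-path (Lemma~\ref{lem:E-path}); to regain height before $t_{j_0}$ it must execute $R_U$, which Lemma~\ref{lem:D-path_height} forbids while $q$ is on a $D$-path; so $q$ leaves all $D$-paths at some intermediate step, and at that very step Lemma~\ref{lem:2theGround} pins its height to at most $n$; Lemma~\ref{lem:boundedIncreaseWithRoot} then allows a further increase of at most $2D$ up to $t_{j_0}$. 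This yields $\hei[t_{j_0}+1]{q}\leq n+2D$, hence $h\leq (n+2D)+(n-1)<2(n+D)$, a bound on the value of $h$ itself, which is what makes the final count additive rather than multiplicative. Your proposal never invokes Lemma~\ref{lem:2theGround} (nor any substitute producing an absolute height bound), and without such an anchor the approach cannot close.
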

\begin{proof}
  Let $(p, h)$ be a $P$-move of Type 2b, and let
  $C=(p_0, t_0)\cdots(p_l, t_l)$ be a causality chain such that
  $(p, h)=(p_l, t_l)$.

  By definition, there exists $0\leq i<j\leq l$ such that $p_i=p_j$.
  Choose such a $i_0=i$ and $j_0=j$ with $j_0$ maximum.  We thus have
  that for any $j_0\leq i<j\leq l$, $p_i\neq p_j$ and thus $l-j_0<n$.
  Let $q=p_{j_0}=p_{i_0}$.

  Now $\hei[l+1]{p_l}=\hei[j_0+1]{q}+(l-j_0)$.  To prove the lemma, it
  is thus enough to show that $\hei[j_0+1]{q}\leq n+2D$.

  We have that $q.s^{i_0+1}=E$, thus, by Lemma~\ref{lem:E-path}, $q$
  is the first node of an $E$-path, and thus of a $D$-path in
  $\gamma^{i_0+1}$.

  Since $\hei[j_0+1]{q}>\hei[i_0+1]{q}$, $q$ applies a $U$-move
  $(q, k)$ for $i_0<k< j_0$.  By Lemma~\ref{lem:D-path_height}, $q$
  belongs to no $D$-path in $\gamma^k$.  There thus exists
  $i_0\leq k'<k$ such that $q$ belongs to a $D$-path in $\gamma^{k'}$
  and to no $D$-path in $\gamma^{k'+1}$.  By
  Lemma~\ref{lem:2theGround}, $\hei[k'+1]{q}\leq n$.

  Since $q$ is in error in $\gamma^{j_0}$, by Lemma~\ref{lem:E-path},
  $q$ belongs to an $E$-path in $j_0$.  There thus exists a root $r$
  in $\gamma^{j_0}$.  By Lemma~\ref{lem:boundedIncreaseWithRoot},
  $\hei[j_0+1]{q}-\hei[k'+1]{q}\leq 2D$, and thus
  $\hei[j_0+1]{q}\leq n+2D$. The lemma follows.  \qed
\end{proof}

Lemmas~\ref{lem:nbType1},~\ref{lem:nbType2a}, and~\ref{lem:nbType2b}
directly imply the following Lemma.
\begin{lemma}\label{lem:nbPmoves}
  There are at most $O(n^2\min(\durationIA, n))$ $P$-moves during an
  execution before the first clean configuration.
\end{lemma}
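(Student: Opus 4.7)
The plan is to split the $P$-moves into Type 1 and Type 2 (the latter further split into Types 2a and 2b) and to apply Lemmas~\ref{lem:nbType1},~\ref{lem:nbType2Bound},~\ref{lem:nbType2a}, and~\ref{lem:nbType2b} to each class, then add the resulting bounds. The preceding lemmas essentially do all the combinatorial work, so the argument here is largely bookkeeping: the main point is to observe that by carefully taking the minimum of two competing bounds one obtains the claimed form $O(n^2\min(\durationIA,n))$.

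For Type 1, the plan is to apply Lemma~\ref{lem:nbType1} node by node and combine it with Lemma~\ref{lem:nbUmovesWithRoots}. Since the latter bounds the number of $U$-moves of a single node before the first clean configuration by $\min(n\durationIA, 2nD)$, summing over all $n$ nodes yields at most $n\cdot \min(n\durationIA, 2nD) = O(n^2\min(\durationIA, D))$ Type 1 $P$-moves in total. Using $D\leq n-1$, this becomes $O(n^2\min(\durationIA, n))$.

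For Type 2, there are two available bounds. Lemma~\ref{lem:nbType2Bound} provides a global bound of $n\durationIA$, whereas summing the per-node bounds from Lemmas~\ref{lem:nbType2a} and~\ref{lem:nbType2b} gives $n\bigl(n(n+1)+2(n+D)\bigr)=O(n^3)$, again using $D\leq n-1$. Taking the minimum of these two bounds gives $O(\min(n\durationIA, n^3))$, which is at most $O(n^2\min(\durationIA,n))$ since $n\durationIA \leq n^2\durationIA$ and $n^3 = n^2\cdot n$.

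Summing the Type 1 and Type 2 contributions yields the desired $O(n^2\min(\durationIA, n))$ bound on the total number of $P$-moves before the first clean configuration. The only subtle point to watch is that Lemma~\ref{lem:nbUmovesWithRoots} is stated per node while Lemma~\ref{lem:nbType2Bound} is already a global count, so one must be consistent about where the factor of $n$ appears when summing; once this is handled, the result follows immediately.
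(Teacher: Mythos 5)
Your proposal is correct and follows essentially the same route as the paper: the paper's proof is simply the one-line observation that Lemmas~\ref{lem:nbType1}, \ref{lem:nbType2a}, and~\ref{lem:nbType2b} (combined implicitly with Lemmas~\ref{lem:nbUmovesWithRoots} and~\ref{lem:nbType2Bound}) imply the bound, and your write-up supplies exactly the bookkeeping this elides. In fact you are slightly more careful than the paper, which forgets to cite Lemma~\ref{lem:nbType2Bound} even though it is needed to handle the case $\durationIA \ll n$, where the $O(n^3)$ bound from Types 2a and 2b alone would not suffice.
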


\subsection{$C$-moves}

\begin{lemma}\label{lem:nbCmoves}
  During an execution, the number of $C$-moves before the first clean
  configuration is at most the number of $P$-moves plus $n$.
\end{lemma}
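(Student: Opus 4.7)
The plan is to prove the per-node inequality $c(p) \leq 1 + \pi(p)$, where $c(p)$ and $\pi(p)$ denote the number of $C$-moves and of $P$-moves that $p$ performs during the execution. Summing over the $n$ nodes will then give $\sum_p c(p) \leq n + \sum_p \pi(p)$, which is the claim. Note that restricting to moves before the first clean configuration is inessential here: in a clean configuration no node is in error (Lemmas~\ref{lem:clean} and~\ref{lem:E-path}), so neither $R_C$ nor $R_P$ is enabled afterwards.

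For the per-node bound, I would view the trajectory of $p.s$ as a succession of maximal \emph{$E$-intervals} (periods during which $p.s = E$); each $C$-move of $p$ ends exactly one such interval, so $c(p)$ is at most the number of $E$-intervals of $p$. These intervals originate either as the initial interval, if $p.s^0 = E$ (contributing at most $1$), or from a $C \to E$ status transition, which can only be produced by an $R$- or $P$-move of $p$ executed with $p.s = C$ just before. The only nontrivial point is to show that, per node, the initial contribution plus the number of $R$-moves of $p$ that actually produce a $C \to E$ transition add up to at most $1$. This is clear when $p.s^0 = C$ (the initial contribution is $0$, and Lemma~\ref{lem:nbRmoves} bounds the $R$-moves of $p$ by $1$). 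Assume now that $p.s^0 = E$ and, for contradiction, that $p$ performs an $R$-move at some time $t$ with $p.s = C$ just before. Since $p.s^0 = E$, $p$ must have performed at least one $C$-move before $t$; let $t' < t$ be the first such. If $p$ is a root in $\gamma^{t'}$, Lemma~\ref{lem:root_RC} gives that $p$ is not a root in $\gamma^{t'+1}$; if $p$ is not a root in $\gamma^{t'}$, Lemma~\ref{lem:noRootCreation} applied inductively yields that $p$ is not a root at any later time. In both situations $p$ is not a root at any time $>t'$, contradicting the requirement $root(p)$ in the guard of $R_R$ at time $t>t'$.

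Combining the two cases, the number of $E$-intervals of $p$, and therefore $c(p)$, is at most $1 + \pi(p)$; summing over all nodes proves the lemma. The principal (modest) obstacle lies in the mutual-exclusion argument of the previous paragraph, which couples the monotone disappearance of roots (Lemma~\ref{lem:noRootCreation}) with the effect of $R_C$ on roots (Lemma~\ref{lem:root_RC}) to preclude an ``initial $E$'' contribution and a $C \to E$ $R$-move contribution from coexisting for the same node.
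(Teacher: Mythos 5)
Your proposal is correct and follows essentially the same route as the paper: both establish the per-node bound (number of $C$-moves of $p$) $\leq$ (number of $P$-moves of $p$) $+\,1$ and sum over the $n$ nodes, and both hinge on the same key fact — by combining Lemmas~\ref{lem:root_RC} and~\ref{lem:noRootCreation}, after its first $C$-move a node can never again be a root and hence can never again execute $R_R$, so the ``$+1$'' slack and the $R$-moves cannot both contribute. Your $E$-interval bookkeeping and the case analysis on $p.s^0$ are simply a more explicit rendering of the paper's terser argument that every error move occurring between two consecutive $C$-moves must be a $P$-move.
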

\begin{proof}
  Between 2 $C$-moves, a node $p$ must execute an error move.

  But since, after a $C$-move, $p$ is can no longer be a root (by
  Lemma~\ref{lem:root_RC} and \ref{lem:noRootCreation}), $p$ cannot
  execute a $C$-move before an $R$-move. Thus $p$ can execute at most
  one more $C$-move than its number of $P$-moves.  \qed
\end{proof}

\subsection{The move complexity theorem}

The following theorem is a direct corollary of Lemmas
\ref{lem:finalConfiguration}, \ref{lem:nbRmoves}, \ref{lem:nbUmoves},
\ref{lem:nbPmoves} and \ref{lem:nbCmoves}.

\begin{theorem}\label{thm:stepComplexity}
  In any execution, our algorithm always reaches a clean configuration
  in at most $O(n^2\min(\durationIA, n))$moves.

  It does not finish if $\durationIA=\infty$ in greedy mode and
  otherwise, it executes at most:
  \begin{itemize}
  \item $O(n^2\min(\durationIA, n)+n\durationIA))$ moves in greedy
    mode.
  \item $O(n^2\min(\durationIA, n)+n\Time))$ moves in lazy mode.
  \end{itemize}
\end{theorem}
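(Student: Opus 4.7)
The plan is to observe that the theorem is pure bookkeeping on the preceding lemmas, separating the behaviour before the first clean configuration (if any) from the behaviour afterwards. I would split the proof into two parts following this partition.

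For the moves made before the first clean configuration, I would sum the four move-type contributions. $R$-moves are bounded by $n$ via Lemma~\ref{lem:nbRmoves}; $P$-moves are bounded by $O(n^2\min(\durationIA, n))$ via Lemma~\ref{lem:nbPmoves}; $C$-moves are bounded by the $P$-move count plus $n$ via Lemma~\ref{lem:nbCmoves}, hence also $O(n^2\min(\durationIA, n))$; and $U$-moves occurring before the first clean configuration are bounded per node by Lemma~\ref{lem:nbUmovesWithRoots}, giving $\min(n\durationIA, 2nD)$ per node and therefore $O(n^2\min(\durationIA, D))$ in total. Since $D\leq n$, this last term is absorbed into $O(n^2\min(\durationIA, n))$. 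Summing the four contributions yields the first claim of the theorem.

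For the second claim, I would first dispose of the non-termination case: if $\lazyI=\text{greedy}$ and $\durationIA=\infty$, Lemma~\ref{lem:finalConfiguration} states that no terminal configuration exists, so the execution is infinite, matching the theorem. Otherwise, by Lemma~\ref{lem:clean}, after the first clean configuration only the rule $R_U$ can fire, so no further $R$-, $P$-, or $C$-moves are added to the counts already obtained. The total number of $U$-moves over the whole execution is directly controlled by Lemma~\ref{lem:nbUmoves}, yielding $O(n^2\min(\durationIA, D) + n\durationIA)$ in greedy mode with $\durationIA<\infty$ and $O(n^2\min(\durationIA, D) + n\Time)$ in lazy mode. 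Adding the pre-clean bound $O(n^2\min(\durationIA, n))$ and again using $D\leq n$ to absorb the quadratic term gives the two complexities in the theorem.

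No real obstacle is anticipated; the argument is a combination step and the only subtlety is to check that the "before the first clean configuration" qualifiers in Lemmas~\ref{lem:nbUmovesWithRoots},~\ref{lem:nbPmoves}, and~\ref{lem:nbCmoves} align correctly with the split above, which is immediate from Lemma~\ref{lem:clean} since it forbids non-$U$ moves once a clean configuration is reached, and to notice the inequality $D\leq n$ so that the $\min(\durationIA, D)$ terms are dominated by $\min(\durationIA, n)$.
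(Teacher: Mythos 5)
Your proposal is correct and matches the paper's proof, which simply declares the theorem a direct corollary of Lemmas~\ref{lem:finalConfiguration}, \ref{lem:nbRmoves}, \ref{lem:nbUmoves}, \ref{lem:nbPmoves} and~\ref{lem:nbCmoves}; your bookkeeping (splitting at the first clean configuration, summing the per-rule bounds, and absorbing $\min(\durationIA,D)$ into $\min(\durationIA,n)$ via $D\leq n$) is exactly the intended argument, made explicit. Your additional use of Lemma~\ref{lem:nbUmovesWithRoots} for the pre-clean $U$-move count and of Lemma~\ref{lem:clean} (together with the stability of cleanness) to justify the split is precisely what the paper leaves implicit.
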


\section{Round complexity proof}\label{sect:round}

Through out this section, we consider an arbitrary execution
$e = \gamma^0 \cdots$.  Let
$\gamma^0\cdots\gamma^{h_1}\cdots\gamma^{h_2}\cdots\gamma^f$ be a
decomposition of $e$ into non-empty rounds (n.b., $e$ is finite, by
Theorem~\ref{thm:stepComplexity}). We also let
$\gamma^{h_0} = \gamma^0$.

\subsection{The ``error broadcast phase''}

\begin{lemma}\label{lem:errorFirstRound}
  Let $r$ be a root in $\gamma^h$ for $h\geq h_1$.  Then
  $\hei[h]{r}=0$ and $r.s=E$.
\end{lemma}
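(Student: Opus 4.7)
The plan is to combine Lemma~\ref{lem:noRootCreation} and Lemma~\ref{lem:root_RC} with a case analysis based on the definition of the first round. First, I would note that, by Lemma~\ref{lem:noRootCreation} applied inductively, $r$ is a root in every configuration $\gamma^i$ for $0 \leq i \leq h$. I would then establish that $r$ never applies the rule $R_C$ in any step $\gamma^i \mapsto \gamma^{i+1}$ with $i < h$: by Lemma~\ref{lem:root_RC}, such a move would force $r$ not to be a root in $\gamma^{i+1}$, and then (by the contrapositive of Lemma~\ref{lem:noRootCreation}) $r$ would not be a root in $\gamma^h$ either, contradicting the hypothesis. Consequently, once $\hei{r}=0$ and $r.s=E$ holds at some $\gamma^k$, it persists for all indices $\geq k$: $r.s$ cannot switch back to $C$, and with $r.s=E$ and $\hei{r}=0$ the guards of $R_R$ (which requires $\hei{r}>0$ or $r.s=C$), of $R_P$ (which requires $\hei{r}\geq 2$), and of $R_U$ (which requires $r.s=C$) are all false, so neither the height nor the status of $r$ can change.

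It therefore suffices to show that $r$ reaches $\hei{r}=0$ and $r.s=E$ by $\gamma^{h_1}$. I would rely on the observation that any root $r$ with $\hei{r}>0$ or $r.s=C$ is enabled for $R_R$. If $r$ is not enabled in $\gamma^0$, then necessarily $\hei[0]{r}=0$ and $r.s^0=E$, and the persistence argument above gives the conclusion. Otherwise $r$ is enabled in $\gamma^0$ and, by definition of the first round, must either execute a rule or be neutralized during it. If $r$ executes a rule, then the prohibition on $R_C$ combined with the rule priority (and the fact that $R_P$ requires $\hei{r}\geq 2$, so $R_R$ is enabled and dominates whenever $R_P$ is) forces the first action of $r$ to be $R_R$, after which $\hei{r}=0$ and $r.s=E$. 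If $r$ is neutralized at some step $\gamma^{k-1}\mapsto\gamma^k$ in the first round, then $r$ is disabled in $\gamma^k$ while still being a root, so $R_R$ is disabled there, which in turn forces $\hei[k]{r}=0$ and $r.s^k=E$.

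The main obstacle I anticipate is the neutralization subcase, which requires carefully arguing that a root disabled without having acted must already lie in the target state; this hinges on the observation that, for a root, the only rule that can be enabled while $R_R$ is not is $R_C$ (since $R_P$ requires $\hei{r}\geq 2$ hence also enables $R_R$, and $R_U$ requires $r.s=C$ hence also enables $R_R$). The rest is a bookkeeping exercise on rule guards and priorities.
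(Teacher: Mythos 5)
Your proof is correct and follows essentially the same route as the paper's: establish that $r$ reaches $\hei{}{r}=0$ and $r.s=E$ within the first round (via $R_R$ or by being already/forcibly in that state), then use Lemma~\ref{lem:root_RC} together with Lemma~\ref{lem:noRootCreation} to forbid any subsequent $R_C$-move, so the state persists up to $\gamma^h$. The only difference is cosmetic: where the paper tersely asserts that the root ``cannot be disabled'' (ruling out neutralization), you instead observe that a neutralized root must already satisfy $\hei{}{r}=0 \wedge r.s=E$, which is an equally valid and, if anything, more carefully justified treatment of that subcase.
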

\begin{proof}
  We claim that there exists a configuration $\gamma^i$ before the end
  of the first round (i.e., $i\leq h_1$) such that $\hei[i]{r}=0$ and
  $r.s^i=E$.

  If it is not the case in $\gamma^0$, then by
  Lemma~\ref{lem:noRootCreation}, $r$ is a root which can apply the
  rule $R_R$ in $\gamma^0\mapsto\gamma^1$.  Since $r$ cannot be
  disabled, the claimed $i$ exists.

  Now for the state of $r$ to change, it must apply the rule $R_C$.
  But as soon as $r$ does so, by Lemma~\ref{lem:root_RC}, it no longer
  is a root, and, by Lemma~\ref{lem:noRootCreation}, will never be a
  root again.  Since $r$ is a root in $\gamma^h$, the state of $r$
  does not change between $\gamma^i$ and $\gamma^h$.  \qed
\end{proof}

\begin{lemma}\label{lem:penteDouce}
  For any root $r$ in $\gamma^h$ with $h\geq h_{d+1}$, for any node
  $p$ such that $d(p, r)\leq d$, we have $\hei[h]{p}\leq d(p, r)$.
\end{lemma}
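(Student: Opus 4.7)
The plan is to prove the statement by induction on $d\geq 0$. The base case $d=0$ reduces to showing $\hei[h]{r}\leq 0$ for a root $r$ in $\gamma^h$ with $h\geq h_1$, which is precisely Lemma~\ref{lem:errorFirstRound}. For the inductive step, I fix $h\geq h_{d+2}$ and a root $r$ of $\gamma^h$; by Lemma~\ref{lem:noRootCreation}, $r$ is a root in every earlier configuration, so the inductive hypothesis yields $\hei[h']{q'}\leq d(q',r)$ for all $h'\in[h_{d+1},h]$ and all $q'$ within distance $d$ of $r$. The case $d(p,r)\leq d$ follows immediately, so I focus on $d(p,r)=d+1$ and pick a neighbor $q$ of $p$ on a shortest path to $r$; then $d(q,r)=d$ and $\hei[h']{q}\leq d$ throughout $[h_{d+1},h]$.

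The central observation is a monotonicity argument: once $\hei{p}\leq d+1$ at some $h'\geq h_{d+1}$, it remains at most $d+1$ in every later configuration up to $h$. Indeed, $\hei{p}$ can grow only via $R_U$, whose guard requires every neighbor---in particular $q$---to satisfy $\hei{q}\geq\hei{p}$; since $\hei{q}\leq d$, $R_U$ is applicable only when $\hei{p}\leq d$, after which the new value is still at most $d+1$. It therefore suffices to exhibit one $h'\in[h_{d+1},h_{d+2}]$ with $\hei[h']{p}\leq d+1$.

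To produce such an $h'$, I assume for contradiction that $\hei[h']{p}\geq d+2$ throughout round $d+2$. The gap $\hei{p}-\hei{q}\geq 2$ then drives a cascade. Specifically, whenever $q.s=C$, the node $q$ is itself a root via the second clause of $dependencyError$ applied to the neighbor $p$, so $q$ must execute $R_R$ during the round, leaving $\hei{q}=0$ and $q.s=E$. As soon as this occurs, $R_P(\hei{q}+1)$ becomes enabled at $p$ with the smallest admissible index, and the priority ordering on the $R_P$ rules forces any $P$-move of $p$ to send $\hei{p}$ to some $i\leq d+1$, contradicting the assumption. The case $q.s^{h_{d+1}}=E$ is handled analogously by branching on whether $q$ is itself a root (applying $R_R$) or whether an even-lower neighbor of $q$ is in error (triggering an $R_P$ chain that eventually reaches $p$). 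The main obstacle is precisely this case analysis: because a correct node $p$ without an $algoError$ is itself disabled when its neighbor $q$ has much lower height, the required moves must cascade from $q$ (or further out) to $p$ within the confines of a single round, and verifying that round $d+2$ fits this cascade requires tracking the enabled-set dynamics step by step and ruling out any intermediate stall.
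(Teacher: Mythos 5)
Your overall skeleton matches the paper's proof: induction on the distance to the root, base case via Lemma~\ref{lem:errorFirstRound}, and the closing monotonicity argument (once $\hei{p}\leq d+1$ it stays so, because $R_U$ requires every neighbor, in particular $q$ with $\hei{q}\leq d$, to be at least as high as $p$). Those parts are sound. The gap is in the core of the inductive step, where you must force $\hei{p}$ to drop to at most $d+1$ \emph{within} round $d+2$. You handle the case $q.s=C$ by a cascade: $q$ is a root, so it executes $R_R$ during the round, and only \emph{then} does some $R_P$ become enabled at $p$, which then executes it. This fails against the definition of rounds: a round only guarantees that nodes enabled \emph{at its beginning} execute or are neutralized by its end. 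As you yourself note, $p$ is disabled while $q.s=C$; so if the daemon schedules $q$'s $R_R$ move at the very last step of round $d+2$, the $P$-move of $p$ falls outside that round and no contradiction with ``$\hei{p}\geq d+2$ throughout round $d+2$'' is obtained. Two causally dependent moves cannot in general be confined to a single round, and your closing admission (``requires tracking the enabled-set dynamics step by step and ruling out any intermediate stall'') is exactly the unfilled hole; filled along these lines it would at best yield a weaker bound (roughly one extra round per distance unit), not the stated lemma.

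The missing idea is that Lemma~\ref{lem:errorFirstRound} is not only for the base case: it makes your problematic case \emph{vacuous}. At any time $j\geq h_1$, if $q.s^j=C$ while $\hei[j]{p}\geq\hei[j]{q}+2$, then $dependencyError(q)$ holds, so $q$ is a root with status $C$ at a time $\geq h_1$, contradicting Lemma~\ref{lem:errorFirstRound}. Hence $q$ is in error in \emph{every} configuration of round $d+2$ --- this is how the paper argues --- and therefore $p$ is enabled for $R_P$ (with some admissible index at most $\hei{q}+1\leq d+1$) in every configuration of the round, in particular at its start. Since $p$ is never neutralized and is not a root (its height is positive, again by Lemma~\ref{lem:errorFirstRound}, so $R_R$ is not its enabled rule), the definition of a round forces $p$ to execute an error rule during round $d+2$, and the priority among the $R_P(i)$'s makes its new height at most $d+1$: the desired contradiction, with no cascade at all. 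For the same reason, your ``branching'' in the case $q.s=E$ is unnecessary: $R_P$ is enabled at $p$ directly, and no chain through even-lower neighbors of $q$ is needed.
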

\begin{proof}
  If $\gamma^h$ contains no root, then the lemma is true.  Otherwise,
  let $r$ be a root in $\gamma^h$. We prove the lemma by induction on
  $i=d(p,r)$ in this case.

  If $i=0$, then $p=r$.  The base case directly follows from
  Lemma~\ref{lem:errorFirstRound}.

  Suppose now that $i\geq 1$, and let $p$ be a node such that
  $d(r, p)=i$.  Let $q\in N(p)$ such that $d(r, q)=i-1$.

  We first show that there exists $h_i\leq j\leq h_{i+1}$ such that
  $\hei[j]{p}\leq i$.  If $\hei[h_i]{p}\leq i$, then $j=h_i$ and we
  are done.  Otherwise, $\hei[h_i]{p} > i$ and assume, by
  contradiction that $\hei[j]{p} > i$ for any
  $h_i \leq j \leq h_{i+1}$.  Now, by induction hypothesis,
  $\hei[j]{q}\leq i-1$ for any $j \geq h_i$.  So,
  $\hei[j]{p}\geq \hei[j]{q}+2$ for any $h_i \leq j \leq h_{i+1}$.
  Lemma~\ref{lem:noRootCreation} and~\ref{lem:errorFirstRound} implies
  that $p$ is not a root, and cannot apply the rule $R_R$ in any
  $\gamma^j\mapsto\gamma^{j+1}$ for any $h_i \leq j \leq h_{i+1}$.
  Moreover, the node $q$ is in error in $\gamma^{j}$ (for any
  $h_i \leq j \leq h_{i+1}$) as otherwise it would be a root not in
  error, contradicting Lemma~\ref{lem:errorFirstRound}.  Thus, $p$ is
  enabled for rule $R_P$ in $\gamma^{j}$ for any
  $h_i \leq j \leq h_{i+1}$(recall that we already have proven that
  $p$ cannot apply the higher-priority rule $R_R$). By definition of a
  round, $p$ executes $R_P$ during the $i+1$-th round, which leads to
  a contradiction. Hence, there exists $h_i\leq j\leq h_{i+1}$ such
  that $\hei[j]{p}\leq i$.

  To finish, notice that, for any $k \geq j \geq h_i$,
  $\hei[k]{q}\leq i-1$ (by induction hypothesis), which prevents $p$
  from applying the rule $R_U$ so that $\hei{p}>i$, and we are done.
  \qed
\end{proof}

\begin{lemma}\label{lem:roundD}
  For any $h\geq h_{\min(\durationIA, D)+1}$, $\gamma^h$ is almost
  clean.
\end{lemma}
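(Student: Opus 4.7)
The plan is to argue by contradiction, using Lemma~\ref{lem:almostClean} to translate ``$\gamma^h$ is not almost clean'' into the statement ``some error rule is enabled at $\gamma^h$'', and then to show that each possible witness contradicts either Lemma~\ref{lem:errorFirstRound} or Lemma~\ref{lem:penteDouce}. Fix $h\geq h_{\min(\durationIA,D)+1}$ and suppose that $\gamma^h$ is not almost clean. Then either (i)~some root $r$ fails $\hei[h]{r}=0\wedge r.s^h=E$, or (ii)~two neighbors $p,q$ satisfy $\hei[h]{q}\geq\hei[h]{p}+2$.

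Case (i) is immediate: since $h\geq h_1$, Lemma~\ref{lem:errorFirstRound} forces every root of $\gamma^h$ to have height $0$ and status $E$, a contradiction. For case (ii), first observe that if $p.s^h=C$, then $dependencyError(p)$ holds in $\gamma^h$ and $p$ is itself a root, which is impossible by~(i) applied to~$p$ (as $\hei[h]{p}\geq\hei[h]{q}-2\geq 0$ but would have to equal~$0$ while producing the height gap, and moreover $p.s^h$ would have to be~$E$, contradicting $p.s^h=C$). So $p.s^h=E$. By Lemma~\ref{lem:E-path} there exists an $E$-path from $p$ to some root $r$ in $\gamma^h$; because such a path is strictly decreasing in height, $d(p,r)\leq \hei[h]{p}$.

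Now I invoke Lemma~\ref{lem:penteDouce} with $d=\min(\durationIA,D)$. The nontrivial point is to check that both $d(p,r)$ and $d(q,r)$ do not exceed~$d$, so that the bound $\hei[h]{\cdot}\leq d(\cdot,r)$ applies to both $p$ and $q$. This is clear when $\min=D$ since all distances are $\leq D$. When $\min=\durationIA<D$, I use that the list length is capped by~$\durationIA$: from $\hei[h]{q}\geq\hei[h]{p}+2$ and $\hei[h]{q}\leq\durationIA$ one gets $\hei[h]{p}\leq\durationIA-2$, hence $d(p,r)\leq\hei[h]{p}\leq\durationIA$ and $d(q,r)\leq d(p,r)+1\leq\durationIA$.

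Applying Lemma~\ref{lem:penteDouce} to $p$ yields $\hei[h]{p}\leq d(p,r)$, which combined with $d(p,r)\leq\hei[h]{p}$ gives $\hei[h]{p}=d(p,r)$. Applying it to $q$ yields $\hei[h]{q}\leq d(q,r)\leq d(p,r)+1=\hei[h]{p}+1$, contradicting $\hei[h]{q}\geq\hei[h]{p}+2$. Every case leads to a contradiction, so $\gamma^h$ is almost clean. The only delicate point is the case split on $\min(\durationIA,D)$ needed to ensure that $q$ lies within the ``controlled'' ball around~$r$ where Lemma~\ref{lem:penteDouce} gives a height bound; everything else is a direct chaining of earlier lemmas.
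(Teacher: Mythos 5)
Your proof is correct and follows essentially the same route as the paper's: both argue by contradiction, dispose of the root condition via Lemma~\ref{lem:errorFirstRound}, show the lower endpoint of a height gap must be in error and hence (Lemma~\ref{lem:E-path}) sits on an $E$-path to a root $r$ with $d(\cdot,r)$ bounded by its height, and then apply Lemma~\ref{lem:penteDouce} --- with the same case split on whether $\min(\durationIA,D)$ is $D$ (distances trivially bounded) or $\durationIA$ (heights capped by the list-length bound) --- to contradict the assumed gap. The only differences are cosmetic: your $p$/$q$ naming is swapped relative to the paper's, and you invoke Lemma~\ref{lem:penteDouce} once with $d=\min(\durationIA,D)$ where the paper instantiates it separately in each case.
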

\begin{proof}
  Assume, by the contradiction, that $\gamma^h$ is not almost clean if
  $h\geq h_{D+1}$ or $h\geq h_{\durationIA+1}$.

  In either case, $h \geq h_1$.  So, the first part of the almost
  clean definition holds in the two considered cases, by
  Lemma~\ref{lem:errorFirstRound}. Thus, there are two neighbors $p$
  and $q$ such that $\hei[h]{p}\geq \hei[h]{q}+2$.  The node $q$ is in
  error in $\gamma^h$ as otherwise it would be a root not in error,
  contradicting Lemma~\ref{lem:errorFirstRound}.  By
  Lemma~\ref{lem:E-path}, $q$ is the first node of an $E$-path $P$. By
  definition, $P$ leads to some root $r$ and
  $\hei[h]{q}\geq l+\hei[h]{r}$, where $l$ is the length of $P$.
  Since $\hei[h]{r}\geq 0$ and $l \geq d(q,r)$ (by definition), we
  have $\hei[h]{q}\geq d(q,r)$.  Moreover, by
  Lemma~\ref{lem:noRootCreation}, $r$ is already a root in $\gamma^0$.

  We now consider each of the two cases:
  \begin{itemize}
  \item If $h\geq h_{D+1}$, then, by definition, $d(q, r) \leq D$ and
    $d(p, r) \leq D$.

  \item If $h\geq h_{\durationIA+1}$, then, by definition and
    hypothesis, $\hei[h]{q}+2 \leq \hei[h]{p} \leq \durationIA$. So,
    $d(q, r)+2 \leq \durationIA$ and
    $d(p, r)+1 \leq d(q, r)+2 \leq \durationIA$.

  \end{itemize}
  Hence, in each case, we can apply Lemma~\ref{lem:penteDouce} with
  $h_{d+1} = h_{D+1}$ and $h_{d+1} = h_{\durationIA+1}$, respectively.
  Thus $\hei[h]{q} \leq d(q, r)$ and
  $\hei[h]{p} \leq d(p, r) < d(q,r)+2$. But, this implies that
  $\hei[h]{q} = d(q, r)$ and $\hei[h]{p} < \hei[h]{q} +2$ with is a
  contradiction.\qed
\end{proof}

\subsection{The ``error cleaning phase''}

Round complexity proofs are often tedious because if a node $p$ can
apply a rule $X$ at the beginning of a round, it can apply this rule
before the end of the round but it may end up applying another rule
$Y$ or be deactivated.  The following Lemma proves that, as soon as
the system has converged to almost clean configurations, only the
first case happens.  To avoid a lot of technicalities, we will thus
use it implicitly.

\begin{lemma}\label{lem:noDeactivation}
  Assume that for any $h\geq 0$, $\gamma^h$ is almost clean.  If a
  node $p$ can apply a rule $X\in \{R_C, R_U\}$ at the beginning of a
  round, then at the end of this round, $p$ will have executed $X$.
\end{lemma}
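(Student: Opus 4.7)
My plan is to split the statement into two pieces: that $p$ cannot be neutralized during the round without first moving, and that the only rule $p$ can possibly execute is $X$. Since every configuration of the execution is almost clean by hypothesis, Lemma~\ref{lem:almostClean} forbids the execution of $R_R$ and of any $R_P(i)$ at any node and any step of $e$. So only $R_C$ and $R_U$ can ever be applied, and their guards require $p.s=E$ and $p.s=C$ respectively. As long as $p$ has not moved in the current round, $p.s$ is unchanged from the beginning of the round, so among $R_C$ and $R_U$ the one $p$ is able to apply is exactly $X$; this already handles the ``it is $X$ that $p$ executes'' part.

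For the neutralization part, I would prove a one-step persistence claim: if $p$ is enabled for $X$ at a configuration $\gamma^a$ and does not move during $\gamma^a\mapsto\gamma^{a+1}$, then $p$ is still enabled for $X$ at $\gamma^{a+1}$. The key observations are that $p.s$, $\hei{p}$ and $p.L$ are untouched by the step; a neighbor $q$'s height can only increase or stay put (only $R_U$ and $R_C$ can fire, and $R_C$ leaves $\hei{q}$ fixed); the almost-cleanness of $\gamma^{a+1}$ automatically gives $|\hei{q}-\hei{p}|\leq 1$, so no height inequality in the guard of $X$ is broken; and a neighbor's status can only flip from $E$ to $C$, a change that never creates a new $E$-neighbor above $p$ and hence never disables $R_C$. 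Once this claim is in place, it rules out the possibility that $p$ becomes disabled without moving, and the round definition then forces $p$ to execute some rule before the round ends. By the previous paragraph, that rule is $X$.

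The main obstacle I anticipate is the lazy clause of $R_U$, which I plan to handle as follows. If the disjunct $\exists q\in N(p),\; \hei{q}>\hei{p}$ holds at $\gamma^a$, it persists at $\gamma^{a+1}$ because heights never decrease under $R_C$ or $R_U$. Otherwise the disjunct $p.L[\hei{p}]\neq \widehat{\algo}(p,\hei{p})$ must hold, and it persists because a step in which only $R_C$ and $R_U$ fire never modifies an existing list cell: lists can only grow by one element at the back, so $p.L[\hei{p}]$ and the neighbor entries $q.L[\hei{p}]$ that feed $\widehat{\algo}(p,\hei{p})$ are unchanged between $\gamma^a$ and $\gamma^{a+1}$. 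Everything else in the argument is routine bookkeeping around the guards of $R_C$ and $R_U$, and the conclusion follows directly from the one-step claim combined with the round definition.
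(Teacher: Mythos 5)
Your proposal is correct and follows essentially the same route as the paper's proof: the paper also first notes that almost-cleanness (via Lemma~\ref{lem:almostClean}) rules out error rules so that $p.s$ pins down which of $R_C$, $R_U$ is applicable, and then shows $p$ cannot be neutralized, arguing by contradiction on the first configuration where $p$ is deactivated --- which is exactly the contrapositive of your one-step persistence claim, with the same case analysis (status flips only from $E$ to $C$, neighbor heights never decrease, almost-cleanness bounds height gaps, and list cells feeding $\widehat{\algo}(p,\hei{p})$ are immutable when only $R_C$/$R_U$ fire). No gap to report.
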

\begin{proof}
  Since all $\gamma^h$ are almost clean, no node can apply an error
  rule.  So, $X=R_C$ if and only if $p.s=E$, and $X=R_U$ if and only
  $p.s=C$.  We thus only have to prove that $p$ cannot be deactivated.

  By contradiction, let $\gamma^i$ be the first configuration such
  that $p$ has been deactivated.  We thus have that $p$ can apply the
  rule $X$ in $\gamma^{i-1}$, $p.s^{i-1}=p.s^i$ and $p.L^{i-1}=p.L^i$,
  and therefore $\hei[i-1]{p}=\hei[i]{p}$.

  If $X=R_C$, then in $\gamma^i$, there exists $q\in N(p)$ such that
  $\hei[i]{q}=\hei[i]{p}+1$ and $q.s^i=E$ or there exists $q\in N(p)$
  such that $\hei[i]{q}\geq \hei[i]{p}+2$.
  \begin{itemize}
  \item In the first case, since $q.s^i=E$, then either $q$ applies an
    error rule (which is imposible) or $q$ applies no rule in
    $\gamma^{i-1}\mapsto\gamma^i$.

    Thus $p$ is already deactivated in $\gamma^{i-1}$, a
    contradiction.

  \item In the second case, $\bigl|\hei[i]{p}-\hei[i]{q}\bigr|\geq 2$,
    and thus $\gamma^i$ is not almost clean, a contradiction.
  \end{itemize}

  If $X=R_U$, then in $\gamma^i$, there exists $q\in N(p)$ such that
  $\hei[i]{q}\notin\{\hei[i]{p}, \hei[i]{p}+1\}$ or both for all
  $q\in N(p),\; \hei[i]{q}=\hei[i]{p}$, and
  $p.L^i[\hei[i]{p}]= \widehat\algo(p^i, \hei[i]{p})$.

  \begin{itemize}
  \item In the first case, since $\gamma^i$ is almost clean,
    $\hei[i]{q}$ cannot be greater than $\hei[i]{p}+1$.  Thus
    $\hei[i]{q}<\hei[i]{p}$.  But since $\hei[i-1]{q}\leq\hei[i]{q}$
    (recall that no error rule can be applied) and
    $\hei[i]{p}=\hei[i-1]{p}$, $p$ cannot apply the rule $R_U$ in
    $\gamma^{i-1}$, a contradiction.
  \item In the second case, let $l=\hei[i]{p}=\hei[i-1]{p}$.

    If there exists $q\in N(p)$ such that $\hei[i-1]{q}<l$, then $p$
    cannot apply $R_U$ in $\gamma^{i-1}$, which is not the case.

    We therefore have that for all $q\in N(p)$, $\hei[i-1]{q}=l$
    (recall that no error rule can be applied).  We thus have that
    \begin{eqnarray*}
      p.L^{i-1}[l]&=&p.L^i[l]\\
                  &=&\widehat\algo(p^i, l)\\
                  &=&\widehat\algo(p^{i-1}, l)
    \end{eqnarray*}
    and $p$ cannot apply the rule $R_U$ in $\gamma^{i-1}$, a
    contradiction.
  \end{itemize}
  The lemma follows.  \qed
\end{proof}

\begin{lemma}\label{lem:round2D}
  If $\gamma^0$ is almost clean, then for
  $h\geq h_{\min(\durationIA, D)+1}$, $\gamma^h$ is clean.
\end{lemma}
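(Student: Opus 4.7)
My plan is to show that, starting from an almost clean configuration, the maximum height of any error node strictly decreases with each round, and that this maximum starts at most $\min(\durationIA, D)$. Combined with Lemma~\ref{lem:almostCleanClosed}, which guarantees that almost cleanness is preserved, this will yield the bound.

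First, by Lemma~\ref{lem:almostCleanClosed} applied inductively, every configuration $\gamma^h$ with $h \geq 0$ is almost clean. Lemma~\ref{lem:almostClean} then tells me that no node ever applies an error rule ($R_R$ or $R_P$), so only $R_C$ and $R_U$ may be executed, and Lemma~\ref{lem:noDeactivation} is applicable throughout. Let $M^h$ denote the maximum value of $\hei[h]{p}$ over all nodes $p$ with $p.s^h = E$ (and $-1$ if no node is in error). The key claim I would establish is: if $M^{h_k} \geq 0$, then $M^{h_{k+1}} \leq M^{h_k} - 1$.

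To prove the claim, I would pick any error node $p$ with $\hei[h_k]{p} = M^{h_k}$. By definition of $M^{h_k}$, no neighbor of $p$ can simultaneously have status $E$ and a height strictly greater than $M^{h_k}$, and since $\gamma^{h_k}$ is almost clean, neighbor heights differ from $\hei[h_k]{p}$ by at most $1$. Hence $p$ is enabled for $R_C$ at the start of round $k+1$. By Lemma~\ref{lem:noDeactivation}, $p$ executes $R_C$ during this round, so $p$ is no longer in error in $\gamma^{h_{k+1}}$ (it may subsequently apply $R_U$ and increase its height, but only error rules could flip its status back to $E$, and those are disabled). Furthermore, no node becomes newly erroneous during the round, and error nodes cannot apply $R_U$, so their heights remain invariant as long as they stay in error. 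Thus every error node in $\gamma^{h_{k+1}}$ was already in error in $\gamma^{h_k}$ at the same, strictly smaller than $M^{h_k}$, height.

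Next I would bound $M^0$. If no node is in error in $\gamma^0$, then by Lemma~\ref{lem:E-path} no node is a root, so $\gamma^0$ is already clean, and by Lemma~\ref{lem:noRootCreation} all subsequent configurations are clean as well. Otherwise, pick any root $r$ in $\gamma^0$; by almost cleanness, $\hei[0]{r} = 0$. Along a shortest path from $r$ to any node $p$, almost cleanness forces a height variation of at most $1$ per edge, giving $\hei[0]{p} \leq d(p,r) \leq D$. Since $p.L$ contains at most $\durationIA$ cells by definition of the variable, we also have $\hei[0]{p} \leq \durationIA$. Combining, $M^0 \leq \min(\durationIA, D)$. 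Iterating the claim then gives $M^{h_k} \leq \min(\durationIA, D) - k$, so for $k = \min(\durationIA, D) + 1$ we obtain $M^{h_k} < 0$, meaning there is no error node, hence no root, so $\gamma^{h_k}$ is clean; Lemma~\ref{lem:noRootCreation} propagates this property to all later configurations.

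The main subtlety will be the observation that, within a single round, a node may execute $R_C$ then $R_U$; but since $R_U$ requires status $C$, this cannot create new error nodes, and the heights of still-erroneous nodes remain unchanged from round to round, which is what makes the monotone decrease of $M^h$ tight.
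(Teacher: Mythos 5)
Your proof is correct and takes essentially the same route as the paper's: both argue that almost cleanness is preserved forever (Lemma~\ref{lem:almostCleanClosed}), hence no error rule is ever executed, so every error node of maximum height is enabled for $R_C$ and completes it within the round (Lemma~\ref{lem:noDeactivation}), which makes the maximum height of error nodes---initially at most $\min(\durationIA, D)$---decrease by at least one per round, after which Lemma~\ref{lem:noRootCreation} propagates cleanness. One immaterial nit: the implication ``no node in error implies no root'' follows from the definition of almost clean (every root there has status $E$ and height $0$), not from Lemma~\ref{lem:E-path}, which gives the converse direction.
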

\begin{proof}
  First, since $\gamma^0$ is almost clean,
  Lemma~\ref{lem:almostCleanClosed} implies that all $\gamma^h$ wth
  $h\geq 0$ are almost clean.

  The key element of the proof is that, in an almost clean
  configuration, what prevent a node $p$ in error from applying the
  rule $R_C$ is a neighbor $q$ also in error which is above $p$ (i.e.,
  $\hei{q}>\hei{p}$). Thus nodes in error of maximum height at the
  beginning of a round can apply the rule $R_C$, and thus, will have
  by the end of said round. This implies that, after each round, the
  maximum height of a node in error decreases by at least one. Since
  $\gamma^0$ is almost clean, the height of a node is at most
  $D$. And, by construction, it is also at most $\durationIA$.
  Therefore, $\gamma^{\min(\durationIA, D)+1}$ is clean, and by
  Lemma~\ref{lem:noRootCreation}, so are all configuration after. \qed
\end{proof}

\subsection{The ``algorithm phase''}

\begin{lemma}\label{lem:roundAlgoPhaseGreedy}
  Assume that $\gamma^0$ is clean.  In greedy mode, our algorithm
  either does not finish if $\durationIA=\infty$ or reaches a terminal
  configuration within at most $\durationIA$ rounds.
\end{lemma}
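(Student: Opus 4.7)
The plan is to treat the two cases of the statement separately. When $\durationIA = \infty$, the claim that no terminal configuration exists is already contained in Lemma~\ref{lem:finalConfiguration}, so there is nothing new to prove. From now on, assume $\durationIA < \infty$.

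Since $\gamma^0$ is clean, Lemma~\ref{lem:noRootCreation} (applied inductively) ensures that no configuration of the execution contains a root, so by Lemma~\ref{lem:clean} only $R_U$ can ever be enabled. In particular, no node is ever in error, and (since clean implies almost clean, together with Lemma~\ref{lem:almostCleanClosed}) every configuration is almost clean, so Lemma~\ref{lem:noDeactivation} is available throughout the execution. As a side effect, heights are non-decreasing along the execution.

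Let $H^h_{\min} := \min_{p \in V} \hei[h]{p}$. The heart of the argument is the following monotonicity claim: if $H^h_{\min} < \durationIA$ at the beginning of a round starting in $\gamma^h$ and ending in $\gamma^{h'}$, then $H^{h'}_{\min} \geq H^h_{\min} + 1$. To prove it, take any node $p$ with $\hei[h]{p} = H^h_{\min}$. Because $\gamma^h$ is almost clean, every neighbor $q$ satisfies $\hei[h]{q} \in \{H^h_{\min}, H^h_{\min}+1\}$. Moreover $p.s^h = C$ and $\hei[h]{p} < \durationIA$, so in greedy mode the guard of $R_U$ holds at $p$. By Lemma~\ref{lem:noDeactivation}, $p$ executes $R_U$ during the round, hence $\hei[h']{p} \geq H^h_{\min} + 1$; combined with the fact that heights cannot decrease, this forces $H^{h'}_{\min} \geq H^h_{\min} + 1$. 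Iterating this claim yields, after at most $\durationIA$ rounds, a configuration $\gamma^h$ with $H^h_{\min} \geq \durationIA$. Combined with the hard cap $\hei{p} \leq \durationIA$ enforced by the $p.L$ data structure, every node then satisfies $\hei{p} = \durationIA$ exactly. In such a configuration no rule is enabled ($R_U$ fails because $\hei{p}$ is already $\durationIA$, error rules fail by cleanness, and $R_C$ fails because no node is in error), so $\gamma^h$ is terminal.

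I expect the only subtle point to be verifying that in greedy mode the guard of $R_U$ is genuinely satisfied at every node of minimum height. This is exactly where greedy mode differs from lazy mode: the additional lazy-mode clause (requiring $p.L[\hei{p}] \neq \widehat\algo(p, \hei{p})$ or some neighbor strictly above) is absent in greedy mode, so any node of minimum height with $\hei{p} < \durationIA$ is automatically enabled. The rest of the proof is a straightforward induction on rounds feeding into Lemma~\ref{lem:noDeactivation}.
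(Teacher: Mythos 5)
Your proof is correct and takes essentially the same route as the paper: dispatch the $\durationIA=\infty$ case via Lemma~\ref{lem:finalConfiguration}, then use closure of (almost) cleanness to restrict the execution to $R_U$-moves and argue that the minimum height increases by at least one per round until it hits $\durationIA$. The only difference is that you make explicit the appeal to Lemma~\ref{lem:noDeactivation} to guarantee that a minimum-height node actually fires $R_U$ within the round, a step the paper's terser proof leaves implicit.
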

\begin{proof}
  The case $\durationIA$ follows from
  Lemma~\ref{lem:finalConfiguration}.  Otherwise, by
  Lemma~\ref{lem:almostCleanClosed}, all configurations are almost
  clean, so, no error rule is executed during $e$. Since $\gamma^0$
  contains no node in error, a node $p$ can only apply the rule $R_U$,
  which increases its height by one each time.

  Since any node $p$ with the lowest $\hei{p}<\durationIA$ can apply
  the rule $R_U$, the Lemma the follows from the fact that the minimum
  height of a node increases by at least one while the configuration
  is not terminal. \qed
\end{proof}

From now on, we assume that the algorithm runs in lazy mode.
Theorem~\ref{thm:stepComplexity} thus ensures that a terminal
configuration $\gamma^f$ exists.  By
Lemma~\ref{lem:finalConfiguration}, there exists $H$ such that for
each $p$, $\hei[f]{p}=H$.  We call $H$ the \emph{height} of the
terminal configuration.

Lemma~\ref{lem:stableConfiguration} implies that, $p.L[i]$ is the
state that the synchronous algorithm that we simulate assigns to $p$
at round $i$. Thus if $p.L[i+1]\neq p.L[i]$, it means that the
synchronous algorithm has not finished. Thus during our simulation, if
$p.L[i]\neq algo(p.L[i], N(p).L[i])$, then $p$ must apply the rule
$R_U$. This is the ``algorithm'' condition to apply the rule $R_U$.
The other condition is the ``catch up'' condition. If a node $p$ has a
neighbor $q$ such that $\hei{p}<\hei{q}$, then $p$ has to assume that
the synchronous algorithm has not finished, and therefore, it must
apply the rule $R_U$.  It is the catch up condition that ensures that
for all $p$, $\hei[f]{p}=H$.

We do not know how the lists are filled during the ``algorithm
phase''.  But any node $p$ such that $p.L[i]\neq p.L[i+1]$
($0\leq i<H$) may have been the first node to fill up the value
$p.L[i+1]$. We say that $p$ \emph{may have started line} $i+1$.  Now,
since $p.L[i+1]$ only depends on the values $q.L[i]$ for $q\in N[p]$,
if all $q\in N[p]$ are such that $q.L[i]=q.L[i+1]$, then no
$q\in N[p]$ may start line $i+2$. Therefore, if $p$ may start line
$i+1$, then either $i=0$ or there exists $q\in N[p]$ which may start
line $i$.

This motivates the following definition. A \emph{starting sequence}
for $\gamma^f$ is a sequence of nodes $s_1s_2\cdots s_H$ such that
each $s_i$ starts line $i$, and $s_{i-1}\in N[s_i]$ if $i>1$.  Note
that a terminal configuration may not have a starting sequence.
Indeed, if $\gamma^f$ is terminal and we set $p.L[H+1]:=p.L[H]$, then
the new configuration is also terminal but contains no starting
sequence.  Also, if $\gamma^f$ contains a starting sequence, then
$H=\Time$.

\begin{lemma}\label{lem:roundAlgoPhaseLazy1}
  If $\gamma^0$ is clean and $\gamma^f$ contains a starting sequence,
  then $e$ reaches a terminal configuration in at most $D+3\Time-2$
  rounds in lazy mode.
\end{lemma}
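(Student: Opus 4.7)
Since $\gamma^0$ is clean, Lemmas~\ref{lem:noRootCreation} and~\ref{lem:clean} imply that every configuration of $e$ is clean and only the rule $R_U$ is ever applied; heights are therefore monotone non-decreasing and, by Lemma~\ref{lem:stableConfiguration}, $p.L^h[i] = \stateIA{p}{i}$ for all indices in range. The presence of a starting sequence $s_1\cdots s_H$ forces $H=\Time$, since the defining condition $\stateIA{s_i}{i-1}\neq \stateIA{s_i}{i}$ cannot hold for any $i>\Time$.

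The plan is to track the starting sequence through the asynchronous execution. The central claim, to be established by induction on $i$, is that at the beginning of some round $r_i \leq 3(i-1)+1$, the node $s_i$ satisfies $\hei{s_i}\geq i$. The base case $i=1$ is direct: either $\hei[0]{s_1}\geq 1$ already, or $s_1$ is enabled for $R_U$ in $\gamma^0$ by the algorithm condition (as $\stateIA{s_1}{0}\neq\stateIA{s_1}{1}$) and will execute it during round~$1$ by Lemma~\ref{lem:noDeactivation}. For the inductive step, assuming $s_{i-1}$ is at height $\geq i-1$ by round $r_{i-1}$, I would argue that within at most three additional rounds the neighbor $s_i \in N[s_{i-1}]$ reaches height $i$. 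The three rounds absorb the worst case where $s_i$ first has to wait for some neighbor currently at height $<i-1$ to catch up (recall that the $R_U$-guard requires every neighbor in $\{\hei{s_i}, \hei{s_i}+1\}$), then grows by applying $R_U$ itself---enabled either via the algorithm condition (since $\stateIA{s_i}{i-1}\neq\stateIA{s_i}{i}$) or via catch-up from $s_{i-1}$. Lemma~\ref{lem:noDeactivation} ensures that, once enabled, $s_i$ genuinely moves before the round ends.

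Once the induction reaches $i=H=\Time$, by round at most $3\Time-2$ the node $s_H$ has attained height $H$. A shorter second argument then shows that within $D$ additional rounds every node reaches height $H$: any node strictly lower than some neighbor is enabled for $R_U$ by the catch-up condition and, by Lemma~\ref{lem:noDeactivation}, executes it in the current round, so the set of nodes already at height $H$ grows by at least one hop per round. Summing the two phases gives the claimed bound $D+3\Time-2$, at which point the configuration satisfies the conditions of Lemma~\ref{lem:finalConfiguration} and is terminal.

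I expect the main obstacle to lie in the inductive step, specifically the constant ``$3$''. Bounding precisely the delay between the moment $s_{i-1}$ reaches height $i-1$ and the moment $s_i$ reaches height $i$ requires a delicate interplay between the almost-clean height-difference invariant $\lvert\hei{p}-\hei{q}\rvert\leq 1$, the lazy-mode refusal to grow when the algorithm condition is locally satisfied, and the catch-up condition. The natural approach is a case analysis on $\hei{s_i}$ relative to $\hei{s_{i-1}}$ and on whether $s_i=s_{i-1}$, showing that in every scenario either $s_i$ already has height $\geq i$ or some explicit enabling of $R_U$ at $s_i$ or at a blocking neighbor occurs within the budget of three rounds.
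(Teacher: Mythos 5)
Your overall plan---first track the starting sequence with a ``$+3$ rounds per level'' account, then spread spatially in $D$ rounds---does not survive its own inductive step, and the step is false as stated, not merely hard. The guard of $R_U$ requires \emph{every} neighbor of $s_i$ to have height at least $\hei{s_i}$, but your induction hypothesis controls only the heights of the sequence nodes $s_1,\dots,s_{i-1}$; the almost-clean invariant bounds neighboring height differences by $1$, so a descending staircase of heights can hang off $s_i$, and flattening it takes time proportional to its depth, which can be $\Theta(D)$ rather than $3$. Concretely, take the path $v_0v_1\cdots v_n$ with initial clean heights $\hei[0]{v_j}=j$ and, say, min-identifier propagation from $v_n$, so that $\Time=n$ and the starting sequence is $s_i=v_{n-i}$. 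For every $i\leq n/2$ the node $s_i$ already has height $\geq i$ in $\gamma^0$, so your clock for level $\lfloor n/2\rfloor+1$ starts at round $0$; yet $s_{\lfloor n/2\rfloor+1}$ is blocked by the staircase below it and cannot execute $R_U$ even once before roughly $n/2$ rounds have elapsed, because the catch-up wave must rise from $v_0$ one node per round (already for $n=7$, $s_4=v_3$ needs four rounds, exceeding your budget of three). The local case analysis you propose (on $\hei{s_i}$ versus $\hei{s_{i-1}}$ and on blocking neighbors) cannot repair this, since the obstruction involves nodes at arbitrary distance from the sequence.

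The paper closes exactly this hole by refusing to separate the two phases: it proves, by one induction on the round index $j$, that \emph{every} node $p$ satisfies $\hei{p}\geq i$ from round $\lambda(p,i):=3i-2+d(p,s_i)$ onward, for all levels $i$ simultaneously. The distance term is what makes the step work: for any $q\in N[p]$ one has $\lambda(q,i-1)\leq \lambda(p,i)-1$ (using $s_{i-1}\in N[s_i]$ and $q\in N[p]$), so the induction hypothesis itself guarantees that \emph{all} neighbors of $p$---including staircase nodes far from the sequence---have height $\geq i-1$ one round earlier; then either $p=s_i$ moves by the algorithm condition, or $p$ moves by catch-up from a neighbor closer to $s_i$. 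Your phase 1 and phase 2 are the special cases $p=s_i$ and $i=H$ of this statement, but the statement must be proved for all pairs $(p,i)$ at once; strengthening your hypothesis to that form is, in effect, adopting the paper's proof. (Your phase-2 argument has the same flaw in miniature: a node strictly lower than some neighbor is \emph{not} necessarily enabled, since another of its neighbors may be strictly lower still.)
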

\begin{proof}
  According to the assumptions on $\gamma^0$, nodes can only apply
  Rule $R_U$ along the execution (Lemma~\ref{lem:almostCleanClosed}).
  This also implies that $\hei{p}$ can only increase.  Let
  $s_1\cdots s_\Time$ be a starting sequence of $\gamma^f$.  We also
  let $s_i = s_1$, for any $i < 1$.  For any node $p$ and
  $0\leq i\leq \Time$, we let $\lambda(p, i)=3i-2+d(p, s_i)$.  The
  lemma is a direct consequence of the following induction.

  We now prove by induction on $0\leq j\leq 3\Time+D-2$ that for every
  $p$ and $i$ such that $\lambda(p, i)\leq j$, we have $\hei{p}\geq i$
  (forever) from $\gamma^{h_j}$.

  If $j=0$, then $i=0$ and the result is clear.

  Suppose that $j> 0$.  If no $(p, i)$ such that $\lambda(p, i)=j$
  exists, then we are done.  Otherwise, let $(p, i)$ be such a pair.
  For any $q\in N[p]$,
  $\lambda(p, i)-\lambda(q, i-1)=3+d(p, s_i)-d(q, s_{i-1})$, and thus
  $\lambda(p, i)-\lambda(q, i-1)\geq 3-|d(p, s_i)-d(p,
  s_{i-1})|-|(d(p, s_{i-1})-d(q, s_{i-1})|$.  Now
  $|d(p, s_i)-d(p, s_{i-1})|\leq 1$ because $s_{i-1}\in N[s_i]$, and
  $|(d(p, s_{i-1})-d(q, s_{i-1})|\leq 1$ because $q\in N[p]$.  We thus
  have $\lambda(q, i-1)<j$.  By induction hypothesis, for any
  $q\in N[p]$, $\hei{q}\geq i-1$ in $\gamma^{h_{j-1}}$.

  Three cases now arise:
  \begin{itemize}
  \item If $\hei{p}\geq i$ in $\gamma^{h_{j-1}}$, then we are done.
  \item If $\hei{p}=i-1$ in $\gamma^{h_{j-1}}$ and $p=s_i$.  Then,
    since $s_i$ starts Line $i$, $p$ can apply the rule $R_U$ in
    $\gamma^{h_{j-1}}$, and thus will have done at last at
    $\gamma^{h_j}$.

  \item If $\hei{p}=i-1$ in $\gamma^{h_{j-1}}$ and $p\neq s_i$.  Then,
    let $q\in N(p)$ be such that $d(q, s_i)<d(p, s_i)$.  We have
    $\lambda(q, i)<j$, and thus, by induction hypothesis,
    $\hei{q}\geq i$ in $\gamma^{h_{j-1}}$.  This implies that $p$ can
    apply the rule $R_U$ in $\gamma^{h_{j-1}}$, and thus will have
    done at last at $\gamma^{h_j}$.  \qed
  \end{itemize}
\end{proof}

\begin{lemma}\label{lem:roundAlgoPhaseLazy2}
  If $\gamma^0$ is clean and $\gamma^f$ contains no starting sequence,
  then the execution $e$ reaches a terminal configuration within at
  most $2D$ rounds in lazy mode.
\end{lemma}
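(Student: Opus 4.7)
The plan proceeds in two steps: first establish that the terminal height $H$ equals $H^0 := \max_p \hei[0]{p}$, then bound the number of rounds until catch-up completes. Throughout, since $\gamma^0$ is clean, Lemmas~\ref{lem:almostCleanClosed} and~\ref{lem:clean} together imply that every configuration of $e$ remains clean, so only rule $R_U$ is ever applied and heights are non-decreasing.

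To show $H = H^0$: Lemma~\ref{lem:boundedIncreaseWithoutRootLazy} yields $\hei[j]{p} \leq \max(H^0, \Time)$ for every $j$ and $p$. On the other hand, since $\gamma^f$ contains no starting sequence, some line $i_0 \leq H$ satisfies $p.L^f[i_0] = p.L^f[i_0-1]$ for every $p$, and Lemma~\ref{lem:stableConfiguration} then yields $\stateIA{p}{i_0} = \stateIA{p}{i_0-1}$ for all $p$; hence $\Time \leq i_0-1 < H$. Consequently $\max(H^0, \Time) = H^0$, forcing $H \leq H^0$, and monotonicity gives $H = H^0$.

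For the round bound, let $V^0 := \{p : \hei[0]{p} = H^0\}$. Almost-cleanness of $\gamma^0$ gives $\hei[0]{p} \geq H^0 - d(p, V^0) \geq H^0 - D$ for every $p$. The plan is to prove by induction on $k$ that $\min_p \hei[h_k]{p} \geq \min(H^0, \min_p \hei[0]{p} + \lfloor k/2 \rfloor)$; taking $k = 2D$ then yields that every node has reached $H^0$, so $\gamma^{h_{2D}}$ is terminal. The key observation driving the inductive step is that in any clean, almost-clean configuration with current minimum $m < H^0$, the valley $\{p : \hei{p} = m\}$ is a proper subset of $V$ and by connectivity of $G$ has a nonempty boundary; each boundary node is enabled for $R_U$ via a catch-up neighbor at height $m+1$ (almost-cleanness rules out any neighbor below $m$), and Lemma~\ref{lem:noDeactivation} ensures it executes during the next round, peeling off the boundary layer.

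The main obstacle will be formalizing the amortized ``one level per two rounds'' claim when the valley has large internal diameter: the minimum may stay constant while the valley merely contracts during one round, then rise only in a later one. Tracking both the minimum and some measure of the valley's size (or equivalently, analyzing how the $R_U$ wave propagates over two-round intervals and unblocks progressively deeper interior nodes) is what is needed to obtain the $\lfloor k/2 \rfloor$ lower bound and conclude convergence within $2D$ rounds.
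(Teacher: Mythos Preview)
Your first step (establishing $H = H^0$) is correct, though the paper argues it more directly: if no node has height $H$ in $\gamma^0$, then the first node $s'$ ever to reach height $H$ must do so via the algorithm condition of $R_U$ (all its neighbors are still below $H$, so catch-up cannot fire), hence $s'.L[H]\neq s'.L[H-1]$ and $s'$ starts line $H$; the backward chain then produces a starting sequence, contradiction.

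Your second step has a genuine gap: the invariant
\[
\min_p \hei[h_k]{p} \;\geq\; \min\bigl(H^0,\; \min_p \hei[0]{p} + \lfloor k/2 \rfloor\bigr)
\]
is \emph{false}, and no bookkeeping on the valley's size rescues the amortized ``one level per two rounds'' rate. Take a path $v_0v_1\cdots v_D$ with $\hei[0]{v_0}=1$ and $\hei[0]{v_i}=0$ for $i\geq 1$, and assume $\Time=0$ (so only catch-up can trigger $R_U$). This configuration is clean and almost-clean, and $\gamma^f$ has no starting sequence. The catch-up wave advances exactly one node per round from $v_0$: after round $k$, nodes $v_0,\ldots,v_k$ are at height $1$ and $v_{k+1},\ldots,v_D$ remain at $0$. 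Thus $\min_p \hei[h_k]{p}=0$ for all $k<D$, whereas your invariant already asserts $\min\geq \min(1,1)=1$ at $k=2$. The global minimum simply does not increase at any fixed amortized rate; it can stay constant for $\Theta(D)$ rounds while a single level is being filled.

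The missing idea is that progress is per-node and governed by distance to the anchor $s$ (any node with $\hei[0]{s}=H$), not by a global minimum. The paper fixes such an $s$ and sets $\lambda(p,i)=2(i+D-H)-D+d(p,s)$, then proves by induction on $j$ that $\lambda(p,i)\leq j$ implies $\hei[h_j]{p}\geq i$. The inductive step works because for any $p\neq s$ there is a neighbor $q$ strictly closer to $s$ with $\lambda(q,i)<\lambda(p,i)$, providing the catch-up trigger, while the ``$+2$'' per level in $\lambda$ guarantees all neighbors are already at $i-1$. Taking $j=2D$ gives $\lambda(p,H)=D+d(p,s)\leq 2D$ for every $p$, which is the bound. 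Replacing your global-minimum induction with this anchored, per-node potential is what is needed.
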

\begin{proof}
  According to the assumptions on $\gamma^0$, nodes can only apply
  Rule $R_U$ along the execution (Lemma~\ref{lem:almostCleanClosed}).
  This also implies that $\hei{p}$ can only increase.  Let $H$ be the
  height of $\gamma^f$.

  We first claim that there exists a node $s$ such that $\hei{s}=H$ in
  $\gamma_0$.  Indeed otherwise, since such a node exists in
  $\gamma^f$, choose the smallest $h\leq f$ such that $\hei[h]{s'}=H$
  for some node $s'$.  The node $s'$ starts line $H$, which implies
  the existence of a starting sequence, contradicting then our
  assumptions.

  We now let $\lambda(p, i)=2(i+D-H)-D+d(p, s)$.

  The rest of the proof is now very similar to the proof of the
  previous lemma.  We prove by induction on $0\leq j\leq 2D$ that if
  $\lambda(p, i)\leq j$, then in $\gamma^{h_j}$, $\hei{p}\geq i$.

  Suppose that $j=0$. We claim that if $\lambda(p, i)\leq 0$, then
  $i\leq H-d(p, s)$.  Indeed, for all $y$,
  $\lambda(p, y)<\lambda(p, y+1)$ and
  $\lambda(p, H-d(p, s))=D-d(p, s)\geq 0$.  To prove the base case, it
  is enough to prove that for all $p$, $\hei[0]{p}\geq H-d(p, s)$
  which follows from the fact that $\gamma^0$ contains no node in
  error and is almost clean.

  Suppose that $j>0$.  Let $(p, i)$ be such that $\lambda(p, i)=j$.
  For any $q\in N[p]$,
  $\lambda(p, i)-\lambda(q, i-1)=2+d(p, s)-d(q, s)>0$.  So
  $\lambda(q, i-1)<j$ and, by induction hypothesis, $\hei{q}\geq i-1$
  in $\gamma^{h_{j-1}}$.

  Two cases now arise:
  \begin{itemize}
  \item If $\hei{p}\geq i$ in $\gamma^{h_{j-1}}$, then we are done.
  \item If $\hei{p}=i-1$ in $\gamma^{h_{j-1}}$, then $p\neq s$.  Then
    let $q\in N(p)$ be such that $d(q, s)<d(p, s)$.  We have
    $\lambda(q, i)<j$, and thus, by induction hypothesis,
    $\hei{q}\geq i$ in $\gamma^{h_{j-1}}$.  This implies that $p$ can
    apply Rule $R_U$ in $\gamma^{h_{j-1}}$, and thus will have done at
    last at $\gamma^{h_j}$.  \qed
  \end{itemize}
\end{proof}

\subsection{The round complexity proof}

The following Theorem is a direct consequence of
Lemmas~\ref{lem:finalConfiguration}, \ref{lem:roundD},
\ref{lem:round2D}, \ref{lem:roundAlgoPhaseGreedy},
\ref{lem:roundAlgoPhaseLazy1} and \ref{lem:roundAlgoPhaseLazy2}.
\begin{theorem}
  Our algorithm reaches a clean configuration in at most
  $2+2\min(\durationIA, D)$ rounds. It does not end if
  $\durationIA=\infty$ in greedy mode, and it reaches a terminal
  configuration in at most
  \begin{itemize}
  \item $\min(2\durationIA, 2D)+2+\durationIA$ rounds in greedy mode
    when $\durationIA<\infty$.
  \item $\min(2\durationIA, 2D)+\max(2D+2, D+3\Time)$ round in lazy
    mode.
  \end{itemize}
\end{theorem}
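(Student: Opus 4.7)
The plan is to chain the three phases of the execution—error broadcast, error cleaning, and simulation—by stitching together the round bounds provided by the cited lemmas, with a simple shift argument to handle the fact that the algorithm-phase lemmas all assume a clean starting configuration.

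First, I would establish the ``reach a clean configuration'' bound. By Lemma~\ref{lem:roundD}, after at most $\min(\durationIA, D)+1$ rounds the system is in an almost clean configuration $\gamma^h$. Lemma~\ref{lem:almostCleanClosed} guarantees that almost cleanliness is then preserved forever, so I can take $\gamma^h$ as a new initial configuration and apply Lemma~\ref{lem:round2D} to it, giving another $\min(\durationIA, D)+1$ rounds to reach a clean configuration. Summing yields the announced $2+2\min(\durationIA, D) = \min(2\durationIA, 2D)+2$ bound. By Lemma~\ref{lem:noRootCreation}, cleanliness is also preserved, so from this point on we may treat the remaining suffix of the execution as starting from a clean configuration.

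Next, I would bound the ``algorithm phase''. In greedy mode, the case $\durationIA = \infty$ is handled directly by Lemma~\ref{lem:roundAlgoPhaseGreedy}, which says no terminal configuration is reached. If $\durationIA < \infty$, the same lemma adds at most $\durationIA$ rounds on top of the preceding bound, giving $\min(2\durationIA, 2D) + 2 + \durationIA$ rounds. In lazy mode, Theorem~\ref{thm:stepComplexity} ensures the execution is finite, so a terminal configuration $\gamma^f$ exists, and Lemma~\ref{lem:finalConfiguration} characterises it. I then split on whether $\gamma^f$ admits a starting sequence: Lemma~\ref{lem:roundAlgoPhaseLazy1} gives at most $D + 3\Time - 2$ additional rounds in the first case, and Lemma~\ref{lem:roundAlgoPhaseLazy2} gives at most $2D$ additional rounds in the second case. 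Taking the maximum and adding the earlier $2\min(\durationIA, D)+2$ rounds yields exactly $\min(2\durationIA, 2D) + \max(2D+2, D+3\Time)$.

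The proof is essentially combinatorial bookkeeping, so I do not anticipate a real obstacle. The only subtle point worth stating explicitly is the legitimacy of ``restarting'' the execution at an (almost) clean configuration when invoking Lemmas~\ref{lem:round2D}, \ref{lem:roundAlgoPhaseGreedy}, \ref{lem:roundAlgoPhaseLazy1}, and~\ref{lem:roundAlgoPhaseLazy2}; this is justified by the closure properties of almost clean and clean configurations under execution (Lemmas~\ref{lem:almostCleanClosed} and~\ref{lem:noRootCreation}), which guarantee that the hypothesis ``$\gamma^0$ is (almost) clean'' in those lemmas can be read as ``the starting point of the considered suffix is (almost) clean''.
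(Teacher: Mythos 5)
Your proposal is correct and follows exactly the paper's argument: the paper states this theorem as a direct consequence of Lemmas~\ref{lem:finalConfiguration}, \ref{lem:roundD}, \ref{lem:round2D}, \ref{lem:roundAlgoPhaseGreedy}, \ref{lem:roundAlgoPhaseLazy1} and~\ref{lem:roundAlgoPhaseLazy2}, and your proof simply makes that chaining explicit, with the same phase decomposition, the same arithmetic, and the same (implicit in the paper) appeal to closure of almost clean and clean configurations to restart the round count at each phase boundary.
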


\section{Instances}\label{instances}

We now develop several examples to illustrate the versatility and the
efficiency of our approach. All of them solve classical distributed
computing problems.

\subsection{Leader Election}\label{ex:leader}

\subsubsection{The Problem}

Recall that leader election requires all nodes to eventually
permanently designate a single node of the network as the leader. To
that goal, we assume an identified network and nodes will compute the
identifier of the leader.

\subsubsection{The Algorithm}
The network being identified, we do not need to distinguish
channels. So, \texttt{label} is the singleton $\{\bot\}$.

The state of each node $p$ includes its own identifier $ID$ (a
non-modifiable integer) and an integer variable $Best$ where $p$ will
store the identifier of the leader. This latter variable is
initialized with $p$'s own identifier. Hence, the predicate
\texttt{isValid} just checks that (1) no two nodes have the same
identifier and (2) all $Best$ variables are correctly initialized,
i.e., the initial value of each variable is equal the node identifier.

At each synchronous round, each node updates its variables $Best$ with
the minimum value among the $Best$ variables of its closed
neighborhood, and thus learns the minimum identifier of nodes one hop
further. The function \algo is defined accordingly; see
Algorithm~\ref{alg:leader}. After at most $D$ rounds, the $Best$
variable of each node is forever equal to the minimum identifier in
the network: the algorithm is eventually stable.

\begin{algorithm}[thbp]
  \small \SetKw{Input}{inputs:}

  \Input\\
  \begin{tabular}{lll}
    $\stateI{p}$ & : & \state, the state of $p$ \hfill\ {\em /*} initially, $\stateI{p}.Best = \stateI{p}.ID$  {\em */}\\
    {\tt NeigSet}$_p$ & : & set of pairs in \texttt{label} $\times$ \state  \hfill\ {\em /*} from the neighborhood {\em */}\\
  \end{tabular}
  \bigskip

  \Begin{
    Let $minID = \min(\{\stateI{p}.Best\} \cup \{s.Best\ |\ (\perp,s) \in {\tt NeigSet}_p\})$;\\
    {\bf return} $(\stateI{p}.ID,minID)$; }

  \caption{Function \algo of node $p$ for the leader election.}
  \label{alg:leader}
\end{algorithm}

\subsubsection{Contribution and Related Work}
Using our transformer in the lazy mode, we obtain a fully-polynomial
silent self-stabilizing leader election algorithm that stabilizes in
$O(D)$ rounds and $O(n^3)$ moves. Moreover, by giving an upper bound
$\durationIA$ on $D$ as input of the transformer, we obtain a
bounded-memory solution achieving similar time
complexities. Precisely, if we made the usual assumption that
identifiers are stored in $O(\log n)$ bits, we obtain a memory
requirement in $O(\durationIA.\log n)$ bits per node.

To our knowledge, our solution is the first fully-polynomial
asynchronous silent self-stabilizing solution of the
literature. Indeed, several self-stabilizing leader election
algorithms~\cite{DaLaVe11a,DaLaVe11b,AlCoDeDuPe17}, written in the
atomic-state model, have been proposed for arbitrary connected and
identified network assuming a distributed unfair daemon. However, none
of them is fully-polynomial. Actually, they all achieve a
stabilization time in $\Theta(n)$ rounds. Note that the algorithm
in~\cite{AlCoDeDuPe17} has a stabilization time in steps that is
polynomial in $n$, while~\cite{DaLaVe11a,DaLaVe11b} have been proven
to stabilize in a number of steps that is at least exponential in $n$;
see~\cite{AlCoDeDuPe17}.  Notice also that the algorithm proposed
in~\cite{KrKu13} actually achieves a leader election in $O(D)$ rounds,
however it assumes a synchronous scheduler.

\subsection{Breadth-First Search Spanning Tree Construction}

\subsubsection{The Problem}
We now consider the problem of distributedly computing a breadth-first
search (BFS) spanning tree in a rooted network. By ``distributedly'',
we mean that every non-root node will eventually designate the channel
toward its parent in the computed spanning tree. Being BFS, the length
of the unique path in the tree from any node $p$ to the root $r$
should be equal to the distance from $p$ to $r$ in the network.

This time, nodes are not assumed to be identified. Instead, we need to
distinguish channels using port numbers, for example.

\subsubsection{The Algorithm}
The state of each node $p$ contains a non-modifiable boolean $Root$
indicating whether or not the node is the root and a parent pointer
$Par$ that takes value in $\texttt{label} \cup \{NULL\}$. Initially,
each parent pointer is set to $NULL$. So, the predicate
\texttt{isValid} needs to check that (1) exactly one node has its
$Root$-variable equal to true, (2) each $Par$-variable is $NULL$, and
(3) locally, channels have distinct labels.

At each round, each non-root node $p$ whose $Par$-pointer is $NULL$
checks whether a neighbor is the root or has a non-$NULL$
$Par$-pointer; in this case $p$ (definitely) designates the channel to
such a neighbor with its pointer. If several neighbors satisfy the
condition, $p$ breaks ties using channel labels. The function \algo is
defined accordingly; see Algorithm~\ref{alg:bfs}.

\begin{algorithm}[thbp]
  \small \SetKw{Input}{inputs:}

  \Input\\
  \begin{tabular}{lll}
    $\stateI{p}$ & : & \state, the state of $p$ \hfill\ {\em /*} initially, $\stateI{p}.Par = NULL$  {\em */}\\
    {\tt NeigSet}$_p$ & : & set of pairs in \texttt{label} $\times$ \state  \hfill\ {\em /*} from the neighborhood {\em */}\\
  \end{tabular}
  \bigskip

  \Begin{ \uIf{$\stateI{p}.Root \vee \stateI{p}.Par \neq NULL$}{{\bf
        return $\stateI{p}$;} } \uElse{
      \uIf{$\exists (c,s) \in {\tt NeigSet}_p\ |\ s.Root \vee s.Par
        \neq NULL$}{ {\bf return} $(\stateI{p}.Root,c_{min})$, where
        $c_{min} = \min(\{c \ |\ (c,s) \in {\tt NeigSet}_p \wedge
        (s.Root \vee s.Par \neq NULL)\})$; } \uElse{{\bf return
          $\stateI{p}$};} } }

  \caption{Function \algo of node $p$ for the BFS spanning tree
    construction.}
  \label{alg:bfs}
\end{algorithm}

After at most $D$ synchronous rounds, all non-root nodes have a
parent, i.e., the BFS spanning tree is (definitely) defined and so the
algorithm is eventually stable.

\subsubsection{Contribution and Related Work}
Similarly to the leader election instance, using our transformer in
the lazy mode, we obtain a fully-polynomial silent self-stabilizing
leader election algorithm that stabilizes in $O(D)$ rounds and
$O(n^3)$ moves. Moreover, by giving an upper bound $\durationIA$ on
$D$ as input of the transformer, we obtain a bounded-memory solution
achieving similar time complexities. Precisely, its memory requirement
is $O(\durationIA.\log \Delta)$ bits per node, where $\Delta$ is the
maximum node degree in the network.

To our knowledge, our solution is the first fully-polynomial
asynchronous silent self-stabilizing solution of the literature that
achieves a stabilization time asymptotically linear in rounds.
Indeed, several self-stabilizing algorithms that construct BFS
spanning trees in arbitrary connected and rooted networks have been
proposed in the atomic-state
model~\cite{HuCh92,CoDeVi09,CoRoVi19,DaDeJoLa19}.  In~\cite{DeJo16},
the BFS spanning tree construction of Huang and Chen~\cite{HuCh92} is
shown to be exponential in steps.  The algorithm in~\cite{CoDeVi09} is
not silent and computes a BFS spanning tree in $O(\Delta \cdot n^3)$
steps and $O(D^2+n)$ rounds.  The silent algorithm given
in~\cite{CoRoVi19} has a stabilization time in~$O(D^2)$ rounds
and~$O(n^6)$ steps. The algorithm given in~\cite{Jo97} is not silent
and is shown to stabilize in $O(D\cdot n^2)$ rounds
in~\cite{DaDeJoLa19}, however notice that its memory requirement is in
$O(\log \Delta)$ bit per node.

Another self-stabilizing algorithm, implemented in the link-register
model, is given in~\cite{DoIsMo93}. It uses unbounded node local
memories.  However, it is shown in~\cite{DeJo16} that a
straightforward bounded-memory variant of this algorithm, working in
the atomic state model, achieves an asymptotically optimal
stabilization time in rounds, i.e., $O(D)$ rounds where $D$ is the
network diameter; however its step complexity is also shown to be at
least exponential in $n$.

\subsection{3-coloring in Rings}

\subsubsection{The Problem}

The coloring problem consists in assigning a color (a natural integer)
to every node in such a way that no two neighbors have the same color.
We now present an adaptation of algorithm of Cole and
Vishkin~\cite{CoVi86} that computes a 3-coloring in any oriented ring
of $n$ identified nodes.  The algorithm further assumes that node
identifiers are chosen in $[0..n^c-1]$, with $c \in \mathds N^*$.
Under such assumptions, the algorithm computes a vertex 3-coloring in
$log^*(n^c)+7$ rounds.

\subsubsection{The Algorithm}

The orientation of the ring is given by the channel labels. A node
should distinguish the state of its clockwise neighbor from its
counterclockwise one.  For instance, we can assume the channel number
of the clockwise neighbor is smaller than the one of counterclockwise
neighbor.  Without the loss of generality, we use two channel labels:
$L$ (for Left) and $R$ (for Right).  A consistent orientation is
obtained by assigning different labels for the two channels of each
node and different labels to the incoming and outgoing channels of
each edge.

The state of each node $p$ includes its own identifier $ID$ and four
variables:
\begin{enumerate}
\item $ph \in \{0,1\}$, initialized to 0; $ph$ indicates the current
  phase of the algorithm.
\item $maxColSize$, a natural integer initialized to
  $\lceil \log_2(n^c) \rceil$; $maxColSize$ is an upper bound on the
  number of bits necessary to store the current largest color.
\item $nbRds\in\{0,1,2,3\}$, initialized to 3, indicates the number of
  remaining rounds in Phase 1.
\item $color \in [0..n^c-1]$, initialized to $ID$, is the current
  color of the node.
\end{enumerate}

The predicate \texttt{isValid} should checks that (1) the ring
orientation is correct, (2) identifiers are taken in $[0..n^c-1]$ and
no two nodes have the same identifier, and (3) all variables are
correctly initialized.

\begin{algorithm}[thbp]
  \small \SetKw{Input}{inputs:} The type $\state$ is a record of five
  natural integers: $ID$, $ph$, $maxColSize$, $nbRds$, $color$.

  \bigskip

  \Input\\
  \begin{tabular}{lll}
    $\stateI{p}$ & : & \state, the state of $p$ \\
    \multicolumn{3}{r}{{\em /*} initially, $\stateI{p}.ph = 0$, $\stateI{p}.maxColSize= \lceil \log_2(n^c) \rceil$, $\stateI{p}.nbRds=3$ {\em */}}\\
    \multicolumn{3}{r}{/* $\stateI{p}.color=\stateI{p}.ID$; $\stateI{p}.ID$ is the node identifier {\em */}}\\
    {\tt NeigSet}$_p$ & : & set of pairs in \texttt{label} $\times$ \state  \hfill\ {\em /*} from the neighborhood {\em */}\\
  \end{tabular}

  \SetKw{Macros}{Macros:}

  \bigskip

  \Macros\\
  $bin(col)$ is $col$ interpret as a little-endian bit strings\;
  $col[i]$ is the value of $i$th bit of $col$ interpret as a
  little-endian bit strings\; $bitD (c1, c2)$ is the lowest index $i$
  such that $bin(c1)$ and $bin(c2)$ differ \; $posD(c1, c2)$ is
  $2\cdot bitD(c1 ,c2)+ c1[bitD(c1, c2)]$\;

  \bigskip

  \Begin{
   \uIf {$\stateI{p}.ph = 0$} { 
     Let $colS_p = s.color$ such that $(R,s) \in {\tt NeigSet}_p$\;
     $oldC_p := \stateI{p}.color$\;
     $oldMCS_p := \stateI{p}.maxColSize$\;
      \uIf {$1 + \lceil \log_2 (oldMCS_p) \rceil = oldMCS_p$}
      { 
      {\bf return} ($1$, $oldMCS_p$,  $\stateI{p}.nbRds$, $posD (oldC_p, colS_p)$)\;
      }
     \uElse{
   {\bf return} ($0$, $1 + \lceil \log_2 (oldMCS_p) \rceil$,  $\stateI{p}.nbRds$, $posD (oldC_p ,colS_p)$)\;
    }  
    }
   \uElse{
      \uIf {      $ \stateI{p}.nbRds > 0$}
      { 
      \uIf     {   $\stateI{p}.color = 2+\stateI{p}.nbRds$}
      {
        $nc$ :=  the first  color not in $\{st.color ~|~  (-,st) \in {\tt NeigSet}_p\}$\;
        {\bf return} ($1$, $\stateI{p}$.MaxColSize,  $\stateI{p}.nbRds-1$,  nc)\;
        }
        \uElse{
        {\bf return} ($1$, $\stateI{p}$.MaxColSize,  $\stateI{p}.nbRds- 1$,  $\stateI{p}$.color)\;
       }
   }
    \lElse {     {\bf return}  $\stateI{p}$}
   }

 }
 \caption{ Function \algo of node $p$ for 3-coloring on oriented
   rings}
 \label{alg:3-coloring}
\end{algorithm}

\algo is given in Algorithm~\ref{alg:3-coloring}.  At the end of an
arbitrary round of Phase 0, the correct coloring of the ring is
maintained, yet the upper bound on the number of bits necessary to
store the largest color reduces from $maxColSize$ to
$\lceil \log_2 maxColSize\rceil+1$.

The node $p$ detects that Phase 0 is over when the upper bound on the
number of bits necessary to store the largest color has not changed at
the end of the current round.  In this case, Phase 1 can start: each
node has a color in \{0, 1, 2, 3, 4, 5\}.

The three rounds of Phase 1 allow to remove colors 5, 4, and 3; in
that order.  In a round of Phase 1, any node having the color to
remove takes the first unused color in its neighborhood, this color
belongs to the set \{0, 1, 2\} as the network topology is a ring.
After $3$ rounds, all nodes has a color in \{0, 1, 2\}.

The synchronous algorithm terminated (no node changes its state) after
$log^*(n^c)+7$ rounds.

Using our transformer in the greedy mode with $B \geq log^*(n^c)+7$,
we obtain a silent self-stabilizing 3-coloring algorithm on oriented
rings that stabilizes in $O(B)$ rounds and $O(n^2B)$ moves.  Moreover,
its memory requirement is in $O(B.\log n)$ bits per node.  If we
carefully choose $B$ to be in $O(log^*(n))$, then we obtain a solution
that stabilizes in $O(\log^* n)$ rounds and $O(\log^*(n).n^2)$ moves
using $O(\log^* n.\log n)$ bits per node.

\subsubsection{Contribution and Related Work}
To our knowledge, our solution is the first self-stabilizing
3-coloring ring algorithm achieving such small complexities.

Indeed, self-stabilizing node coloring has been almost exclusively
investigated in the context of anonymous networks. Vertex coloring
cannot be deterministically solved in fully asynchronous
settings~\cite{An80}. This impossibility has been circumvented by
considering central schedulers or probabilistic
settings~\cite{GrTi00,BeDeGrPaTi10,BeDeGrTi09}.

In~\cite{LeSuWa09}, a self-stabilizing version of the Cole and Vishkin
algorithm is proposed, but the solution (based on the rollback of
Awerbuch and Varghese~\cite{AwVa91}) does not achieve a move
complexity polynomial in $n$.

\subsection{$k$-clustering}

\subsubsection{The Problem}
The clustering problem consists in partitioning the nodes into
clusters. Each cluster is a BFS tree rooted at a so-called
clusterhead. The $k$-clustering specialization of the problem requires
each node to be at a distance at most $k$ from the clusterhead of the
cluster it belongs to.

\subsubsection{The Algorithm}
A silent self-stabilizing algorithm that computes a clustering of at
most $\lceil \frac{n}{k+1}\rceil$ clusters in any identified and
rooted networks is proposed in~\cite{DaDeHeLaRi16}.  This algorithm is
a (hierarchical collateral~\cite{DaLaDeHeRi13}) composition of two
layers. The first one can be any silent self-stabilizing spanning tree
construction, and the second one is actually a silent self-stabilizing
algorithm for oriented tree networks.  The correctness of the
algorithm is established assuming a distributed weakly-fair
daemon. The stabilization time in rounds depends on the used spanning
tree constructions, indeed once the spanning tree is available, the
second layer stabilizes in at most $2H+3$ rounds, where $H$ is the
height of the tree.  So, using for example, the BFS spanning tree
construction given in~\cite{DeJo16}, we obtain a stabilization time on
$O(D)$ rounds.  However, since the correctness of the algorithm is
established under a daemon that is stronger than the distributed
unfair daemon, the move complexity of the solution cannot be bounded.

We now propose to make this algorithm fully-polynomial using our
transformer.

First, concerning the hypotheses, we only assume that nodes are
identified. We give up the root assumption, and we neither use any
labeling on channels.

Actually, we will modify the synchronous eventually stable leader
election of Subsection~\ref{ex:leader} to compute a BFS spanning tree
rooted at the leader node in at most $2D-1$ synchronous rounds and
then computes the $k$-clustering along the tree in at most $2D+3$
additional synchronous rounds.

To that goal, we need the following constants and variables:
\begin{itemize}

\item $ID$, the node identifier (a non-modifiable integer stored in
  $O(\log n)$ bits);

\item $Best$, an integer variable (stored in $O(\log n)$ bits)
  initialized to $ID$, this variable will eventually definitely
  contain the identifier of the leader;

\item $Dist$, an integer variable (stored in $O(\log D)$ bits)
  initialized to 0 where will be stored the distance between the node
  and the leader;

\item $Par$, a pointer (stored in $O(\log n)$ bits) designating an
  identifier and initialized to $ID$, this pointer aims at designating
  the parent of the node in the tree (the parent of the leader will be
  itself); and

\item the variables of the $k$-clustering for tree
  of~\cite{DaDeHeLaRi16} (stored in $O(\log k + \log n)$ bits). Those
  variables do not require initialization, since the algorithm
  of~\cite{DaDeHeLaRi16} is self-stabilizing.
\end{itemize}

The predicate \texttt{isValid} is defined similarly to previous
examples. We now outline the definition of \algo.

At each synchronous round, each node $p$ performs the following
actions in sequence:
\begin{enumerate}
\item $p$ update its $Best$ variable to be the minimum value among the
  $Best$ variables of its closed neighborhood.
\item If $Best$ is equal to the node identifier, then $(Dist,Par)$ is
  set to $(0,ID)$. Otherwise, $Dist$ is set to the minimum value among
  the $Dist$ variables of neighborhood plus one, and $Par$ is updated
  to designate the neighbor with the smallest $Dist$-value.
\item If necessary, $p$ updates its $k$-clustering variables according
  to the algorithm in~\cite{DaDeHeLaRi16}.

  Notice that for this latter action, we should define the following
  predicate and macro: (1) $Root(p)$ which is true if and only if the
  $Best$ variable of $p$ is equal to its own identifier, (2)
  $Children(p)$ that returns the children of $p$ in the tree, {\em
    i.e.}, the identifiers associated to neighboring states where the
  $Par$-variable designates $p$.
\end{enumerate}

Like in the leader election example, the $Best$ variable constantly
designates a unique leader after at most $D$ synchronous
rounds. Moreover, from that point, the $Dist$ and $Par$ variables of
the leader are forever equal to 0 and its identifier,
respectively. Hence, within at most $D-1$ additional synchronous
rounds, the values of all $Dist$ and $Par$ become constant and define
a BFS spanning tree. From that point, the $k$-clustering variables
stabilize in $2D+3$ additional synchronous rounds.  Overall, we obtain
the stability in at most $4D+2$ synchronous rounds.

\subsubsection{Contribution and Related Work}

Using our transformer in the lazy mode, we obtain a fully-polynomial
solution that constructs at most $\lceil \frac{n}{k+1}\rceil$ clusters
and whose stabilization times in rounds and moves are of the same
order of magnitude as the two previous examples, i.e., $O(D)$ rounds
and $O(n^3)$ moves. Moreover, by giving any value
$\durationIA \geq 4D+2$ as input of the transformer, we also obtain a
bounded-memory solution achieving similar time
complexities. Precisely, we obtain a memory requirement in
$O(\durationIA.(\log k+\log n))$ bits per node.

Again, several asynchronous silent self-stabilizing $k$-clustering
algorithms have been proposed in the literature
~\cite{DaLaDeHeRi13,Jo15,DaDeHeLaRi16,DaDeLa19}, but none of them
achieves full polynomiality. Actually, they offers various structural
guarantees such as minimality by inclusion~\cite{Jo15,DaDeLa19},
bounds on the number of clusters~\cite{DaLaDeHeRi13,DaDeHeLaRi16}, or
approximation ratio~\cite{DaDeHeLaRi16}.  The stabilization time in
rounds of~\cite{DaLaDeHeRi13,DaDeHeLaRi16,DaDeLa19} is in $O(n)$,
while the one of \cite{Jo15} is unknown. Moreover, to the best of our
knowledge none of those algorithms have a proven bound on its
stabilization time in moves.

It is worth noting that a particular spanning tree construction,
called {\em MIS tree}, is proposed in~\cite{DaDeHeLaRi16}. Actually, a
MIS tree is a spanning tree whose nodes of even level form a maximal
independent set of the network. Using this spanning tree construction
as the first layer of the $k$-clustering allows to obtain interesting
competitive ratios (related to the number of clusters) in unit disk
graphs and quasi-unit disk graphs. However, these desirable properties
are obtained are the price of augmenting the stabilization time in
rounds. Indeed, the MIS tree construction stabilizes in $\Theta(n)$
rounds and the exact problem solved by this construction is shown to
be $\mathcal P$-complete~\cite{DaDeHeLaRi16}.  The replacement of the
BFS construction by a MIS tree construction (left as an exercise)
would allow to obtain those approximation ratios in $O(n)$ rounds and
a number of moves polynomial in $n$.

\section{The Rollback Compiler}\label{sect:roll}

Awerbuch and Varghese have proposed a transformer called
\emph{Rollback} to self-stabilize synchronous algorithms dedicated to
static tasks. Consider a synchronous algorithm for some static task
that terminates in $T$ rounds. The basic principle of the transformer
consists for each node to keep a log of the states it took along the
$T$ rounds. Assume each node stores its log into the array $p.t$ where
\begin{itemize}
\item $p.t[0]$ is its initial state ($p.t[0]$ is read-only, and so
  cannot be corrupted) and
\item $p.t[i]$, with $0< i \leq T$, is the state computed by $p$
  during the $i^{th}$ round.
\end{itemize}
Then, every node $p$ just has to continuously checked and corrected
$p.t[i]$, for every $i\in [1..T]$: every $p.t[i]$ should be the state
obtained by applying the local algorithm of $p$ on the states
$q.t[i-1]$ of every $q \in N[p]$ (its closed neighborhood).

Consider now the following synchronous algorithm, $A$. The state of
every node $p$ in $A$ is made of a constant input $p.I \in \{0,1\}$
and a variable $p.S \in \{0,1\}$. The variable $p.S$ is initialized to
$p.I$. Then, the local algorithm of $p$, $A(p)$, consists of a single
rule:
\begin{center}
  $R_{\min} : p.S \neq \min_{q \in N[p]}\{q.S\} \to p.S := \min_{q \in
    N[p]}\{q.S\}$
\end{center}
At each synchronous round, the rule $R_{\min}$ allows any node $p$ to
compute in $p.S$ the minimum $S$ value in its closed neighborhood.  In
the worst case (e.g., all inputs except one are equal to 1),
$\mathcal D$ rounds are required so that the execution of $A$ in a
network $G$ of diameter $\mathcal D$ reaches a terminal configuration.
Overall, $A$ computes the minimum value among all the boolean inputs,
i.e., $A$ reaches in $\mathcal D$ rounds a terminal configuration
where $p.S=\min_{q \in V(G)}\{q.I\}$.

We now study the step complexity of the self-stabilization version of
$A$ obtained with the rollback compiler; we denote the transformed
algorithm by $RC(A)$. For fair comparison, we assume the fastest
method to correct a node state: when activated, in one atomic step the
node recomputes all cells of its array in increasing order according
to the local configuration of its closed neighborhood.

Let $G_1$ be the path $b_1, a_1, c_1, d_1, e_1$. For every $x > 1$, we
construct $G_x$ by linking to $G_{x-1}$ the path
$b_x, a_x, c_x, d_x, e_x$ as follows: $b_x$ and $e_x$ are linked to
$c_{x-1}$. Figure~\ref{fig:1} shows the network $G_3$. In the
following, we denote by $V_{x}$, with $x \geq 1$, the subset of nodes
$\bigcup_{i \in [1..x]} \{b_i, a_i, c_i, d_i, e_i\}$, i.e., the set of
nodes of $G_x$.

Notice that for every $x > 1$, $G_x$ contains $5x$ nodes (i.e.,
$|V_{x}| = 5x$) and its diameter is $3x-1$. Hence, executing $RC(A)$
on $G_x$ with $x > 1$ requires that each node $p$ maintains an array
$p.t$ of $3x$ cells indexed from 0 to $3x-1$.

Given any positive number $d$, we denote by $\bar i$, with
$0 \leq i \leq d$, any array $t$ of size $d$ such that $t[j]=1$ for
every $j \in [0..i-1]$ and $t[j]=0$ for every $j \in [i..d-1]$.  The
\emph{index} of an array $\bar i$ is $i$.  In the following, we simply
refer to the index of the array of some node $p$ as \emph{the index of
  $p$}.

\begin{remark}
  Assume a configuration where arrays have size $d$ and every node $p$
  satisfies $p.t=\overline {i_p}$ for some value $i_p$.  When
  activating some node $q$ such that $i_q > 0$, if $0 < i^{\min} < d$
  is the minimum index in the closed neighborhood of $q$ ($N[q]$),
  then $q$ sets $q.v$ to $\overline{i^{\min}+1}$.
\end{remark}

\begin{center}
  \begin{figure}[htpb]
    \centering \includegraphics{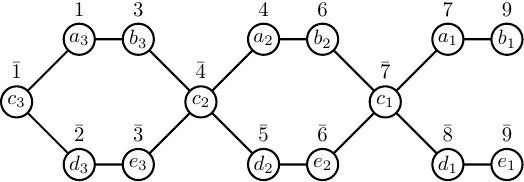}
    \caption{The network $G_3$}
    \label{fig:1}
  \end{figure}
\end{center}

 \begin{lemma}
   For every integer $x > 1$, there is an execution of $RC(A)$ on
   $G_x$ that requires at least $2^{x}-1$ steps to reach a terminal
   configuration.
 \end{lemma}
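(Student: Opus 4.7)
I would prove the lemma by induction on $x\geq 2$, constructing at each level an initial configuration $\sigma_x$ together with an adversarial activation order $\pi_x$ whose resulting execution of $RC(A)$ on $G_x$ uses at least $2^x-1$ steps before reaching a terminal configuration.

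\emph{Base case ($x=2$).} On $G_2$ (arrays of size $6$), I would exhibit by hand a small initial configuration and activation order that requires at least $3$ steps. For instance, corrupt three array cells belonging to pairwise non-interfering nodes, so that each must be repaired by its own separate activation; this gives $3=2^2-1$ steps.

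\emph{Inductive step ($x\geq 3$).} Given $(\sigma_{x-1},\pi_{x-1})$ on $G_{x-1}$ delivering at least $2^{x-1}-1$ steps, I would build $(\sigma_x,\pi_x)$ so that the execution decomposes into three phases, of total length at least $2(2^{x-1}-1)+1 = 2^x-1$. Phase~A replays $\pi_{x-1}$ on the embedded copy of $G_{x-1}$, with the new gadget nodes $b_x,a_x,c_x,d_x,e_x$ initially carrying carefully chosen ``dormant'' values that keep them disabled and do not perturb Phase~A's activations of $G_{x-1}$ (in particular, they do not affect the recomputations of $c_{x-1}$). Phase~B is a single activation of a gadget node, whose effect, combined with the state of $c_{x-1}$ at the end of Phase~A, places the system in a configuration equivalent, up to a shift of array indices, to $\sigma_{x-1}$ embedded in $G_{x-1}$. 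Phase~C is then a second replay of $\pi_{x-1}$, contributing another $2^{x-1}-1$ steps.

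\emph{Main obstacle.} The delicate part is engineering $\sigma_x$ so that Phase~A is not disturbed by the gadget yet a single gadget activation in Phase~B suffices to re-seed a $\sigma_{x-1}$-like corruption. Two structural features make this possible. First, the arrays have length $3x$, which is $3$ longer than in $G_{x-1}$, leaving room to encode ``latent'' inconsistent values in the high-index cells of the gadget arrays that remain invisible to the activations of Phase~A. Second, the gadget contains two vertex-disjoint paths of length $3$ between $c_{x-1}$ and $c_x$, so that a single recomputation inside the gadget can propagate the latent values back into the closed neighbourhood of $c_{x-1}$ in exactly the right pattern. Verifying that no gadget node is enabled earlier than planned, and that the Phase~B activation produces precisely the desired corruption without triggering extra cascades, requires a careful cell-by-cell bookkeeping through the execution; this is the crux of the construction. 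Once it is in place, summing the three phase lengths yields the required $2^x-1$ lower bound.
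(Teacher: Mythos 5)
Your overall strategy --- double the step count at each level by running the lower-level execution, using the newly added gadget to reset the system, then replaying the lower-level execution --- is exactly the idea behind the paper's proof. But your proposal leaves precisely the load-bearing part unproven, and the specific form you give it cannot work. The problem is Phase~B. For Phase~C to be a replay of $\pi_{x-1}$, every node activated by $\pi_{x-1}$ must again be enabled, with the right neighbourhood contents, when its turn comes. Phase~A changes the arrays of many nodes spread all over $G_{x-1}$, and those changes must be undone (or otherwise accounted for) before any replay. Now a single activation of a gadget node changes only that gadget node's own array, and the only node of $G_{x-1}$ adjacent to any gadget node is $c_{x-1}$ (both $b_x$ and $e_x$ attach there). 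So after your one Phase-B move, every node of $G_{x-1}$ other than $c_{x-1}$ sees exactly the same closed neighbourhood as at the end of Phase~A; no single move can restore replayability for nodes far from $c_{x-1}$. Relatedly, your induction hypothesis is too weak to close the argument: knowing only that $(\sigma_{x-1},\pi_{x-1})$ costs $2^{x-1}-1$ steps says nothing about the configuration it leaves behind, which is what the replay needs (your ad hoc base case, three unrelated corrupted cells, illustrates this: it carries no structure the inductive step could use). Any correct version of this argument must strengthen the hypothesis to specify the exact net effect of the prefix.

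This is what the paper does, and it is instructive to see how. It fixes a single graph $G_x$ and a single initial configuration (indices $3(x-i)+1$ for $a_i$ and $c_i$, $3(x-i)+2$ for $d_i$, $3(x-i)+3$ for $b_i$ and $e_i$), and inducts on the level $i$ inside this fixed graph rather than on the graph size. The invariant is precise: the prefix $P_i$ moves only nodes of $V_i$ other than $c_i$, and its net effect relative to the initial configuration is exactly that each $a_j$ with $j\leq i$ has its index increased by one. The reset is then not one move but a long cascade: activate $b_i$ (its index drops by $1$ because it sits two above $a_i$), walk down the path $b_i c_{i-1} d_{i-1} e_{i-1}\cdots c_1 d_1 e_1$ decrementing every index, decrement all the $a_j$ with $j<i$ back to their initial values, increment $a_i$, and walk back up the path re-incrementing; only then is the configuration identical to the initial one except at $a_i$, whose neighbours $b_i$ and $c_i$ lie outside everything the replay touches, so the replay of $P_{i-1}$'s schedule is valid. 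This multi-move reset is exactly the ``careful cell-by-cell bookkeeping'' you deferred; it, not the replay, is the crux, and your plan as stated contains no mechanism that can do its job.
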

 \begin{proof}
   Assume the following initial configuration $\gamma_\text{init}$:
   \begin{itemize}
   \item All nodes have 1 as input.
   \item For every $i \in [1..x]$,
     \begin{itemize}
     \item $a_i.t=c_i.t=\overline{3(x-i)+1}$,
     \item $d_i.t=\overline{3(x-i)+2}$, and
     \item $b_i.t=e_i.t=\overline{3(x-i)+3}$.
     \end{itemize}
   \end{itemize}
   Notice that the configuration is well-defined since all nodes has
   input 1 and a positive index; moreover the maximum index is
   $3(x-1)+3 = 3x$.  (An example of possible initial configuration is
   given in Figure~\ref{fig:1}.)

   We now show by induction on $i$ that for every $i \in [1..x]$,
   there is a prefix $P_i$ of execution starting from
   $\gamma_\text{init}$ and containing at least $2^i-1$ steps such
   that
   \begin{itemize}
   \item no node in $V(G_x) \setminus V_{i} \cup \{c_i\}$ moves in
     $P_i$ and

   \item in the last configuration of $P_i$, noted $\gamma_{P_{i}}$,
     the index of all nodes $a_j$ with $j \in [1..i]$ have increased
     by 1 and no other index has changed.
   \end{itemize}

   For $i = 1$, activating $a_1$ gives a prefix satisfying all
   requirements.  Assume now $1< i \leq x$. We now explain how to
   expand the prefix $P_{i-1}$ given by the induction hypothesis.
   $P_{i-1}$ contains at least $2^{i-1}-1$ steps. Moreover,
   \begin{itemize}
   \item no node in $V(G_x) \setminus V_{i-1} \cup \{c_{i-1}\}$ has
     ever moved in $P_{i-1}$ and

   \item the index of all nodes $a_j$ with $j \in [1..i-1]$ in
     $\gamma_{P_{i-1}}$ have increased by 1 and no other index has
     changed.
   \end{itemize}
   So, we should extend $P_{i-1}$ to reach at least $2^i-1$ steps
   while never activating a node in
   $V(G_x) \setminus V_{i} \cup \{c_i\}$, moreover compared to
   $\gamma_{P_{i-1}}$, we should only increase the index of $a_i$ by
   one.

   Consider the path
   $P=b_ic_{i-1}d_{i-1}e_{i-1}\cdots c_2d_2e_2c_1d_1e_1$ and the
   following scheduling of activation starting from
   $\gamma_{P_{i-1}}$.

   \begin{enumerate}
   \item Because the difference of indexes in $b_{i}$ and $a_{i}$ is
     2, when activating $b_i$, its index decreases by 1.  But then the
     difference of indexes in $c_{i-1}$ and $b_{i}$ becomes 2 which
     allows us to decrease the index of $c_{i-1}$.  By activation the
     vertices of $P$ in order, all the indexes of the vertices of $P$
     decrease by 1.
   \item We can next activate all the vertices $a_j$ with $1\leq j< i$
     which also decrease their index by 1.
   \item We increase by 1 the index of $a_{i}$ by activating it.
   \item We now activate all nodes of $P$, in order, to increase their
     index by 1.
   \end{enumerate}
   Let $\gamma_\text{half}$ be the configuration reached after the
   previous scheduling. Compared to $\gamma_\text{init}$, only one
   index has changed: $a_i$. So, we re-apply the scheduling that has
   produced $P_{i-1}$. Let $\gamma_\text{final}$ be the reached
   configuration after applying this scheduling on
   $\gamma_\text{half}$. Compared to $\gamma_\text{init}$, the index
   of all nodes $a_j$ with $j \in [1..i]$ have increased by 1 and no
   other index has changed. Moreover, no node in
   $V(G_x) \setminus V_{i} \cup \{c_i\}$ has moved in the prefix that
   led to $\gamma_\text{final}$. Finally, that prefix has length at
   least $(2^{i-1}-1)+1+(2^{i-1}-1) = 2^i-1$. Thus, we can let
   $\gamma_{P_{i}} = \gamma_\text{final}$ and let $P_i$ be the prefix
   from $\gamma_\text{init}$ to $\gamma_{P_{i}}$ we have just
   exhibited. $P_i$ satisfies all requirements for $i$ and the
   induction holds.  Finally, by letting $i = x$, the lemma
   holds. \qed
 \end{proof}

\begin{corollary}
  For every integer $x > 1$, there is an execution of $RC(A)$ on $G_x$
  that requires at least $2^{\frac{n}{5}}$ steps
  (resp. $2^{\frac{\mathcal D+1}{3}}$) to reach a terminal
  configuration, where $n$ is the number of nodes in (resp.
  $\mathcal D$ is the diameter of) $G_x$.
\end{corollary}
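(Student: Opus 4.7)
The plan is straightforward: this corollary is nothing more than a re-parameterization of the preceding lemma, expressing its lower bound in terms of the standard complexity parameters $n$ and $\mathcal{D}$ of the network rather than the construction parameter $x$.

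First, I would recall the two structural facts about $G_x$ stated in the excerpt just before the lemma: $G_x$ contains $n = |V_x| = 5x$ nodes, and its diameter is $\mathcal{D} = 3x-1$. These are immediate from the inductive construction, since each stage $i$ adds five new vertices $b_i,a_i,c_i,d_i,e_i$ and extends the diameter by $3$ (a fresh $b_i a_i c_i$ segment attached via $c_{i-1}$). Solving these two equalities for $x$ gives $x = n/5$ and $x = (\mathcal{D}+1)/3$.

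Second, I would invoke the lemma: on $G_x$, there is an execution of $RC(A)$ requiring at least $2^{x}-1$ steps before reaching a terminal configuration. Substituting the two expressions for $x$ yields lower bounds of $2^{n/5}-1$ and $2^{(\mathcal{D}+1)/3}-1$, respectively. The vanishing ``$-1$'' is harmless under the hypothesis $x>1$: since $2^{x}-1\ge 2^{x-1}$ for $x \ge 1$, the bound matches the statement up to a factor of $2$, which is the sense in which the corollary is to be read (and is what matters, since the point is to exhibit an exponential blow-up in $n$ and in $\mathcal{D}$).

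No main obstacle is foreseen. All of the substantive combinatorial content — exhibiting the pathological initial configuration $\gamma_{\text{init}}$ and the inductive doubling scheduling that builds $P_i$ from $P_{i-1}$ by sweeping the path $b_i c_{i-1} d_{i-1} e_{i-1} \cdots c_1 d_1 e_1$ down and back up — was already handled in the proof of the lemma. The corollary is pure parameter bookkeeping on top of that lemma, and its proof should take at most a few lines.
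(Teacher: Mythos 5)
Your proposal is correct and is exactly the argument the paper intends: the corollary is stated without proof precisely because it amounts to substituting $x = n/5$ and $x = (\mathcal{D}+1)/3$ (using the facts $|V_x| = 5x$ and diameter $3x-1$ recorded just before the lemma) into the lemma's bound of $2^x - 1$ steps. Your explicit handling of the dropped ``$-1$'' --- which the paper silently ignores --- is a reasonable and even slightly more careful reading; one could also close that gap exactly by noting that the prefix $P_x$ exhibited in the lemma's proof does not end in a terminal configuration, so at least one further step is needed.
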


Assume the known upper bound on the time complexity of $A$ is not
tight.  We can let $x$ to obtain a network $G_x$ and arrays of with
$T_x+1$ cells where $T_x$ is the known bound on the time complexity of
$A$ on $G_x$. Then, we can add a node $z$ linked to all other nodes so
that the diameter becomes 2. Finally, by considering a configuration
identical to the previous construction, except that $z$ has a maximum
index (i.e., $3x$), we can built the same execution (the presence of
$z$ has no impact due its index). Hence, we can obtain an exponential
lower bound that does not depend on the actual diameter but rather on
the known upper bound.

\section{Conclusion}\label{ccl}

We have proposed a versatile transformer that builds efficient silent
self-stabilizing solutions. Precisely, our transformer achieves a good
trade-off between time and workload since it allows to obtain
fully-polynomial solutions with round complexities asymptotically
linear in $D$ or even better.

Our transformer can be seen as a powerful tool to simplify the design
of asynchronous self-stabilizing algorithms since it reduces the
initial problem to the implementation of an algorithm just working in
synchronous settings.  By the way, all tasks, even
non-self-stabilizing ones, that terminate in synchronous settings can
be made self-stabilizing using our transformer, regardless the model
in which they are written (atomic-state model, Local model, \ldots).

Interestingly, the Local model which was initially devoted to prove
lower bounds, becomes an upper bound provider since we can extensively
use it to give inputs to our transformer that will, based on them,
construct asymptotically time-optimal self-stabilizing solutions.

Another interesting application of our transformer is the weakening of
fairness assumptions of silent self-stabilizing algorithms (e.g.,
asynchronous algorithms assuming a distributed weakly fair or a
synchronous daemon) without compromising efficiency (actually, such
algorithms can be provided as input of the transformer).

The perspectives of this work concern the space overhead and the moves
complexity. The space overhead of our solution depends on the
synchronous execution time of the input algorithm. In the spirit of
the resynchronizer proposed by Awerbuch and Varghese~\cite{AwVa91}, we
may build another space-efficient transformer that would assume more
constraint on input algorithms. Concerning the move complexity, for
many problems, the trivial lower bound in moves for the asynchronous
(silent) self-stabilization is $\Omega(D\times n)$; while we usually
obtain upper bounds in $O(n^3)$ moves with our transformer. Reduce the
gap between those two bounds is another challenging perspective of our
work.


\end{document}